\def\RR{{\mathbb R}}
\def\CC{{\mathbb C}}
\def\NN{{\mathbb N}}
\def\ZZ{{\mathbb Z}}
\def\HH{{\mathbb H}}
\def\SS{{\mathbb S}}
\def\A{{\mathcal A}}
\def\B{{\mathcal B}}
\def\D{{\mathcal D}}
\def\F{{\mathcal F}}
\def\H{{\mathcal H}}
\def\M{{\mathcal M}}
\def\N{{\mathcal N}}
\def\R{{\mathcal R}}
\def\S{{\mathcal S}}
\def\T{{\mathcal T}}
\def\W{{\mathcal W}}
\def\a{\alpha}
\def\b{\beta}
\def\e{\varepsilon}
\def\f{\varphi}
\def\G{\Gamma}
\def\i{\iota}
\def\k{\kappa}
\def\r{\rho}
\def\R{{\mathrm R}}
\def\sc{\Sigma}
\def\t{\tau}
\def\Ad{{\hbox{\rm Ad\,}}}
\def\dim{{\hbox{dim}\,}}
\def\sp{{\rm sp}\,}
\def\1{{\mathbbm 1}}
\def\uone{{\rm U(1)}}
\def\u1net{{\A^{(0)}}}
\def\intervals{\mathcal{I}}
\def\diff{{\rm Diff}}
\def\diffs1{\diff(S^1)}
\def\mob{{\rm M\ddot{o}b}}
\def\mob2{{\rm M\ddot{o}b}^{(2)}}
\def\supp{{\rm supp}}
\def\slim{{{\mathrm{s}\textrm{-}\lim}}}
\def\tin{{\mathrm{in}}}
\def\tout{{\mathrm{out}}}
\def\timesi{{\overset{\tin}\times}}
\def\timeso{{\overset{\tout}\times}}
\def\psl2r{{\rm PSL}(2,\RR)}
\def\sl2r{{\rm SL}(2,\RR)}
\def\su11{{\rm SU}(1,1)}
\def\2dmob{{\overline{\psl2r}\times\overline{\psl2r}}}
\def\<{\langle}
\def\>{\rangle}
\DeclareMathOperator{\tr}{tr}
\DeclareMathOperator{\sign}{sign}
\DeclareMathOperator{\Mob}{M\ddot ob}
\newcommand{\Sc}{S^1} 
\newcommand{\dd}{\,\mathrm{d}} 
\newcommand{\dE}{\,\mathrm{d}E} 
\newcommand{\ee}{\mathrm{e}} 
\newcommand{\ima}{\mathrm{i}} 
\newcommand{\cJ}{\mathcal{J}} 
\newcommand{\cI}{\mathcal{I}} 
\newcommand{\slot}{\,\cdot\,} 
\newcommand{\FerC}{ \mathrm{Fer}_\CC}
\newcommand{\Hbose}[1][]{\H_{\mathrm{\u1net}#1}}
\newcommand{\HferC}{\H_{\mathrm{Fer}_{\CC}}}
\newtheorem{theorem}{Theorem}[section]
\newtheorem{corollary}[theorem]{Corollary}
\newtheorem{proposition}[theorem]{Proposition}
\newtheorem{lemma}[theorem]{Lemma}
\theoremstyle{remark}
\newtheorem{remark}[theorem]{Remark}
\title{Construction of wedge-local nets of observables
through Longo-Witten endomorphisms. II}
\date{}
\author{{\bf Marcel Bischoff} \footnote{Supported in part by the ERC Advanced Grant 227458
OACFT ``Operator Algebras and Conformal Field Theory''.} \, and {\bf Yoh Tanimoto}$^*$\\
e-mail: {\tt bischoff@mat.uniroma2.it, tanimoto@mat.uniroma2.it}\\
Dipartimento di Matematica, Universit\`a di Roma ``Tor
Vergata''\\ Via della Ricerca Scientifica, 1 - I--00133 Roma, Italy.\\
}
\begin{document}
\maketitle
\begin{abstract}
In the first part, we have constructed several families of interacting wedge-local
nets of von Neumann algebras. In particular, there has been discovered a family of
models based on the endomorphisms of the $\uone$-current algebra
$\u1net$ of Longo-Witten.

In this second part, we further investigate endomorphisms and interacting models.
The key ingredient is the free massless fermionic net,
which contains the $\uone$-current net
as the fixed point subnet with respect to the $\uone$ gauge action. Through the
restriction to the subnet, we construct a new family of Longo-Witten endomorphisms
on $\u1net$ and accordingly interacting wedge-local nets in two-dimensional
spacetime. The $\uone$-current net admits
the structure of particle numbers and the S-matrices of the models constructed here
do mix the spaces with different particle numbers of the bosonic Fock space.
\end{abstract}

\begin{center}
{\it 
 Dedicated to Roberto Longo on the occasion of his 60th birthday
}
\end{center}

\tableofcontents

\section{Introduction}\label{introduction}
As already explained in Part I \cite{Tanimoto11-3}, construction of
interacting models of Quantum Field Theory in (physical) four spacetime dimensions has been a long-standing
open problem, and recently the algebraic approach had several progress
\cite{Lechner08, GL07, GL08, BS08, BLS11, Lechner11} and two dimensional cases work particularly well:
these works constructed models of QFT with weaker localization property,
and in some case such models turned out to be
strictly local and fully interacting \cite{Lechner08}. One should recall, however, that
the models in \cite{Lechner08} allow a complete interpretation in terms of particles
(asymptotic completeness) and the particle number is preserved under the scattering operator.
On the other hand, it is known that in four dimensions an interacting model inevitably
involves particle production \cite{Aks65}. In the present paper, we construct a further new
family of interacting wedge-local two-dimensional massless models and find that
their S-matrices mix the spaces with different particle numbers.

In fact, the requirement to involve particle production non-perturbatively is already not simple.
On the one hand, an asymptotically complete model must behave like the free theory
and hence must be compatible with the Fock space structure at asymptotic time.
On the other hand, a particle production process properly means a violation
of the Fock structure at physical time. To overcome this difficulty,
one would have to ``deform'' the free theory in a somewhat involved way
(cf. \cite{Lechner11}) or should rely on a nice trick.
Here we take the second way.
Standard examples and techniques from Conformal Field Theory provide
such a trick.

Conformal Field Theory has been well studied particularly 
on the circle, which can be seen as a chiral part of 1+1 dimensional theory.
There are many important examples of such models,
or nets in operator-algebraic terms, and both field-theoretic and
operator-algebraic techniques allow one to analyze their interrelationships.
Our trick can be briefly summarized as follows:
we consider the free complex fermionic field $\psi$ on the circle;
the field $\psi$ admits a gauge group action by $\uone$, and the fixed point
with respect to this action is known to be isomorphic to the
algebra of the $\uone$-current $J$. Both fields are free fields acting naturally
on the Fock space
(fermionic and bosonic, respectively) but the correspondence between the spaces
is quite involved. The passage to 1+1 dimensional models is simply the tensor product
of two such chiral parts.
Now, we can easily ``deform'' the two-dimensional Dirac field
(built up from the chiral parts $\psi\otimes \1$ and $\1\otimes \psi$) in such a way that it commutes with the
product action of the gauge group $\uone \times \uone$. Hence the deformation
restricts to the algebra of the conserved current $J^\mu=(J^0,J^1) = (J\otimes \1 + \1\otimes J, \1\otimes J - J\otimes\1)$, 
and this deformation is
sufficiently complicated so that the resulting S-matrix does not preserve
the bosonic Fock structure, thanks to the involved fermion-boson correspondence.

In Part I, we have constructed a family of two-dimensional massless
models based on the free current $J^\mu$ or more precisely its net $\u1net\otimes \u1net$
of von Neumann algebras of observables. The main ingredient
was endomorphisms of the algebra $\u1net(\RR_+)$ of observables localized
in the positive half-line $\RR_+$ commuting with the translations. A family of such endomorphisms
has been studied first by Longo and Witten \cite{LW11} in order to construct
Quantum Field Theory with boundary. We used those endomorphisms
to construct two-dimensional models without boundary.
In the present article, we
study the fermi net $\FerC$ generated by the free complex fermionic
field $\psi$ and its Longo-Witten endomorphisms.
We construct endomorphisms of $\FerC$ which commute with the gauge action of $\uone$,
hence restrict to the fixed point subnet of $\u1net$. It turns out that
the restricted endomorphisms cannot be implemented by second quantization operators,
hence are different from the ones considered in \cite{LW11}.
We again knit them up to construct S-matrices and wedge-local nets.

Then the fixed point  with respect to the action of $\uone\times\uone$
is considered. We find that its asymptotic behaviour is the same as
the free (bosonic) current $J^\mu$ and the S-matrix does not preserve
the space of $1+1$ particles ($1$ left and $1$ right moving particle) in the
sense of the Fock space structure.
We stress that the Fock space particle number has no intrinsic meaning as particles,
because we are in a massless case where just a scattering between two waves is considered.
One has to pass to the massive case to talk about particle production.
We will discuss in more detail the implication of this phenomenon at the end of Section \ref{wedge-local-nets}.

This paper is organized as follows.
In Section \ref{preliminaries} we recall the standard notions of
algebraic QFT and the scattering theory of two-dimensional massless
models \cite{Buchholz75, DT11-1}. Some simple observations are given
about subtheories and inner symmetries. Main examples of nets,
the free complex fermionic net $\FerC$ and the $\uone$-current net $\u1net$,
are introduced in Section \ref{examples}. Although it is well-known
\cite{KR87,Kac1998, RehrenCQFT} that the fixed point subnet $\FerC$ with respect
to $\uone$ is $\u1net$ at the field-theoretic level, we prove it
in the framework of algebraic approach. Section \ref{endomorphisms}
is devoted to the construction of new Longo-Witten endomorphisms
on $\u1net$. They are used in Section \ref{wedge-local-nets}
to construct new interacting wedge-local nets.
Outlook and open problems are summarized in Section \ref{conclusion}.

\section{Preliminaries}\label{preliminaries}

\subsection{Fermi nets on \texorpdfstring{$S^1$}{S\^{}1}}\label{fermi-nets}
Here we give a summary of one-dimensional nets, since
they will be our building blocks of the construction of two-dimensional interacting models.
In the first part, we considered {\em local} nets of von Neumann algebras on $S^1$.
Since we need to exploit the free fermionic field in this second part, a
generalized concept of nets is recalled.

We follow the definition in \cite{CKL08} and denote by $\mob2 (\cong \sl2r \cong \su11)$ the double cover of the
M\"obius group $\psl2r$. We denote by $\intervals$ \textbf{the set of 
proper intervals} $I\subset S^1$,
where proper means that $I$ is open and connected and 
neither dense nor the empty set.
A (M\"obius covariant) {\bf fermi net} is an assignment of von Neumann algebras
$\F_0(I)$ on $\H_{\F_0}$
to intervals $I\in\intervals$ on $S^1$ satisfying the following conditions:
\begin{enumerate}
\item[(1)] {\bf Isotony.} If $I_1 \subset I_2$, then $\F_0(I_1) \subset \F_0(I_2)$.
\item[(2)] {\bf M\"obius covariance.} There exists a strongly continuous unitary
representation $U_0$ of the group $\mob2$ such that
for any interval $I\in \intervals$ it holds that
\begin{equation*}
U_0(g)\F_0(I)U_0(g)^* = \F_0(gI), \mbox{ for } g \in \mob2,
\end{equation*}
where the action of $\mob2 \cong \su11$ on $S^1$ is defined through linear fractional transformation.
\item[(3)]{\bf Positivity of energy.} The generator of the one-parameter subgroup of
the lift of rotations in $\mathrm{M\ddot{o}b}$ in the representation $U_0$ is positive.
\item[(4)] {\bf Existence of the vacuum.} There is a unique (up to a phase) unit vector $\Omega_0$ in
$\H_{\F_0}$ which is invariant under the action of $U_0$,
and cyclic for $\bigvee_{I \Subset S^1} \F_0(I)$.
\item[(5)] {\bf $\ZZ_2$-grading.} There is a unitary operator $\G_0$ with $\Gamma_0^2=\1$ such that $\G_0\Omega_0 = \Omega_0$
and $\Ad \G_0(\F_0(I)) = \F_0(I)$.
\item[(6)] {\bf Graded locality.} If $I_1 \cap I_2 = \emptyset$, then $[\F_0(I_1),\Ad Z_0(\F_0(I_2))] = 0$,
where $Z_0 := \frac{\1 - \ima\G_0}{1-\ima}$.
\end{enumerate}
If the grading operator is trivial: $Z_0 = \1$, then
the net $\F_0$ is said to be {\bf local}.

Among the consequences of these conditions are (see \cite{CKL08}):
\begin{enumerate}
\item[(7)] {\bf Reeh-Schlieder property.} The vector $\Omega_0$ is cyclic and separating for each $\F_0(I)$.
\item[(8)] {\bf Additivity.} If $I = \bigcup_i I_i$, then $\F_0(I) = \bigvee_i \F_0(I_i)$.
\item[(9)] {\bf Twisted Haag duality on $S^1$.} For an interval $I\in\intervals$,
it holds that $\F_0(I)' = \Ad Z_0(\F_0(I'))$,
where $I'$ is the interior of the complement of $I$ in $S^1$.
\item[(10)] {\bf Bisognano-Wichmann property.} The modular group $\Delta_{\F_0(\RR_+)}^{\ima t}$ of $\F_0(\RR_+)$
with respect to $\Omega_0$ is equal to $U_0(\delta(-2\pi t))$, where
$S^1$ is identified as the one-point compactification of $\RR$ as below and
$\delta$ is the one-parameter group of dilations.
\item[(11)] {\bf Irreducibility.} It holds that
$\bigvee_{I\in \intervals} \F_0(I) = B(\H_{\F_0})$.
\end{enumerate}
Each algebra $\F_0(I)$ is referred to as a local algebra (even for a fermi net).
Note that if the grading operator $\G_0$ is trivial, then the definition
of fermi net coincides with the one of {\em local} M\"obius-covariant net.
We identify the circle $S^1$ and the compactified real line $\RR \cup \{\infty\}$ through
the Cayley transform
\[
t = -\ima\frac{z-1}{z+1} \Longleftrightarrow z = -\frac{t-\ima}{t+\ima}, \phantom{...}t \in \RR,
\phantom{..}z \in S^1 \subset \CC
\]
and refer to the algebra $\F_0(I)$ for an interval $I\subset \RR$.
The representation $U_0$ of $\mob2 \cong \mathrm{SL}(2,\RR)$ restricts indeed to a projective
unitary representation of $\psl2r$ \cite{CKL08}.
Let $\r$ be the ($4\pi$ periodic)  lift  
of the \textbf{rotations} in $\mathrm{PSU}(1,1)$ 
(acting by $\r(\theta)z=\ee^{\ima \theta} z$) to $\mathrm{M\ddot{o}b}^{(2)}$ and let us 
denote $R_0(\theta)=U_0(\r(\theta))=\ee^{\ima\theta L_0}$.
Under the identification between $S^1$ and $\RR \cup \{\infty\}$, one can talk about
the \textbf{translations} and \textbf{dilations} of $\RR$, which are included in $\mathrm{M\ddot{o}b}$.
In particular, the representation of translations (which we denote by $\t$)
plays a crucial role. Let us denote $T_0(t) = U_0(\t(t))$.  

A {\bf Longo-Witten endomorphism} of a fermi net $\F_0$ is an endomorphism
of the algebra $\F_0(\RR_+)$ implemented by a unitary $V_0$ which commutes with
the translation $T_0(t)$. A family of Longo-Witten endomorphisms has been found for
the $\uone$-current net and the real free fermion net \cite{LW11}.
The examples will be explained later in detail.

Note that a Longo-Witten endomorphism is uniquely implemented up to
scalar.
Indeed, since it commutes with translation, $\Ad V_0$ is an endomorphism
of $\F_0(\RR_+ + t)$ for any $t\in\RR$. If there is another unitary
$W_0$ which satisfies $\Ad W_0(x) = \Ad V_0(x)$ for any $x \in\F_0(\RR_+ + t)$,
$t \in\RR$, then by the irreducibility $W_0^*V_0$ must be scalar.

\subsection{Subnets and the character argument}\label{subnets}
Let $\F_0$ be a fermi (or local) net on $\H_{\F_0}$. Another assignment $\A_0$ of von Neumann algebras
$\{\A_0(I)\}_{I \in \intervals}$ on $\H_{\F_0}$ is called a {\bf subnet} of $\F_0$ if it satisfies
isotony, M\"obius covariance with respect to the same $U_0$ for $\F_0$
and it holds that $\A_0(I) \subset \F_0(I)$ for every interval $I \in \intervals$.
We simply write $\A_0 \subset \F_0$.
In this case, let us denote
$\H_{\A_0} = \overline{\bigvee_{I\in \intervals}\A_0(I)\Omega_0}$.
Then it is immediate to see that $\A_0(I)$ and $U_0$ restrict to $\H_{\A_0}$, and
by this restriction $\A_0|_{\H_{\A_0}}$ becomes a fermi net with the representation
of covariance $U_0|_{\H_{\A_0}}$. This restriction is also said to be a subnet
of $\F_0$ if no confusion arises.

For a fermi net $\F_0$ on $S^1$, a {\bf gauge automorphism} $\a_0$ is a family of
automorphisms $\{\a_{0,I}\}$ of local algebras which satisfies the consistency condition
\[
\a_{0,I_2}|_{\A_0(I_1)} = \a_{0,I_1} \,\,\mbox{ for } I_1 \subset I_2\,.
\]
If a gauge automorphism $\a_0$ preserves
the vacuum state $\<\Omega_0, \cdot\,\Omega_0\>$,
it is said to be an {\bf inner symmetry}. An inner symmetry $\a_0$ can be unitarily
implemented by the formula $V_{\a_0} x\Omega_0 = \a_0(x)\Omega_0$, where $x$ is an element
of some local algebra $\F_0(I)$. We say that a {\em compact} group $G$ acts on the net $\F_0$
when there is automorphisms $\{\a_{0,g}\}_{g\in G}$
which satisfy the composition law
when restricted to local algebras. The {\bf fixed point subnet} with respect to
this action of $G$ is the subnet defined by $\F_0^G(I) := \F_0(I)^G$.

Let $\F_0$ be a fermi net and $\A_0$ be a subnet.
Recall that, for a M\"obius covariant fermi net, the Bisognano-Wichmann
property is automatic. As a consequence, for each interval there
is a conditional expectation $E_{0,I}: \F_0(I) \to \A_0(I)$ which preserves
the vacuum state $\<\Omega_0, \cdot\,\Omega_0\>$ and implemented by
the projection $P_{\A_0}$ onto $\H_{\A_0}$ (see \cite[Theorem IX.4.2]{TakesakiII}).
This projection $P_{\A_0}$ contains much information of $\A_0$.

Consider the case where $\A_0 = \F_0^G$ is the fixed point subnet
with respect to an action $\a_0$ of a compact group $G$ by inner symmetry.
Then we have a unitary representation $V_{\a_0}$ of $G$ on
$\H_{\F_0}$. If we write the set of invariant vectors with respect to
$V_{\a_0}$ by $\H_{\F_0}^G$, it holds that $\H_{\F_0}^G = \H_{\A_0}$.
Indeed, the inclusion $\H_{\A_0} \subset \H_{\F_0}^G$ is obvious.
On the other hand, for $x \in \F_0(I)$, we have
\[
\left(\int_G \a_0(x) \dd g\right)\Omega_0 = \int_G \left(V_{\a_0}(g)x\Omega_0\right)\dd g,
\]
which implies that any vector in $\H_{\F_0}^G$ can be approximated from
$\H_{\A_0}$ by the Reeh-Schlieder property.

For the later use, we put here a simple observation.
\begin{proposition}\label{pr:restricting-endomorphisms}
In the situation above,
if a Longo-Witten endomorphism is implemented by $W_0$ and
$W_0$ commutes with $V_{\a_0}$, then $\Ad W_0$ restricts
to a Longo-Witten endomorphism of the fixed point subnet $\A_0$.
\end{proposition}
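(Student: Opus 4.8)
\section*{Proof proposal}

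The plan is to show three things about the implementing unitary $W_0$: that it leaves the subspace $\H_{\A_0}$ invariant, that its restriction still commutes with the translations of the subnet, and that the restricted adjoint action preserves $\A_0(\RR_+)$. Together these say exactly that $\Ad W_0$ restricts to a Longo-Witten endomorphism of $\A_0$ on $\H_{\A_0}$.

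First I would record that $W_0$ preserves $\H_{\A_0}$. By the discussion above, $\H_{\A_0} = \H_{\F_0}^G$ is the fixed-point space of the representation $V_{\a_0}$, and $W_0$ commutes with every $V_{\a_0}(g)$ by hypothesis; hence for $\xi \in \H_{\A_0}$ we get $V_{\a_0}(g)\,W_0\xi = W_0\,V_{\a_0}(g)\xi = W_0\xi$, so $W_0\xi \in \H_{\F_0}^G = \H_{\A_0}$, and the same holds for $W_0^*$. Thus $W_0$ commutes with the projection $P_{\A_0}$ and restricts to a unitary $W_0|_{\H_{\A_0}}$. Since $\A_0$ is covariant under the same representation $U_0$, the translation $T_0(t) = U_0(\t(t))$ likewise preserves $\H_{\A_0}$, so the identity $W_0 T_0(t) = T_0(t) W_0$ on $\H_{\F_0}$ descends to the restrictions and $W_0|_{\H_{\A_0}}$ commutes with $T_0(t)|_{\H_{\A_0}}$.

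The substantive step is to verify that $\Ad W_0$ maps $\A_0(\RR_+) = \F_0(\RR_+)^G$ into itself. Note that $\RR_+$ is a proper interval, so the inner symmetry acts on $\F_0(\RR_+)$ and, being implemented, satisfies $\a_{0,g}(x) = V_{\a_0}(g)\,x\,V_{\a_0}(g)^*$ there; consequently $x \in \A_0(\RR_+)$ if and only if $V_{\a_0}(g)\,x\,V_{\a_0}(g)^* = x$ for all $g \in G$. Now take such an $x$. On the one hand $\Ad W_0(x) \in \F_0(\RR_+)$, because $\Ad W_0$ is by assumption a Longo-Witten endomorphism of $\F_0$. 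On the other hand, using $[W_0, V_{\a_0}(g)] = 0$,
\[
V_{\a_0}(g)\, W_0 x W_0^*\, V_{\a_0}(g)^* = W_0\, \big( V_{\a_0}(g)\, x\, V_{\a_0}(g)^* \big)\, W_0^* = W_0 x W_0^*,
\]
so $\Ad W_0(x)$ is again $G$-invariant and therefore lies in $\F_0(\RR_+)^G = \A_0(\RR_+)$. As conjugation by a unitary is a normal $*$-homomorphism, $\Ad W_0$ is an endomorphism of $\A_0(\RR_+)$; and since $W_0$, $x$ and $\Ad W_0(x)$ all preserve $\H_{\A_0}$, restricting to $\H_{\A_0}$ realizes this as $\Ad(W_0|_{\H_{\A_0}})$ acting on $\A_0(\RR_+)|_{\H_{\A_0}}$.

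Combining the three observations, $W_0|_{\H_{\A_0}}$ is a unitary on $\H_{\A_0}$ commuting with the translations of the subnet and implementing an endomorphism of $\A_0(\RR_+)|_{\H_{\A_0}}$, which is precisely a Longo-Witten endomorphism of $\A_0$. I do not expect a serious obstacle: the only point demanding care is the bookkeeping between the two Hilbert spaces, namely that $W_0$, $T_0(t)$ and the local algebras $\A_0(I)$ all leave $\H_{\A_0}$ invariant so that the restrictions are simultaneously meaningful. Each of these follows from the standard subnet facts recalled above (that $\A_0(I)$ and $U_0$ restrict to $\H_{\A_0}$) together with the commutation hypothesis $[W_0, V_{\a_0}(g)] = 0$, so the argument is essentially a formal verification rather than a deep one.
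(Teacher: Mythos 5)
Your proof is correct and rests on the same key observation as the paper's: that the commutation of $W_0$ with the gauge unitaries $V_{\a_0}(g)$ forces $\Ad W_0$ to preserve $G$-invariance, hence the fixed-point algebras and the subspace $\H_{\A_0}$. The paper compresses this into a one-line appeal to the conditional expectation $E_0$ implemented by $P_{\A_0}$ (with which $W_0$ commutes), whereas you unwind the same content by directly checking $V_{\a_0}(g)\,W_0 x W_0^*\,V_{\a_0}(g)^* = W_0 x W_0^*$ and the restriction bookkeeping; this is a presentational difference only.
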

\begin{proof}
The unitary $W_0$ commutes with the projection $P_{\A_0}$,
hence also with the conditional expectation $E_0$ onto
$\A_0$.
\end{proof}

Let $\F_0$ be fermi (or local) net on $\H_{\F_0}$.
The Hilbert space $\H_{\F_0}$ is graded by the action 
of the rotation subgroup $R_0(\theta) = \ee^{\ima \theta L_0}$:
\[
    \H_{\F_0} 
    = \CC\Omega_0 \oplus\bigoplus_{r\in\frac 12 \NN} \H_r
    = \bigoplus_{r\in \frac 12 \NN_0} \H_r
\]
with $\H_r=\{\xi\in\H_{\F_0} : R_0(\theta)\xi = \ee^{\ima r \theta}\xi\}$
and the sum only going over $\NN_0$ for a local net. 
The {\bf conformal character} of the net $\F_0$ is 
given as a formal power series of $t=\ee^{-\beta}$:
\[
    \tr_{\H_{\F_0}}(\ee^{-\beta L_0})=\sum_{r\in\frac 12 \NN_0}^\infty \dim \H_r \cdot t^r
    \,.
\]
Let us assume that there is an action of $G=\uone$
by inner symmetry. We denote by $V_0(\theta)$ the implementing unitary. Then
$V_0$ and $U_0$ commute and $\H_{F_0}$ is graded also
by the gauge action $V_0(\theta)=\ee^{\ima \theta Q_0}$:
\[
    \H_{\F_0}
    = \CC\Omega_0 \oplus \bigoplus_{r\in\frac 12 \NN,q\in\ZZ} \H_{r,q}
    = \bigoplus_{q\in\ZZ} \H_{\slot,q},
    \qquad\text{with}\qquad \H_{\slot,q} := \bigoplus_{r\in\frac12\NN_0} \H_{r,q}
\]
and the character is given as a formal power series in $t=\ee^{-\beta}$ and $z=\ee^{-E}$:
\[ 
    \tr_{\H_{\F_0}}(\ee^{-\beta L_0 - EQ_0})=\sum_{r\in \frac12 \NN_0,q\in \ZZ}\dim \H_{r,q} \cdot t^r z^q
    \,.
\]
Recall that it holds that $\H_{\F_0}^G = \H_{\A_0}$. The operator $Q_0$
acts by $0$ on $\H_{\F_0}^G$, hence we can obtain the conformal character
of $\A_0$ just by taking the coefficient of $z^0$ in
$\tr_{\H_{\F_0}}(\ee^{-\beta L_0 - EQ_0})$.

Later in this paper we need to compare the size of two subnets.
Let $\A_0\subset\B_0\subset \F_0$ be an inclusion of three fermi nets.
If the conformal characters of $\A_0$ and $\B_0$ coincide, then this means that
the subspaces $\H_{\A_0}$ and $\H_{\B_0}$ coincide, since
we have already an inclusion $\H_{\A_0} \subset \H_{\B_0}$
and the coefficients of the conformal character are the dimensions
of eigenspaces of $L_0$.
This in turn implies that
two subnets $\A_0$ and $\B_0$ are the same since the conditional expectations
which are implemented by $P_{\A_0}, P_{\B_0}$ are the same.
We will see such an argument in an example.

\subsection{Scattering theory of waves in \texorpdfstring{$\RR^2$}{R\^{}2} (revisited)}\label{scattering-theory}
Here we just collect some basic notions regarding scattering theory of
two-dimensional massless models. As recalled in Part I \cite{Tanimoto11-3},
this theory has been established by Buchholz \cite{Buchholz75} and extended to the
wedge-local case \cite{DT11-1}.
A {\bf Borchers triple} on a Hilbert space $\H$ is a triple $(\M, T, \Omega)$ of
a von Neumann algebra $\M \subset B(\H)$, a unitary representation $T$ of $\RR^2$ on $\H$
and a vector $\Omega \in \H$ such that
\begin{itemize}
    \item $\Ad T(t_0,t_1)(\M) \subset \M$ for $(t_0,t_1) \in W_\R=\{(x_0,x_1)\in\RR^2: x_1>|x_0|\}$, the standard right wedge.
\item The joint spectrum $\sp T$ is contained in the closed forward lightcone
$\overline{V}_+ = \{(p_0,p_1) \in \RR_2: p_0 \ge |p_1|\}$.
\item $\Omega$ is a unique (up to scalar) invariant vector under $T$, and cyclic and
separating for $\M$.
\end{itemize}
We recall that one interprets $\M$ as the algebra assigned to the wedge $W_\R$.
Let $\W$ be the set of wedges, i.e. the set of all $W=gW_\R$ where $g$ is a Poincar\'e
transformation, then we define the {\bf wedge-local net} $\W\ni W\mapsto \M(W)$ associated 
with the Borchers triple $(\M,T,\Omega)$ by 
$\M(W_R+a)=T(a)\M T(a)^\ast$ and $\M(W_R'+a) = T(a)\M' T(a)^\ast$.
With the help of the modular objects one can define a representation of the Poincar\'e group
extending the one of translations $T$ \cite{Borchers92}.
For details we refer to the first part.

Take a Borchers triple $(\M,T,\Omega)$ and $x \in B(\H)$. We  write $x(a) = \Ad T(a)(x)$ for $a \in \RR^2$
and consider observables sent to lightlike directions with parameter $\T$:
\[
x_\pm(h_\T) := \int  h_\T(t) x(t,\pm t)\dd t,
\]
where $h_\T(t) = |\T|^{-\e}h(|\T|^{-\e}(t-\T))$, $0<\e<1$ is a constant,
$\T \in \RR$ and $h$ is a nonnegative symmetric smooth function on $\RR$ such that
$\int h(t)\dd t = 1$.
Then for $x \in \M$, the limits
$\Phi^\tout_+(x) := \underset{\T\to+\infty}\slim \, x_+(h_\T)$ and
$\Phi^\tin_-(x) := \underset{\T\to-\infty}\slim \, x_-(h_\T)$ exist.
Furthermore we set
$\Phi_+^\tin(y') := J_\M\Phi_+^\tout(J_\M y'J_\M)J_\M, \,\, \Phi_-^\tout(y'):=J_\M\Phi_-^\tin(J_\M y'J_\M)J_\M$
for $y' \in \M'$, where $J_\M$ is the modular conjugation of $\M$ with respect to $\Omega$.
The properties of these {\bf asymptotic fields} are summarized in \cite{DT11-1, Tanimoto11-3}.
For example, it holds that $\Phi_+^\tin(y') = \underset{\T\to-\infty}\slim \, y'_+(h_\T)$
and $\Phi_-^\tout(y') = \underset{\T\to+\infty}\slim \, y'_-(h_\T)$.

Let $\H_+$ (respectively by $\H_-$) be the space of the single excitations with positive momentum,
(respectively with negative momentum), i.e. $\H_+ = \{\xi \in \H: T(t,t)\xi = \xi \mbox{ for } t\in\RR\}$
(respectively $\H_- = \{\xi \in \H: T(t,-t)\xi = \xi \mbox{ for } t\in\RR\}$).
For $\xi_+ \in \H_+$, $\xi_- \in \H_-$, there are sequences of local operators $\{x_n\},\{y_n\} \subset \M$ and
$\{x'_n\},\{y'_n\} \subset \M'$ such that 
$\underset{n\to\infty}\slim \,P_+ x_n\Omega =\underset{n\to\infty}\slim \,P_+ x'_n\Omega  = \xi_+$ and
$\underset{n\to\infty}\slim \,P_- y_n\Omega =\underset{n\to\infty}\slim \,P_+ y'_n\Omega = \xi_-$.
We define collision states as in \cite{DT11-1}:
\[
\xi_+\timesi\xi_- = \underset{n\to\infty}\slim \,\Phi^\tin_+(x'_n)\Phi^\tin_-(y_n)\Omega, \quad
\xi_+\timeso\xi_- = \underset{n\to\infty}\slim \,\Phi^\tout_+(x_n)\Phi^\tout_-(y'_n)\Omega \,.
\]
We denote by $\H^\tin$ (respectively $\H^\tout$) the subspace generated by $\xi_+\timesi\xi_-$
(respectively $\xi_+\timeso\xi_-$).
The isometry
\[
S: \H^\tout \ni \xi_+\timeso\xi_- \longmapsto \xi_+\timesi\xi_- \in \H^\tin
\]
is called the {\bf scattering operator} or the {\bf S-matrix} of the Borchers triple $(\M,T,\Omega)$.
We say that the Borchers triple $(\M,T,\Omega)$ is {\bf interacting} if $S$ is not equal to
the identity operator on $\H^\tout$ and {\bf asymptotically complete (with respect to waves)} if
it holds that $\H^\tin=\H^\tout=\H$.

We have studied the general structure of asymptotically complete local and wedge-local nets
(using Borchers triple)
in \cite[Section 3]{Tanimoto11-3}. The point was that for a given (strictly local) $(\M,T,\Omega)$
we can construct the chiral net, and the original object $\M$ can be recovered from
the chiral net and a single operator $S$. Here we rephrase this observation from the point of view
of constructing examples based on chiral components.
See also the general structure of asymptotically complete
{\em strictly local} nets \cite[Section 3]{Tanimoto11-3}
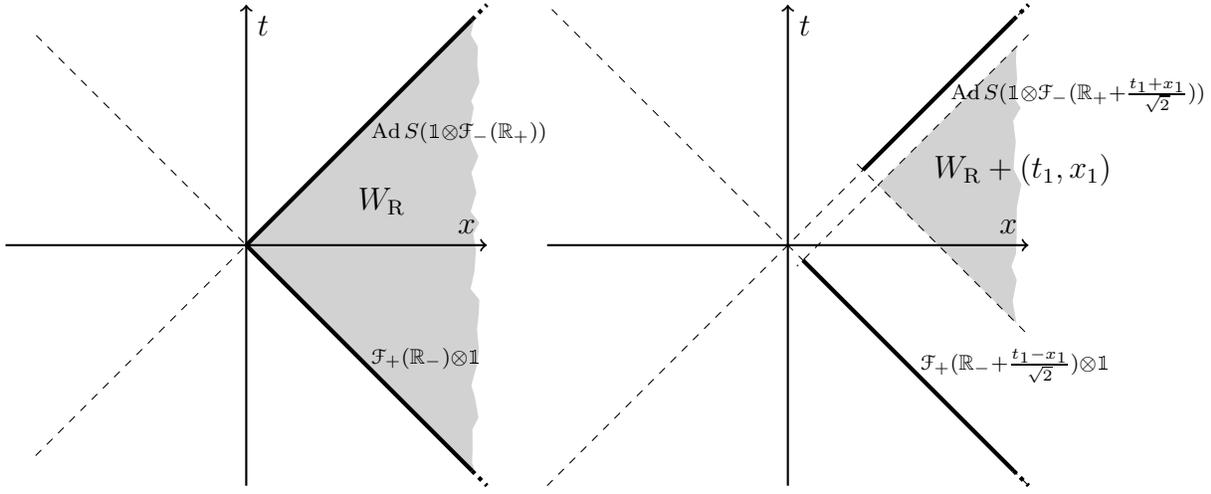
\begin{figure}[ht]
    \centering
    \definecolor{mygrey}{HTML}{d2d2d2}
    \begin{tikzpicture}[scale=0.8]
        \begin{scope}
            \fill[color=mygrey,decoration={random steps, segment length=6pt, amplitude=2pt}]
            (0,0)--(3.8,3.8) decorate{-- (3.8,-3.8)} --(0,0);
            \draw [dashed] (-3.5,-3.5)--(3.5,3.5);
            \draw [dashed] (3.5,-3.5)--(-3.5,3.5);
            \draw [thick, ->] (-4,0)--(4,0) node [above left] {$x$};
            \draw [thick, ->] (0,-4)--(0,4) node [below right] {$t$};
            \node at (2.25,.725) {$W_\mathrm{R}$};
            \draw [ultra thick] (0,0)-- node [right] {$\scriptstyle \mathrm{Ad\,} S(\1\otimes \F_-(\RR_+))$} (3.75,3.75);
            \draw [ultra thick,dotted] (3.75,3.75)--(4,4);
            \draw [ultra thick] (0,0)-- node [ right]  {$\scriptstyle\F_+(\RR_-)\otimes  \1$} (3.75,-3.75);
            \draw [ultra thick,dotted] (3.75,-3.75)--(4,-4);
        \end{scope}
        \begin{scope}[shift={(9,0)}]
            \fill[color=mygrey,decoration={random steps, segment length=6pt, amplitude=2pt}]
            (1.5,1)--(3.8,3.3) decorate{-- (3.8,-1.3)} --(1.5,1);
    	    \draw [dashed] (1.15,1.35)--(1.5,1)--(0.15,-.35);
            \draw [dashed] (4,3.5)--(1.5,1)--(4,-1.5);
            \node at (2.25,1.25) [right] {$W_\mathrm{R} +(t_1,x_1)$};
            \draw [ultra thick] (1.25,1.25)-- node [right] {$\scriptstyle \mathrm{Ad\,} S(\1\otimes \F_-(\RR_++ \frac{t_1+x_1}{\sqrt2}))$} (3.75,3.75);
            \draw [ultra thick,dotted] (3.75,3.75)--(4,4);
            \draw [ultra thick] (0.25,-0.25)--node [right] {$\scriptstyle \F_+(\RR_- +\frac{t_1-x_1}{\sqrt2})\otimes  \1$} (3.75,-3.75);
            \draw [ultra thick,dotted] (3.75,-3.75)--(4,-4);
            \draw [thick, ->] (-4,0)--(4,0) node [above left] {$x$};
            \draw [thick, ->] (0,-4)--(0,4) node [below right] {$t$};
            \draw [dashed] (-4,-4)--(4,4);
            \draw [dashed] (4,-4)--(-4,4);
        \end{scope}
    \end{tikzpicture}
    \caption{On the definition of the wedge-local net}
    \label{fig:wedge-local net}
\end{figure}
\begin{proposition}\label{pr:general-construction}
Let $\F_\pm$ be two fermi nets on $S^1$ defined on $\H_\pm$
and assume that there is a unitary operator $S$ on $\H_+\otimes\H_-$ commuting with
$T_+\otimes T_-$, leaving $\H_+\otimes \Omega_-$ and $\Omega_+\otimes\H_-$
pointwise invariant,
such that $x\otimes \1$ commutes with $\Ad S(x'\otimes \1)$ where
$x \in \F_+(\RR_-)$ and $x'\in \Ad Z_+(\F_+(\RR_+))$, and 
$\Ad S(\1\otimes y)$ commutes with $\1\otimes y'$  where
$y \in \F_-(\RR_+)$ and $y'\in \Ad Z_-(\F_-(\RR_-))$.
Then the triple
\begin{itemize}
\item $\M_S := \{x\otimes \1, \Ad S(\1\otimes y):
 x \in \F_+(\RR_-), y \in \F_-(\RR_+)\}''$,
\item $T(t,x) := T_+(\frac{t-x}{\sqrt 2})\otimes T_-(\frac{t+x}{\sqrt 2})$,
\item $\Omega := \Omega_+\otimes \Omega_-$
\end{itemize}
is an asymptotically complete Borchers triple with
the S-matrix $S$.
\end{proposition}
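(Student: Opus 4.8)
The plan is to verify the three defining properties of a Borchers triple for $(\M_S, T, \Omega)$ in turn, and then to compute the S-matrix and check that it equals the given $S$. I would organize the argument around the two ``standard'' generators $x \otimes \1$ (with $x \in \F_+(\RR_-)$) and $\Ad S(\1 \otimes y)$ (with $y \in \F_-(\RR_+)$), exploiting that $S$ is designed precisely to make these two families commute in the graded sense.

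First I would establish the \emph{wedge endomorphism property} $\Ad T(t_0,t_1)(\M_S) \subset \M_S$ for $(t_0,t_1) \in W_\R$. The key is that in lightcone coordinates a right-wedge translation $(t_0,t_1)$ with $t_1 > |t_0|$ acts as $T_+(\frac{t_0-t_1}{\sqrt 2}) \otimes T_-(\frac{t_0+t_1}{\sqrt 2})$ where the first parameter is negative and the second positive. Since $\Ad T_+(s)$ with $s<0$ maps $\F_+(\RR_-)$ into itself (translation of the half-line $\RR_-$ by a negative amount stays inside $\RR_-$) and similarly $\Ad T_-(s)$ with $s>0$ maps $\F_-(\RR_+)$ into itself, and since $S$ commutes with $T_+ \otimes T_-$, both generating families are carried into $\M_S$; hence so is the generated von Neumann algebra. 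Second, \emph{spectrum condition}: the joint spectrum of $T$ lies in $\overline{V}_+$ because each chiral translation $T_\pm$ has positive generator (positivity of energy, property (3), passes to the translation subgroup), and the forward lightcone in the $(p_0,p_1)$ plane is exactly the image of $\RR_+ \times \RR_+$ under the lightcone-coordinate change. Third, the vacuum $\Omega = \Omega_+ \otimes \Omega_-$ is the unique $T$-invariant vector since $\Omega_\pm$ are the unique $T_\pm$-invariant vectors, and $S$ fixes $\Omega_+\otimes\Omega_-$ because it leaves $\H_+\otimes\Omega_-$ and $\Omega_+\otimes\H_-$ pointwise invariant.

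The step I expect to require the most care is \emph{cyclicity and separating property} of $\Omega$ for $\M_S$, together with the computation of the scattering operator, and these two are intertwined. For cyclicity I would use the Reeh--Schlieder property of each chiral net so that $\F_+(\RR_-)\Omega_+$ is dense in $\H_+$ and $\F_-(\RR_+)\Omega_-$ is dense in $\H_-$; since $S$ leaves $\Omega_+\otimes\H_-$ and $\H_+\otimes\Omega_-$ pointwise fixed, the vectors $x\Omega_+ \otimes \Omega_-$ and $\Omega_+ \otimes y\Omega_-$ together generate a dense subspace, giving cyclicity of $\Omega$ for $\M_S$. The separating property I would obtain by identifying the commutant: the wedge-local structure built from the commuting families forces $\M_S'$ to contain the ``opposite'' generators $\Ad Z_+(\F_+(\RR_+)) \otimes \1$ and $\Ad S(\1 \otimes \Ad Z_-(\F_-(\RR_-)))$ (this is exactly where the graded-commutation hypotheses on $S$ enter), and $\Omega$ is cyclic for this commutant by the same Reeh--Schlieder argument, whence separating for $\M_S$.

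Finally, to identify the S-matrix I would compute the asymptotic fields $\Phi_\pm^{\mathrm{in/out}}$ on the generators. The decisive observation is that the incoming and outgoing asymptotic limits strip off the interaction factor $S$ in a controlled way: for the generator $\Ad S(\1\otimes y)$ one finds that its outgoing asymptotic field reproduces the free right-moving wave, while the conjugating $S$ survives only in the collision-state construction. Plugging the generators into the definitions of $\xi_+ \timesi \xi_-$ and $\xi_+ \timeso \xi_-$ and comparing, the factor relating the two is precisely $S$, so that $S\colon \H^\tout \to \H^\tin$ is the scattering operator and asymptotic completeness $\H^\tin = \H^\tout = \H_+ \otimes \H_-$ follows from the density already used for cyclicity. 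I would lean on the detailed asymptotic-field computations of \cite{DT11-1, Tanimoto11-3} rather than redo them, invoking the general reconstruction result quoted in Section \ref{scattering-theory}.
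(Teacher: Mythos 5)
Your overall strategy is the same as the paper's, but two steps as written do not go through.

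First, cyclicity: the vectors $x\Omega_+\otimes\Omega_-$ and $\Omega_+\otimes y\Omega_-$ do \emph{not} together span a dense subspace of $\H_+\otimes\H_-$; their linear span is contained in $(\H_+\otimes\CC\Omega_-)+(\CC\Omega_+\otimes\H_-)$. You must apply the \emph{product} of the two kinds of generators to the vacuum: since $S^*\Omega=\Omega$ and $S$ fixes $\Omega_+\otimes\H_-$ pointwise, one gets $(x\otimes\1)\,\Ad S(\1\otimes y)\,\Omega = x\Omega_+\otimes y\Omega_-$, and these product vectors are total by the Reeh--Schlieder property. This is a small fix, but the step as stated is false.

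Second, and more seriously, your candidate for (a subalgebra of) the commutant is misidentified: you propose $\Ad Z_+(\F_+(\RR_+))\otimes\1$ and $\Ad S\bigl(\1\otimes\Ad Z_-(\F_-(\RR_-))\bigr)$, i.e.\ you put the $S$-conjugation on the right-moving generators. The hypotheses give commutation of $x\otimes\1$ with $\Ad S(x'\otimes\1)$ and of $\Ad S(\1\otimes y)$ with $\1\otimes y'$; they say nothing about $x\otimes\1$ versus $\Ad S(\1\otimes y'')$, nor about $\Ad S(\1\otimes y)$ versus $x''\otimes\1$, which is exactly what your choice would require. The correct auxiliary algebra is
$\M_S^1:=\{\Ad S(x'\otimes\1),\ \1\otimes y': x'\in\Ad Z_+(\F_+(\RR_+)),\ y'\in\Ad Z_-(\F_-(\RR_-))\}''$,
with $S$ dressing the \emph{left} generators of the commutant, mirroring the fact that it dresses the \emph{right} generators of $\M_S$; all four required commutation relations then follow either from the stated hypotheses or trivially (operators acting on different tensor factors, after conjugating both by $S^*$). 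With this corrected $\M_S^1$, cyclicity of $\Omega$ for $\M_S^1$ follows as above and yields the separating property. The remaining items (wedge endomorphism, spectrum condition, uniqueness of the vacuum, and the identification of the S-matrix via $\Phi^\tout_+(x\otimes\1)=x\otimes\1$, $\Phi^\tin_-(\Ad S(\1\otimes y))=\Ad S(\1\otimes y)$, and their counterparts on $\M_S^1$) are handled as in the paper.
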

\begin{proof}
As in Part I \cite{Tanimoto11-3}, the conditions on $T$ and $\Omega$ are
automatic because they are just tensor products of objects for fermi nets.
Similarly, the condition that $\Ad T(a) \M_S \subset \M_S$ for $a \in W_\R$
is easily seen from the assumption that $T$ commutes with $S$ and the covariance
of fermi nets.

What remains is the cyclicity and separating property of $\Omega$ for $\M_S$.
Cyclicity is immediate because we have
\begin{eqnarray*}
\M_S\Omega &\supset& \{x\otimes \1 \cdot S(\1\otimes y)S^*\cdot \Omega: x \in \F_+(\RR_-), y \in \F_-(\RR_+) \} \\
&=& \{x\otimes y \cdot\Omega: x \in \F_+(\RR_-), y \in \F_-(\RR_+)\}
\end{eqnarray*}
by the assumed property of $S$, and the latter set is total in $\H_+\otimes \H_-$ by
the Reeh-Schlieder property for fermi nets.
As for the separating property, we define:
\[
\M_S^1 := \{\Ad S(x'\otimes \1), \1\otimes y':
 x' \in \Ad Z_+(\F_+(\RR_+)), y' \in \Ad Z_-(\F_-(\RR_-))\}''.
\]
By an analogous proof, one sees that $\Omega$ is cyclic for $\M_S^1$. Furthermore,
$\M_S$ and $\M_S^1$ commute by assumption. Hence $\Omega$ is separating for $\M_S$.
In other words, $(\M_S,T,\Omega)$ is a Borchers-triple.

It is immediate that $\Phi^\tout_+(x\otimes \1) = x\otimes \1$ and
$\Phi^\tin_-(\Ad S(\1\otimes y)) = \Ad S(\1\otimes y)$ (the latter follows since
$S$ commutes with $T$). Similarly, we have $\Phi^\tin_+(\Ad S(x'\otimes \1)) = \Ad S(x'\otimes \1)$ and
$\Phi^\tout_-(\1\otimes y') = \1\otimes y'$. From this, one concludes that
the Borchers triple $(\M_S,T,\Omega)$ is asymptotically complete and its S-matrix is $S$.
\end{proof}

We remark that we see $(\M_S,T,\Omega)$ as a fermi (i.e. twisted local) net 
defined by $\M(W_\R')=\Ad Z_+\otimes Z_- (\M)$ and that the scattering theory of waves \cite{Buchholz75} is considered to be an analogue
of the Haag-Ruelle scattering theory and it is not intended to be applied to fermionic nets.
But we will not pay much attention to this restriction, since our result
is a construction of wedge-local nets with a free massless
bosonic net as the asymptotic net, and fermionic nets appear as
auxiliary objects.

\subsection{Restriction of wedge-local nets}\label{restriction}
We consider a Borchers triple $(\M,T,\Omega)$. It is in some cases interesting to consider
a subalgebra $\N$ of $\M$. Let us denote $\H_\N := \overline{\N\Omega}$.
\begin{proposition}
If the subspace $\H_\N$ is invariant under $T$ and $\Ad T(a)(\N) \subset \N$ for $a\in W_\R$.
Then $(\N|_{\H_\N},T|_{\H_\N},\Omega)$ is a Borchers triple on $\H_\N$.
\end{proposition}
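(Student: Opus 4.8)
The plan is to check the three defining properties of a Borchers triple for $(\N|_{\H_\N},T|_{\H_\N},\Omega)$, verifying that each descends by restriction from the given triple $(\M,T,\Omega)$. First I would make the restriction meaningful. Since $\N$ is a von Neumann algebra it contains $\1$, so $\Omega=\1\Omega\in\overline{\N\Omega}=\H_\N$; and since $\N$ is an algebra we have $\N(\N\Omega)\subseteq\N\Omega$, whence by continuity $\N\H_\N\subseteq\H_\N$. As $\N$ is selfadjoint, invariance of $\H_\N$ forces the orthogonal projection $P$ onto $\H_\N$ to commute with $\N$, i.e.\ $P\in\N'$; the reduction of $\N$ by $P$ is then a von Neumann algebra $\N|_{\H_\N}$ on $\H_\N$. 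By hypothesis $\H_\N$ is $T$-invariant, so each $T(a)$ restricts to a unitary on $\H_\N$ with adjoint $T(-a)|_{\H_\N}$, giving a strongly continuous unitary representation $T|_{\H_\N}$ of $\RR^2$.

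Next I would verify covariance and the spectral condition, both of which are now routine restrictions. For $a\in W_\R$ the hypothesis $\Ad T(a)(\N)\subseteq\N$, combined with the fact that $\N$, $T(a)$ and $T(a)^\ast$ all preserve $\H_\N$, allows the inclusion to be restricted: $\Ad(T(a)|_{\H_\N})(x|_{\H_\N})=(T(a)xT(a)^\ast)|_{\H_\N}\in\N|_{\H_\N}$ for $x\in\N$. Since $T|_{\H_\N}$ is a subrepresentation of $T$, one has $\sp(T|_{\H_\N})\subseteq\sp T\subseteq\overline{V}_+$. For the vacuum vector, cyclicity is immediate from the definition, $\overline{\N|_{\H_\N}\,\Omega}=\overline{\N\Omega}=\H_\N$, and uniqueness of the invariant vector is inherited: any $T|_{\H_\N}$-fixed vector in $\H_\N$ is a $T$-fixed vector in $\H$ and hence a multiple of $\Omega$.

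The one point that genuinely uses the inclusion $\N\subseteq\M$, and which I would single out as the crux, is that $\Omega$ is \emph{separating} for $\N|_{\H_\N}$. Here I would invoke that $\Omega$ is separating for the larger algebra $\M$: given $y=x|_{\H_\N}\in\N|_{\H_\N}$ with $y\Omega=0$, the relation $y\Omega=xP\Omega=x\Omega$ (using $\Omega\in\H_\N$) gives $x\Omega=0$, and since $x\in\N\subseteq\M$ the separating property of $\Omega$ for $\M$ forces $x=0$, hence $y=0$. Thus the separating property descends from $\M$, and together with the previous two paragraphs this shows that $(\N|_{\H_\N},T|_{\H_\N},\Omega)$ is a Borchers triple on $\H_\N$. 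Beyond this structural input from $\M$, the only thing to watch is that $\N$, $T(a)$ and $T(a)^\ast$ all leave $\H_\N$ invariant, so that restriction is a $\ast$-homomorphism on the algebras involved; I expect no further obstacle.
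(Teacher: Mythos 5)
Your proof is correct and takes essentially the same route as the paper's: restrict all the data to $\H_\N$, observe that the spectral condition, cyclicity, uniqueness of the invariant vector and the endomorphic action of $T$ all descend trivially, and deduce the separating property of $\Omega$ for $\N$ from the separating property for the larger algebra $\M$. The paper's argument is simply a much terser version of what you wrote.
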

\begin{proof}
The components $\N, T$ and $\Omega$ naturally restricts to $\H_\N$.
The conditions on $T$ and $\Omega$ are trivial, even restricted to $\H_\N$.
The cyclicity of $\Omega$ is immediate from the definition of $\H_\N$.
Since $\Omega$ is already separating for $\M$, so is also for $\N$.
Endomorphic action of $T$ on $\N$ is in the hypothesis.
\end{proof}

We call a triple $(\N,T,\Omega)$ a {\bf (Borchers) subtriple} of $(\M,T,\Omega)$
if $\N$ is a subalgebra of $\M$, $\H_\N$ is invariant under $T(a)$,
$\Ad T(a)(\N) \subset \N$ for $a \in W_\R$, and $\N$ is invariant under
$\Ad \Delta_\M^{\ima t}$, where $\Delta_\M$ is the modular operator of $\M$ with respect to $\Omega$.

Recall that a Borchers triple $(\M,T,\Omega)$ gives rise to a strictly
local net if $\Omega$ is cyclic for $\M\cap \Ad T(a)(\M)'$ for any $a \in W_\R$.
We call such a triple therefore \textbf{strictly local}.
The following proposition shows that the concept of Borchers subtriple 
corresponds to the one of a local subnet.

\begin{proposition}
If a Borchers triple $(\M,T,\Omega)$ is strictly local, any subtriple $(\N,T,\Omega)$ is again strictly local when restricted on $\overline{\N\Omega}$.
\end{proposition}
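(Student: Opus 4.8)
The plan is to transport the strict locality of $(\M,T,\Omega)$ down to $\N$ by means of the vacuum-preserving conditional expectation onto $\N$. By the definition of a subtriple, $\N$ is globally invariant under the modular automorphism group $\Ad\Delta_\M^{\ima t}$ of $\M$, so Takesaki's theorem \cite[Theorem IX.4.2]{TakesakiII} provides a unique normal conditional expectation $E\colon\M\to\N$ with $\<\Omega,E(x)\Omega\>=\<\Omega,x\Omega\>$, implemented by the projection $P$ onto $\H_\N=\overline{\N\Omega}$ in the sense that $E(x)\Omega=Px\Omega$ for all $x\in\M$. (This identity is immediate: for $n\in\N$ the bimodule and vacuum-preserving properties of $E$ give $\<n\Omega,x\Omega\>=\<n\Omega,E(x)\Omega\>$, and $\N\Omega$ is dense in $\H_\N$.) I would also invoke the previous proposition, which guarantees that $(\N|_{\H_\N},T|_{\H_\N},\Omega)$ is genuinely a Borchers triple, so that its strict locality is a meaningful assertion.

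Write $\M_a:=\Ad T(a)(\M)$ and $\N_a:=\Ad T(a)(\N)$ for $a\in W_\R$; since $\Ad T(a)(\N)\subset\N$, one has the crucial inclusion $\N_a\subset\N\subset\M$. The central step is to show
\[
E(\M\cap\M_a')\subset\N\cap\N_a'.
\]
The inclusion into $\N$ is clear. For the commutation, let $y\in\M\cap\M_a'$ and $c\in\N_a$. Then $c\in\N$, so the $\N$-bimodule property of $E$ gives $E(y)c=E(yc)$ and $cE(y)=E(cy)$; and $c\in\N_a\subset\M_a$ together with $y\in\M_a'$ yields $yc=cy$. Hence $E(y)c=cE(y)$, i.e. $E(y)\in\N_a'$, which proves the claim.

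I would then conclude as follows. By strict locality of $(\M,T,\Omega)$ the set $(\M\cap\M_a')\Omega$ is dense in $\H$; applying $P$ and using $E(x)\Omega=Px\Omega$, the set $E(\M\cap\M_a')\Omega=P(\M\cap\M_a')\Omega$ is dense in $P\H=\H_\N$. By the displayed inclusion it is contained in $(\N\cap\N_a')\Omega$, so $\Omega$ is cyclic for $\N\cap\N_a'$ on $\H_\N$. Finally, since $\N$, $\N_a$ and $T(a)$ all preserve $\H_\N$, restriction to $\H_\N$ preserves commutation, so $(\N\cap\N_a')|_{\H_\N}\subset(\N|_{\H_\N})\cap(\N_a|_{\H_\N})'$; hence $\Omega$ is a fortiori cyclic for the relative commutant computed inside $B(\H_\N)$, which is exactly strict locality of the restricted subtriple.

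The only genuinely delicate point is the central step $E(\M\cap\M_a')\subset\N\cap\N_a'$, and there the whole weight rests on the inclusion $\N_a\subset\N$ (valid precisely because $a\in W_\R$), which places the commuting element $c$ inside $\N$ so that the bimodule identity for $E$ can be invoked. I expect no difficulty in the remaining density bookkeeping, but one should verify carefully that the modular-invariance hypothesis in the definition of subtriple is exactly what licenses the existence of $E$, and that $P\in\N'$, so that the restricted algebra $\N|_{\H_\N}$ and its relative commutant behave as claimed.
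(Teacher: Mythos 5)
Your proposal is correct and follows essentially the same route as the paper: both construct the vacuum-preserving conditional expectation $E\colon\M\to\N$ from modular invariance via Takesaki's theorem, establish the key inclusion $E(\M\cap\Ad T(a)(\M)')\subset\N\cap\Ad T(a)(\N)'$ using the bimodule property together with $\Ad T(a)(\N)\subset\N\cap\Ad T(a)(\M)$, and conclude by pushing the dense set $(\M\cap\Ad T(a)(\M)')\Omega$ through $P_\N$. Your extra remarks on restriction to $\H_\N$ are just careful bookkeeping of what the paper leaves implicit.
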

\begin{proof}
Since $\N$ is invariant under the modular automorphism $\Ad \Delta_\M^{\ima t}$,
there is a conditional expectation $E$ from $\M$ onto $\N$ which preserves
the state $\<\Omega,\slot \Omega\>$ and is implemented by the projection $P_\N$
(see \cite[Theorem IX.4.2]{TakesakiII} for the original reference and
\cite[Appendix A]{Tanimoto11-2} for an application to nets).

We have to show that $\Omega$ is cyclic for the relative commutant $\N\cap \Ad T(a)(\N)'$
on the subspace $\H_\N$. Let us denote $\M_{0,a} := \M\cap\Ad T(a)(\M)'$.
We claim that $E(\M_{0,a})$ is contained in $\N\cap\Ad T(a)(\N)'$.
Indeed, by the definition of $E$, the image of $E$ is contained in $\N$.
Furthermore, if $x \in \M_{0,a}$, $y \in \Ad T(a)(\N) \subset \Ad T(a)(\M)$, then
\[
E(x)y = E(xy) = E(yx) = yE(x),
\]
hence they commute and the image $E(\M_{0,a})$ lies in the relative commutant.
Now we have
\[
\overline{\left(\N\cap\Ad T(a)(\N)'\right) \Omega} \supset \overline{E(\M_{0,a})\Omega} 
\supset \overline{P_\N\M_{0,a}\Omega} = \H_\N
\]
by the assumed strict locality of $(\M,T,\Omega)$.
\end{proof}

Let $(\B, U, \Omega)$ be an {\em asymptotically complete} local Poincar\'e covariant net on $\RR^2$
fulfilling the Bisognano-Wichmann property (see \cite{Tanimoto11-3} for related definitions).
We recall that one can define the (out-) asymptotic algebras
$\B_+\otimes\B_-$ and the scattering operator $S$
which is a unitary operator, and that it is possible to recover the original net by the formula
\[
\B(W_\R) = \{x\otimes\1, \Ad S(\1\otimes y): x\in\B_+(\RR_-), y\in\B_-(\RR_+)\}''.
\]
Note that $(\B(W_\R),U|_{\RR^2},\Omega)$ is an asymptotically complete, strictly local Borchers triple.
Here we exhibit a simple way to construct subtriples.
Let $\A_+, \A_-$ be (M\"obius covariant) subnets of $\B_+, \B_-$, respectively.
If we define
\[
\N = \{x\otimes\1, \Ad S(\1\otimes y): x\in\A_+(\RR_-), y\in\A_-(\RR_+)\}'',
\]
then $(\N,U|_{\RR^2},\Omega)$ is a Borchers subtriple of $(\B(W_\R),U|_{\RR^2},\Omega)$.
Indeed, conditions regarding $\N, U|_{\RR^2},\Omega$ are immediate. As for the invariance of
$\N$ under $\Ad \Delta_\M^{\ima t}$, it suffices to note that $S$ and $\Delta_\M^{\ima t}$ commute
(\cite[Lemma 2.4]{Tanimoto11-3}, cf.\! \cite{Buchholz75}) and that $\A_+(\RR_-)$
and $\A_-(\RR_+)$ are preserved by $\Ad \Delta_\M^{\ima t}$ because of Bisognano-Wichmann
property.

The trouble is, however, that such Borchers triples constructed as above are
not necessarily asymptotically complete in general. Indeed, the out-asymptotic
states span the subspace $\overline{\A_+(\RR_-)\Omega}\otimes\overline{\A_-(\RR_+)\Omega}$.
It is easy to see that this coincides with the full space $\overline{\N\Omega}$
if and only if it is invariant under $S$.

Since a clear-cut scattering theory is so far available only for asymptotically complete
nets, it is worthwhile to give a general condition to assure that subnets are
asymptotically complete. For simplicity, we consider the following situation:
let $\A_0$ be a fermi net on $\H_0$ with an action of a compact group $G$
by inner symmetry implemented by $V_g$.
Suppose that there is a unitary operator $S$ on $\H_0\otimes \H_0$
such that $(\M_S,T,\Omega)$ is a Borchers triple where
\begin{itemize}
\item $\M_S := \{x\otimes \1, \Ad S(\1\otimes y): x\in\A_0(\RR_-), y\in\A_0(\RR_+)\}''$,
\item $T(t,x) := T_0(\frac{t-x}{\sqrt 2})\otimes T_0(\frac{t+x}{\sqrt 2})$,
\item $\Omega := \Omega_0\otimes \Omega_0$,
\end{itemize}
as in Proposition \ref{pr:general-construction}.

\begin{proposition}\label{pr:asymptotic-completeness-fixed-point}
If $S$ commutes with $V_g\otimes V_{g'}$, $g,g'\in G$, then the triple
\begin{itemize}
\item $\N_S := \{x\otimes \1, \Ad S(\1\otimes y): x\in\A_0^G(\RR_-), y\in\A_0^G(\RR_+)\}''$ (restricted to $\overline{\N_S\Omega}$),
\item $T(t,x) := T_0(\frac{t-x}{\sqrt 2})\otimes T_0(\frac{t+x}{\sqrt 2})$,
    (restricted to $\overline{\N_S\Omega}$)
\item $\Omega := \Omega_0\otimes \Omega_0$
\end{itemize}
is an asymptotically complete Borchers triple with asymptotic algebra
$\A_0^G\otimes\A_0^G$ and  scattering operator $S|_{\overline{\N_s\Omega}}$.
\end{proposition}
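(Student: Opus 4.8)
The plan is to reduce the statement to a single application of Proposition \ref{pr:general-construction}, applied now to the subnet $\A_0^G$ and to the restriction of $S$; the substantive point will be to identify $\overline{\N_S\Omega}$ with the full $(G\times G)$-fixed subspace. First I would show that $\N_S$ commutes with every $V_g\otimes V_{g'}$. For $x\in\A_0^G(\RR_-)$ the element $x$ is fixed by $\Ad V_g$, so $x\otimes\1$ commutes with $V_g\otimes V_{g'}$; for $y\in\A_0^G(\RR_+)$, using the hypothesis that $S$ commutes with $V_g\otimes V_{g'}$ together with $\Ad V_{g'}(y)=y$, the generator $\Ad S(\1\otimes y)$ also commutes with $V_g\otimes V_{g'}$. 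Since $\Omega=\Omega_0\otimes\Omega_0$ is $(G\times G)$-invariant, every vector $n\Omega$ with $n\in\N_S$ is then $(G\times G)$-invariant, whence $\overline{\N_S\Omega}\subset\H_0^G\otimes\H_0^G$, where $\H_0^G=\H_{\A_0^G}$ by the fact $\H_{\F_0}^G=\H_{\A_0}$ recalled above.

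For the reverse inclusion I would evaluate on the vacuum. Because $S$ leaves $\Omega_0\otimes\H_0$ pointwise invariant one has $S\Omega=\Omega$ and $\Ad S(\1\otimes y)\Omega=\Omega_0\otimes y\Omega_0$, so that $(x\otimes\1)\Ad S(\1\otimes y)\Omega=x\Omega_0\otimes y\Omega_0$. As $x$ ranges over $\A_0^G(\RR_-)$ and $y$ over $\A_0^G(\RR_+)$, these vectors are total in $\H_{\A_0^G}\otimes\H_{\A_0^G}$ by the Reeh--Schlieder property for the subnet $\A_0^G$. Hence $\overline{\N_S\Omega}=\H_0^G\otimes\H_0^G=\H_{\A_0^G}\otimes\H_{\A_0^G}$. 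Equivalently, the out-asymptotic subspace $\overline{\A_0^G(\RR_-)\Omega}\otimes\overline{\A_0^G(\RR_+)\Omega}$ is invariant under $S$ and fills all of $\overline{\N_S\Omega}$, which is exactly the criterion for asymptotic completeness singled out in the discussion preceding the proposition.

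With the subspace identified, I would check that the fermi net $\A_0^G$ (restricted to $\H_{\A_0^G}$) together with $S_G:=S|_{\H_{\A_0^G}\otimes\H_{\A_0^G}}$ meets the hypotheses of Proposition \ref{pr:general-construction}. Since $S$ commutes with $V_g\otimes V_{g'}$ it preserves $\H_0^G\otimes\H_0^G=(\H_0\otimes\H_0)^{G\times G}$, so $S_G$ is a genuine unitary there. The remaining requirements---commutation with $T_0\otimes T_0$, pointwise invariance of $\H_{\A_0^G}\otimes\Omega_0$ and $\Omega_0\otimes\H_{\A_0^G}$, and the two graded-locality commutation relations---follow by restricting to the common invariant subspace $\H_0^G\otimes\H_0^G$ the corresponding properties of the given pair $(\A_0,S)$, which hold because $(\M_S,T,\Omega)$ is a Borchers triple of the form in Proposition \ref{pr:general-construction}; here one uses $\A_0^G(\RR_\pm)\subset\A_0(\RR_\pm)$ and that $\G_0$ commutes with $V_g$, so that $Z_0$ and the relevant commutant generators also preserve $\H_0^G\otimes\H_0^G$ and restrict to the corresponding objects for $\A_0^G$. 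Proposition \ref{pr:general-construction} then delivers, on $\H_0^G\otimes\H_0^G$, an asymptotically complete Borchers triple with asymptotic algebra $\A_0^G\otimes\A_0^G$ and S-matrix $S_G$; since $\overline{\N_S\Omega}=\H_0^G\otimes\H_0^G$ and the generators of $\N_S$ preserve this subspace and restrict precisely to those generators, this is exactly the restriction of $(\N_S,T,\Omega)$, finishing the proof. I expect the genuine obstacle to be the second step: proving the $S$-invariance of the asymptotic subspace, i.e. the identification $\overline{\N_S\Omega}=\H_{\A_0^G}\otimes\H_{\A_0^G}$, which is precisely what $[S,V_g\otimes V_{g'}]=0$ secures; once that is in hand the rest is routine restriction of an already-established construction.
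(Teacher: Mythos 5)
Your proposal is correct and follows essentially the same route as the paper: the substantive step in both is the identification $\overline{\N_S\Omega}=\H_0^G\otimes\H_0^G=\overline{\A_0^G(\RR_-)\Omega_0}\otimes\overline{\A_0^G(\RR_+)\Omega_0}$, obtained from the averaging/Reeh--Schlieder argument together with the hypothesis $[S,V_g\otimes V_{g'}]=0$. The only (harmless) organizational difference is that you re-derive the Borchers-triple property by applying Proposition \ref{pr:general-construction} to the restricted pair $(\A_0^G, S|_{\H_0^G\otimes\H_0^G})$, whereas the paper simply invokes the subtriple discussion preceding the proposition and then only has to prove asymptotic completeness.
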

\begin{proof}
As remarked above, $(\N_S,T,\Omega)$ is a Borchers triple on $\H_{\N_S}$, hence the only thing to be
proven is asymptotic completeness.
We show that the subspace $\overline{\A_0^G(\RR_-)\otimes\A_0^G(\RR_+)\Omega}$
is invariant under $S$.

We claim that $\overline{\A_0^G(\RR_-)\Omega_0}$ coincides with the subspace $\H_0^G$
of invariant vectors under $\{V_g\}_{g\in G}$. Indeed, for any $x \in \A_0$, the averaging
$\int_g V_g x\Omega_0 \dd g = \left(\int_g \a_g(x)\dd g\right) \Omega_0$ gives a projection onto
$\H_0^G$. By the Reeh-Schlieder property, any vector in $\H_0^G$ can be approximated
by vectors in $\A_0^G(\RR_-)\Omega_0$. The converse inclusion is obvious.

Now it is easy to see that $\overline{\A_0^G(\RR_-)\otimes\A_0^G(\RR_+)\Omega} =
\H_0^G\otimes \H_0^G$. This is the space of invariant vectors under
the action $\{V_g\otimes V_{g'}: g,g'\in G\}$. Since $S$ commutes with
$V_g\otimes V_{g'}$ by assumption, this subspace is preserved under $S$.
Then, as remarked before, $\overline{\N_S\Omega}$ coincides with 
$\overline{\A_0^G(\RR_-)\otimes\A_0^G(\RR_+)\Omega}$ and
we obtain the asymptotic completeness.

The statement on S-matrix is immediate from the definition
and by Proposition \ref{pr:general-construction}.
\end{proof}

\section{Examples of fermi nets}\label{examples}

\subsection{\texorpdfstring{$\uone$}{U\^{}1}-current net \texorpdfstring{$\u1net$}{A(0)}}\label{u1-current}
Let $U_1$ be the irreducible unitary positive-energy representation of $\Mob$ with lowest weight $1$
on a Hilbert space denoted by $\Hbose^1$, which can be identified with the one-particle space of the $\uone$-current.
This has the following concrete realization: consider $C^\infty(\Sc,\RR)$, where we write 
the periodic function $f\in C^\infty(\Sc,\RR)$ as a Fourier series
\begin{equation*}
    f(\theta) = \sum_{k\in\ZZ} \hat f_k \ee^{\ima k \theta}, \quad
    \hat f_k = \int_0^{2\pi} \ee^{-\ima k\theta}f(\theta) \frac{\dd \theta}{2\pi} = \overline{\hat f_{-k}} \, .
\end{equation*} 
We introduce a semi-norm
\begin{equation*}
    \| f \|^2= \sum_{k=1}^\infty k\cdot | \hat f_k|^2
\end{equation*}
and a complex structure, i.e. an isometry $\cJ$ w.r.t. $\|\slot\|$ satisfying 
$\cJ^2=-1$, by
$\cJ:\hat f_k \mapsto -\ima \sign(k) \hat f_k$ and finally we get the Hilbert space 
$\Hbose^1= \overline{C^\infty(S^1,\RR) /\RR}^{\|\slot\|}$ by completion with respect to the norm $\|\slot\|$, 
where $\RR$ is identified with the constant functions. By abuse of notation 
we denote also the image of $f\in C^\infty(S^1,\RR)$ in $\Hbose^1$ by $f$.
The scalar product (linear in the second component) and the sesquilinear form $\omega(\slot,\slot)\equiv\Im \langle\slot,\slot\rangle$ are given by 
\begin{align*}
    \langle f,g\rangle&= \sum_{k=1}^\infty k \hat f_k \hat g_{-k}\,,
    &\omega(f,g) &=\frac{-\ima}2\sum_{k\in\ZZ}k \hat f_{k} \hat g_{-k} 
    = \frac 1 2 
    \int_0^{2\pi} f(\theta)g'(\theta)\frac{\dd\theta}{2\pi} = \frac{1}{4\pi}\int f \dd g
    \,,
\end{align*}
respectively.
The unitary action $U_1$ of $\Mob$ on $\Hbose^1$ is induced by the action on $C^\infty(S^1,\RR)$
$(U_1(g)f)=(g_\ast f)(\theta):=f(g^{-1}(\theta))$.

For $I\in \cI$ we denote by $H(I)$ the closure of the subspace of real functions with 
support in $I$. This space is standard (i.e. $H(I)+\ima H(I)$ is dense in $\Hbose^1$ and
$H(I)\cap \ima H(I) = \{0\}$) and the family $\{H(I)\}_{I\in\cI}$ is a local M\"obius covariant
net of standard subspaces \cite{Longo08, LW11}. 

We explain briefly the bosonic second quantization procedure in general.
Let $\H^1$ be a separable Hilbert space, the \textbf{one-particle space}, 
and $\omega(\slot,\slot)=\Im\langle\slot,\slot\rangle$ the
sesquilinear form.
There are unitaries $W(f)$ for $f\in\H^1$ fulfilling
\[
    W(f)W(g) = \ee^{-\ima \omega(f,g)} W(f+g) = \ee^{-2\ima\omega(f,g)}W(g)W(f)
\]
and acting naturally on the bosonic Fock space $\ee^{\H^1}$ over $\H^1$.
This space is given by $\ee^{\H^1} = \oplus_{n=0}^\infty P_n(\H^1)^{\otimes n}$,
where $P_n$ is the projection 
$P_n(\xi_1\otimes\cdots\otimes \xi_n)=
1/n!\sum_{\sigma} \xi_{\sigma(1)}\otimes\cdots\otimes \xi_{\sigma(n)}$ and the 
sum goes over all permutations. 
The set of coherent vectors
$\ee^h := \oplus_{n=0}^\infty h^{\otimes n}/\sqrt{n!}$ with $h\in\H^1$ is 
total in $\ee^{\H^1}$ and it holds $\langle\ee^f, \ee^h\rangle=\ee^{\langle f,h \rangle}$.
The vacuum is given by $\Omega =\ee^0$ and the action of $W(f)$ is given by 
$W(f)\ee^0=\ee^{-\frac12\|f\|^2}\ee^{f}$, in other words the vacuum 
representation is characterized by 
$\phi(W(f))=\ee^{-\frac 12 \|f\|^2}$, where $\phi(\slot) = \langle\Omega,\slot\Omega\rangle$.
For a real subspace $H\subset \H^1$, we define the von Neumann algebra 
\begin{align*} 
    R(H) = \lbrace W(f):f\in H\rbrace''\subset B(\ee^{\H^1})
    \,.
\end{align*}
Let $U$ be a unitary on the one-particle space $\H^1$ then 
$\ee^U:=\oplus_{n=0}^\infty U^{\otimes n}$ acts on coherent states 
by $\ee^U\ee^h = \ee^{Uh}$ and is therefore a unitary on $\ee^{\H^1}$, the \textbf{second quantization unitary}.

We obtain the \textbf{$\uone$-current net} $\u1net$ on $\Hbose:=\ee^{\Hbose^1}$ with $\Omega_0=\ee^0$ 
by defining $\u1net(I):=R(H(I))$ which is covariant with respect to $U(g):=\ee^{U_1(g)}$.
For $f\in C^\infty(\Sc,\RR)$ we consider a self-adjoint operator $J(f)$ given by the generator 
of the unitary one-parameter group $W(t\cdot f)=\ee^{\ima t\cdot J(f)}$ with $t\in\RR$.
This defines the usual current (field operator)  
smeared with the real test function $f$, which fulfills $J(f)\Omega_0 = f\in\Hbose^1$ and
\begin{align*}
    [J(f),J(g)] &= 2\ima \omega (f,g) = \sum_k  k \hat f_{k}\hat g_{-k} = \frac{\ima}{2\pi}\int f \dd g 
    \,.
\end{align*}
It can be extended to complex test functions via $J(f+\ima g)=J(f)+\ima J(g)$,
and one obtains the usual operator valued ($z$-picture) distribution $J(z)$
with the relations 
\begin{align*}
    J(f) &= \sum_{n\in\ZZ} \hat f_n J_n 
    = \oint_{S^1} f(z) J(z) \frac{\dd z}{2\pi \ima },
    &J(z) = \sum_n J_n z^{-n-1}
    \\ [J_m,J_n]&=m\delta_{m+n,0}
    \,,
\end{align*}
where the modes $J_n=J(e_n)$ with $e_n(\theta)=\ee^{\ima n \theta}$ satisfy
 $J_n\Omega_0 = 0$ for $n\geq 0$.

The space $\Hbose$ is spanned by vectors of the form $\xi = J_{-n_1} \cdots J_{-n_k}\Omega_0$ 
with $0<n_1\leq \cdots\leq n_k$ with ``energy'' $N=\sum_m n_m$, 
i.e. $R(\theta)\xi = \ee^{\ima N \theta} \xi$. Therefore it is graded with respect to 
the rotations 
\begin{align*}
    \Hbose 
    &= \CC\Omega_0 \oplus\bigoplus_{n\in \NN} \Hbose[,n]&
    \Hbose[,n] &= \bigoplus_{k=1}^n \bigoplus_{\substack{0< n_1\leq\cdots\leq n_k \\ n_1+\cdots + n_k=n}} \CC 
    J_{-n_1} \cdots J_{-n_k}\Omega_0
\end{align*}
and $\dim \Hbose[,n]$ is the number of partitions of $n$ elements,
whose generating function $p(t)$ is
the inverse of Euler's function $\phi(t)=\prod_{k=1}^\infty (1-t^k)$ and therefore 
the conformal character of the $\uone$-current net is given by ($t=\ee^{-\beta}$):
$$
    \tr_{\Hbose}(\ee^{-\beta L_0})=\sum_{n=0}^\infty \dim \Hbose[,n] \cdot t^n = \prod_{n\in\NN} (1-t^n)^{-1}
$$
(a conformal character is defined as a formal power series, but
it is often convergent for $|t| < 1$ and here we used the formula
$(1-z)^{-1} = 1 + z + z^2 \cdots$).
It will be convenient to use the real parametrization $x\in\RR \cong S^1\setminus \{-1\}$
of the cut circle and use the conventions
\begin{align*}
    f(s) = \int_\RR \ee^{-\ima sp} \hat f(p)\dd p.
\end{align*}
By writing $f(s)=f_0(\theta(s))$ for $f_0\in C^\infty(S^1,\RR)$
where $\theta(s) = 2\arctan(s)$,
the space $\Hbose^1$ above can be identified with the space $L^2(\RR_+, p\dd p)$  
in which the space $\S(\RR,\RR)$ embeds by restriction of the Fourier transformation 
to $\RR_+$. In other words $\Hbose^1$ can be seen as the closure of the space 
$\S(\RR,\RR)$ with complex structure
$\cJ \hat f(p) = \ima \sign(p) \hat f(p)$ and the scalar product and sesquilinear form given by:
\begin{align*} 
    \langle f,g \rangle &= \int_{\RR_+} \hat f(-p) \hat g(p) p \dd p\,, 
    &\omega(f,g) &= \frac{-\ima}2 \int_{\RR} \hat f(-p)\hat g(p) p \dd p
    = \frac1{4\pi} \int_\RR f(x)g'(x) \dd x \,.
\end{align*}
Using the above identification we denote for $f\in\S(\RR,\RR)$
by $J(f)$ the smeared current
with $J(f)\Omega_0 = f \in\Hbose^1$. In this parametrization commutation relations read:
\begin{align*}
    [J(f),J(g)] &= \frac\ima{2\pi} \int_\RR f(x)g'(x) \dd x 
    =  \int_{\RR} \hat f(-p)\hat g(p) p \dd p\,.
\end{align*}

\subsection{The free complex fermion net \texorpdfstring{$\FerC$}{FerC}}\label{free-fermion}
We construct the net of the free complex fermion on the circle, which can be seen 
as the chiral part of the net of the free massless Dirac (or complex) fermion on two dimensional Minkowski space.
The notations of this section are basically in accordance with \cite{Wassermann98}, but we use 
a different convention of positive-energy, which leads to the conjugated complex structure.
For giving a simple description of the one-particle space, we consider first the Hilbert space $L^2(S^1)$ and the Hardy space $H^2(S^1)$, namely
\[
H^2(S^1) := \left\{f: \mbox{ analytic on the unit disk } D,
\sup_{0\le r < 1}\int_0^{2\pi} |f(r \ee^{\ima \theta})|^2 \dd \theta < \infty\right\}
\,.
\]
Any function in $H^2(S^1)$ has a $L^2$-boundary value and can be considered
as an element of $L^2(S^1)$. In this sense, $H^2(S^1)$ is a subspace of $L^2(S^1)$.
Furthermore, it holds that
\[
H^2(S^1) = \{f\in L^2(S^1): \hat f_n = 0 \mbox{ for } n < 0\},
\]
where $\hat f_n$ is the $n$-th Fourier component of $f$. We denote the orthogonal
projection onto $H^2(S^1)$ by $P$.

The group
\[
\mathrm{SU}(1,1) := \left\{\left(\begin{matrix} \a & \b \\ \overline{\b} & \overline{\a}\end{matrix}\right)
\in M_2(\CC): |\a|^2-|\b|^2 = 1\right\}
\]
acts on the circle $S^1$ by $g\cdot z = \frac{\a z + \b}{\overline{\b}z+\overline{\a}}$ and there is a unitary action of $\mathrm{SU}(1,1)$ on $L^2(S^1)$ by
\[
(U(g)f)(z) := (V_gf)(z)= \frac{1}{-\overline{\b} z+\a}f(g^{-1}\cdot z)
\,.
\]
One sees that the projection $P$ commutes with $V_g$ , since $V_gf$ is still 
an analytic function for $|\a| > |\b|$.

Then one defines a new Hilbert space $\HferC^1 = \overline{PL^2(S^1)} \oplus (\1-P)L^2(S^1)$:
namely, $\HferC^1$ is identical with $L^2(S^1)$ as a real linear space and the multiplication
by $\ima$ is given by $-\ima(2P-\1)$, or in other words, by $-\ima$ on $PL^2(S^1)$ and $\ima$ on $(\1-P)L^2(S^1)$.
Because $P$ and $U(g)$ commute, the action of $\mathrm{SU}(1,1)$ remains unitary on $\HferC^1$.

Then for $I\in\cI$ one takes real Hilbert subspaces $K(I) := L^2(I)$ of $\HferC^1$. This subspaces turn
out to be standard \cite[Theorem (p.\! 497)]{Wassermann98}. If $I_1$ and $I_2$ are disjoint intervals, $K(I_1)$ 
are real orthogonal to $K(I_2)$,
in other words $K(I_1)\subset K(I_2)^\perp$, where 
$K^\perp = \{\xi\in\H: \Re \langle \xi,K\rangle=0\}$. It turns out that 
$I\mapsto K(I)$ is a twisted-local M\"obius covariant net of standard subspaces.

We briefly explain the fermionic second quantization in general.
Let $\H^1$ be a complex Hilbert space and $\H=\Lambda (\H^1)$ the antisymmetric (fermionic) Fock space
obtained by completing the exterior algebra with the inner product.
For $A\in B(\H^1)$ with $\|A\|\leq 1$ we define $\Lambda(A)$ to 
be $A^{\otimes k}$ on $\H^k := \Lambda^k(\H^1) \subset (\H^1)^{\otimes k}$. The space is $\ZZ_2$ graded by $\Gamma:=\Lambda(-\1)$.
We define $Z=\frac{\1-\ima \Gamma}{1-\ima}$ and note that $Z^2=\Gamma$.
For $f\in \H^1$ let $a(f)$ be the bounded operator obtained by continuing the 
exterior multiplication $f\wedge \cdot$. The operators fulfill the complex Clifford relations 
$a(f)^\ast a(g)+a(g)a(f)^\ast =\langle f,g \rangle$ and
$\{a(f),a(g)\} = \{a(f)^\ast, a(g)^\ast\} = 0$ for all $f,g\in\H^1$. 
For a standard subspace $K\subset \H^1$ we define the von Neumann algebra
\[
C(K) = \{ c(f) : f\in K \}'' \subset B(\Lambda\H^1)
\]
where $c(f)=a(f)+a(f)^\ast$, which fulfills the real Clifford relations $c(f)c(g)+c(g)c(f) =2\Re\< f,g\>$.
By $\Omega=1 \in \Lambda^0$ we denote the vacuum which is cyclic and separating for $C(K)$
for every standard subspace $K\subset \H^1$. Further it holds 
\textbf{Haag-Araki duality}, i.e. $C(K^\perp)$
equals $C(K)^\perp:=Z C(K)' Z^\ast$, the 
\textbf{twisted commutant} of $C(K)$. For a unitary $U$ on $\H^1$ 
it holds $\Lambda(U)c(f)\Lambda(U^\ast) = c(Uf)$, which implies that $C$ is covariant with respect to the 
unitaries $U(\H^1)$, i.e. $\Lambda(U)C(K)\Lambda(U)^\ast = C(UK)$.

We note that in the case like the complex fermion the one-particle space is obtained from a Hilbert space $\H^1$ 
(the space of test functions) and a projection $P$ by 
$\H^1_P=P\H^1\oplus \overline{ P^\perp \H^1}$ and one gets a new representation of 
the complex Clifford algebra on $\Lambda(\H^1_P)$ by
$a_P(f)=a(Pf)+a(\overline{P^\perp f})^\ast$
where $a(f)$ is the creation operator. For a standard subspace $K\subset \H^1_P$ which is invariant
under the multiplication of $\ima_{\H^1}$ in $\H^1$,
the von Neumann algebra $C(K)$ on $\Lambda(\H_P^1)$ 
coincides with the von Neumann algebra $\{a_P(f),a_P(f)^\ast : f\in K\}''$.
Indeed, the one inclusion follows from $c(f)=a_P(f)+a_P(f)^\ast$ and the 
other follows from Araki-Haag duality and 
$\{a_P(f),c(g)\}=\langle g,f\rangle_{\H^1} = \Re \langle g,f\rangle_{\H^1_P} -\ima \Re \langle g,\ima_{\H^1} f\rangle_{\H^1_P} =0$
for $f\in K$ and $g\in K^\perp$.
We further note that the space $\Lambda(\H^1_P)$ is
as a real Hilbert space the same 
as $\Lambda(\H^1)$ and can be identified canonically with 
$\Lambda(P\H^1)\otimes \Lambda(\overline{ P^\perp \H^1})$.

We turn to the concrete case where $\H^1 = L^2(S^1)$ and
define the net $\FerC(I) := C(K(I)) = \{a_P(f),a_P(f)^\ast : f \in L^2(I)\}''$ (where here
$a_P(f) := a(\overline {Pf}) + a( P^\perp f)^\ast$) on 
$\HferC=\Lambda(\HferC^1)\cong \Lambda(\overline  {PL^2(S^1)} ) \otimes \Lambda(P^\perp L^2(S^1))$
which is isotonic by definition and fulfills twisted duality, namely by Haag-Araki duality
$\FerC(I')=C(K(I)^\perp)=C(K(I))^\perp = \FerC(I)^\perp$.
In addition, the net $\FerC$ is M\"obius covariant. Indeed, we can take the representation
$\Lambda U(\cdot)$ by promoting the one-particle representation $U$ to the second
quantization operator. It is easy to see that the covariance of this net $\FerC$
follows from the covariance of the net of standard spaces $K$.
The representation $\Lambda U$ has positive energy since so does
the representation $U$, and leaves invariant the vacuum vector $\Omega_0$ of the Fock space.
Summing up, the net $\FerC$ is a fermi net (cf. \cite{Wassermann98}). This net is referred to as
the {\bf free complex fermi net} on $S^1$.
The scalar multiplication by a constant phase $\ee^{-\ima \vartheta}$
in the original structure of the one-particle space
is still a unitary operator in the new structure. Its promotion
by the second quantization $V(\theta)$ implements an action
of $\uone$ on $\FerC$ by inner symmetry.
This will be referred to as the $\uone$-gauge action.

For $r\in \frac12 +\ZZ$ let $\psi_r=a_P(e_{-r-\frac12})$
and $\bar\psi_r=a_P(e_{r-\frac12})^\ast$ where $e_r \in L^2(S^1)$ with $e_r(\theta)=\ee^{\ima \theta r}$. The 
$\psi_r,\bar\psi_r$ are the modes of the free complex fermion, namely
\begin{align*}
    \{\psi_n,\psi_m\} &= \{\bar\psi_m,\bar\psi_n \}=0\\
    \{\bar\psi_n,\psi_m\} &= \delta_{m+n,0}\\ 
    \psi_n^\ast &=\bar\psi_{-n}
\end{align*}
and it holds that $\psi_r \Omega_0= \bar\psi_r\Omega_0=0$ for $r\in \frac12+\NN_0$.
Each of $\psi_r$ or $\bar\psi_r$ has norm $1$ following
from the commutation relation.
We can introduce the usual fields ($f,g\in L^2(S^1)$) and operator valued
distributions in the $z$-picture:
\begin{align*}
    \Psi(f) &= \sum_{r\in \frac12+\ZZ} \hat f_r \Psi_r 
    =\oint_{S^1} f(z) z^{-\frac12} \Psi(z)  \frac{\dd z}{2\pi \ima}\,,
    &\Psi(z) &= \sum_{r\in\frac12+\ZZ} \Psi_r z^{-r-\frac12}\,,
    \\\bar\psi(f)&=\psi(\overline f)^\ast = a_P(e_{-\frac 12} f)^\ast\,,
    & \{\bar \psi(f),\psi(g)\} &= \oint f(z) g(z) \frac{\dd z}{2\pi \ima z}
    \,,
\end{align*}
where $\Psi$ is either $\psi$ or $\bar\psi$. The fields $\psi,\bar\psi$ are covariant, e.g.
$U(g)\psi(f)U(g)^\ast =\psi(f_g)$ with 
$f_g(z) = \frac{1}{|\a-\overline \b z|}f(g^{-1}z)$ for $g=\left(\begin{matrix} \a & \b \\ \overline{\b} & \overline{\a}\end{matrix}\right) \in \mathrm{SU(1,1)}$.

We note that vectors of the form
\[
    \xi=\psi_{-r_1}\cdots\psi_{-r_k}\bar\psi_{-s_1}\cdots \bar\psi_{-s_\ell} \Omega_0 
\]
with $0<r_1<\cdots<r_k$ and $0<s_1<\cdots <s_\ell$
form a basis of $\HferC=\Lambda (\HferC^1)$ and that such a $\xi$ is an 
eigenvector for the rotations, $R(\theta)\xi\equiv \ee^{\ima\theta L_0}\xi=\ee^{\ima N\theta}\xi$
with $N=\sum_{j=1}^k r_j + \sum_{j=1}^\ell s_j$ and of 
the gauge action $V(\theta)\xi \equiv \ee^{\ima \theta Q}\xi = \ee^{\ima (k-\ell)\theta }\xi$.
In each vector of this basis the $r$-th energy level can either be empty, be occupied by $\psi_{-r}$ or $\bar\psi_{-r}$ 
or occupied by both. The contribution of this level to the character  $\tr_{\HferC}(\ee^{-\beta L_0-EQ})$ is then $1$, 
$zt^r$, $z^{-1}t^r$ or $t^{2r}$, respectively, where $t=\ee^{-\beta}$ and $z=\ee^{-E}$. 
By summing over all possibilities one gets that the character of $\FerC$ is given by (cf. \cite{Kac1998,RehrenCQFT}):
\begin{align*}
    \tr_{\HferC}(\ee^{-\beta L_0-EQ})=\tr_{\HferC}(t^{L_0}z^Q) &= \prod_{r\in\NN_0+\frac12} (1+zt^r+z^{-1}t^r + t^{2r})
                    \\&= \prod_{r\in\NN_0+\frac12} (1+zt^r)(1+z^{-1}t^r)
                    \\&= p(t)\sum_{q\in\ZZ}z^q t^{\frac{q^2}2} \,,
\end{align*}
where the last equality follows directly from the Jacobi triple product formula
(see \cite[Theorem 14.6]{Apostol76})
\begin{align*}
    \prod_{r\in\NN}(1+zw^{2r-1})(1+z^{-1}w^{2r-1})(1-w^{2r})=\sum_{q\in\ZZ}z^qw^q
\end{align*}
by setting $2r-1=2n$ and $t=w^2$. In particular, for the local net
$\FerC^{\uone}$ the character is given by
$\tr_{\HferC^{\uone}}(\ee^{-\beta L_0}) = p(t)$, since it is the
fixed point with respect to the $\uone$-gauge action
and the conformal character is the coefficient of $z^0$.

\subsection{\texorpdfstring{$\uone$}{U(1)}-current net as a subnet of \texorpdfstring{$\FerC$}{FerC}}\label{u1-current-subnet}
In this section we use the well-known fact that the Wick product $:\!\bar\psi\psi\!:$ of the complex fermion $\psi$
equals the $\uone$-current and give an analogue of the boson-fermion correspondence (see e.g. \cite[5.2]{Kac1998}) in the operator algebraic setting.
Let us denote by $\D_0$ the subspace of $\Lambda(\H_P^1)$ of vectors with finite energy:
\[
\D_0 := \mathrm{span}\left\{\psi_{-r_1}\cdots \psi_{-r_k}\bar\psi_{-s_1}\cdots \bar\psi_{-s_l}\Omega_0:
 k,l \in \NN_0, r_i,s_j \in \NN+\frac 12\right\}
 \,.
\]
Then we define the unbounded operators on the domain $\D_0$:
\begin{eqnarray*}
    J_n =\sum_{r+s=n} {:\!\bar\psi_r \psi_s\!:}
&=& \sum_{r<0} \bar\psi_{r} \psi_{n-r} - \sum_{r>0} \psi_{n-r}\bar\psi_{r} \\
&=&  \sum_r\left( \bar\psi_{r} \psi_{n-r} - \<\Omega_0,\bar\psi_{r} \psi_{n-r}\Omega_0\>  \right)
\end{eqnarray*}
with $r,s\in \frac 12+\ZZ$. Note that any vector in $\D_0$ is annihilated by $\psi_r$ for
sufficiently large $r$, thus the action of $J_n$ on such a vector can be defined and
remains in $\D_0$. In particular, we have $J_n\Omega_0=0$ for $n\in\NN_0$.
\begin{lemma}\label{lm:commutation-field-current}
On $\D_0$ it holds that
    \begin{enumerate}
        \item 
            $[J_n,\psi_k] =-\psi_{n+k}$ and $[J_n,\bar\psi_k]=\bar\psi_{n+k}$
        \item $[J_m,J_n]= m\delta_{m+n,0}$
    \end{enumerate}
\end{lemma}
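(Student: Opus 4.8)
The plan is to verify both commutation relations directly from the fermionic anticommutation relations, working throughout on the dense domain $\D_0$ where everything is well-defined. The two relations are naturally proven in the stated order, since the current-current commutator in part (2) will follow most cleanly by using part (1) together with the explicit mode expansion of $J_n$.

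For part (1), I would start from the normal-ordered definition $J_n = \sum_{r+s=n} {:\!\bar\psi_r \psi_s\!:}$ and compute $[J_n,\psi_k]$ term by term. The key input is that the commutator of a fermionic bilinear with a single fermion reduces to anticommutators via the identity $[\bar\psi_r\psi_s,\psi_k] = \bar\psi_r\{\psi_s,\psi_k\} - \{\bar\psi_r,\psi_k\}\psi_s$. Since $\{\psi_s,\psi_k\}=0$ and $\{\bar\psi_r,\psi_k\}=\delta_{r+k,0}$, only the second term survives, collapsing the sum over $r,s$ with $r+s=n$ to the single term $r=-k$, $s=n+k$, yielding $[J_n,\psi_k] = -\psi_{n+k}$. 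The normal-ordering subtraction is a $c$-number and drops out of the commutator, so I need only check that this manipulation is valid on $\D_0$, which holds because any vector there is annihilated by $\psi_r$ for large $r$, making all sums finite. The computation for $\bar\psi_k$ is entirely analogous, using $\{\psi_s,\bar\psi_k\}=\delta_{s+k,0}$, and produces the opposite sign $+\bar\psi_{n+k}$.

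For part (2), I would exploit the fact that $J_n$ is (up to the subtracted constant) a bilinear in the fermions, so that the commutator $[J_m,J_n]$ can be evaluated by letting $J_m$ act on each fermionic factor inside $J_n = \sum_{r+s=n}{:\!\bar\psi_r\psi_s\!:}$ via the relations just established. Concretely, $[J_m, \bar\psi_r\psi_s] = [J_m,\bar\psi_r]\psi_s + \bar\psi_r[J_m,\psi_s] = \bar\psi_{m+r}\psi_s - \bar\psi_r\psi_{m+s}$. Summing over $r+s=n$ produces a telescoping difference of two normal-ordered-like sums indexed by $r+s = m+n$; after re-indexing, the two sums nearly cancel, and the residual discrepancy comes precisely from the difference in normal-ordering prescriptions (equivalently, from the order of the two infinite sums $\sum_{r<0}$ versus $\sum_{r>0}$ in the definition of $J_n$). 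That residual is a $c$-number, and evaluating it by taking the vacuum expectation value $\<\Omega_0,[J_m,J_n]\Omega_0\>$ pins it down to $m\,\delta_{m+n,0}$.

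The main obstacle is the careful bookkeeping of the normal ordering in part (2): the naive telescoping of the two sums would suggest $[J_m,J_n]=0$, and the correct central term $m\,\delta_{m+n,0}$ arises only from the subtracted vacuum expectation values that define the normal-ordered product. I would therefore track the $c$-number contributions explicitly rather than discard them, most safely by testing the identity on the vacuum and on low-lying states $\psi_{-r}\bar\psi_{-s}\Omega_0$, or equivalently by noting that the anomaly is forced by the requirement $J_n\Omega_0 = 0$ for $n \geq 0$ together with the already-established action of the $J_n$ on the fermion modes. Everything else is a routine but finite manipulation valid on $\D_0$.
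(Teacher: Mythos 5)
Your proposal is correct and follows essentially the same route as the paper: part (1) is the identical reduction of $[\bar\psi_r\psi_s,\psi_k]$ to anticommutators, and part (2) likewise uses part (1) to reduce $[J_m,J_n]$ to a central ($c$-number) term that is then evaluated as a vacuum expectation value. The only cosmetic difference is how centrality is justified --- the paper invokes the Jacobi identity to show $[J_m,J_n]$ commutes with all fermion modes, while you observe that the residual after telescoping consists of reordering anticommutators, which are scalars; both are valid and lead to the same computation.
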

\begin{proof}
Using $ [ab,c]=a\{b,c\}-\{a,c\}b$, one obtains
    $[\bar\psi_r\psi_n,\psi_k] = -\delta_{r+k,0}\psi_n$ and $[\psi_n\bar\psi_r,\psi_k] = \delta_{r+k,0}\psi_n$ from
    which directly follows 
$[J_n,\psi_k]= \sum_{r<0} [\bar\psi_{r} \psi_{n-r},\psi_k] - \sum_{r>0} [\psi_{n-m}\bar\psi_{r},\psi_k]
    =-\psi_{n+k}$. Analogously one shows $[J_n,\bar\psi_k]=\bar\psi_{n+k}$.

From the Jacobi identity,
it follows immediately that $[J_n ,J_m]$ commutes with all $\psi_k$ and $\bar\psi_k$
and hence $[J_n,J_m]$ is a multiple of the identity,
therefore $[J_n,J_m]= \<\Omega_0,[J_n,J_m]\Omega_0\>\1$. It is
\begin{align*}
    [J_n,J_p] 
    &= \sum_{r<0} [J_n,\bar\psi_{r} \psi_{p-r}] - \sum_{r>0}[J_n ,\psi_{p-r}\bar\psi_{r}]
    \\ &=- \sum_{r<0} \left(\bar\psi_{r} \psi_{p-r+n}- \bar\psi_{r+n} \psi_{p-r}    \right)
    -\sum_{r>0}\left(\psi_{p-r}\bar\psi_{r+n}-\psi_{p-r+n}\bar\psi_{r}  \right)
\end{align*}
and in the case $p\neq -n$ we get $\<\Omega_0,[J_n,J_{p}]\Omega_0\>=0$, and otherwise
\begin{align*}
    \<\Omega_0,[J_n,J_{-n}]\Omega_0\> &=
        \begin{cases}
            \sum_{r<0}\<\Omega_0,\bar\psi_{r+n} \psi_{-r-n}\Omega_0\>= 
                \sum_{r=\frac12}^{n-\frac12}\<\Omega_0,\{\bar\psi_{r},\psi_{-r}\}\Omega_0\>  &n>0\\
            -\sum_{r>0}\<\Omega_0,\psi_{-r-n}\bar\psi_{r+n} \Omega_0\>= 
                -\sum_{r=\frac12}^{-n-\frac12}\<\Omega_0,\{\psi_{r},\bar\psi_{-r}\}\Omega_0\>  &n<0
        \end{cases}
        \\ &=n \,,
\end{align*}
which completes the proof.
\end{proof}

Let $L_0$ be the generator of the rotation: $R(\theta) = \ee^{\ima\theta L_0}$.
From its action (see the end of Section \ref{free-fermion}), one verifies that
$\D_0$ is a core for $L_0$.
\begin{lemma}[Linear energy bounds]\label{lm:energy-bound} It holds that $[L_0,J_n] = -n J_n$ on $\D_0$.
For a trigonometric polynomial $f = \sum_n \hat{f}_ne_n$ where the sum is finite and 
$\xi\in \D_0$, we have
\begin{align*}
    \|J(f)\xi\| &\leq c_f\|(L_0+1)\xi\|
    \\ \|[L_0,J(f)]\xi\| &\leq c_{\partial_\theta f}\|(L_0+1)\xi\|,
\end{align*}
where $c_f$ depends only on $f$.
\end{lemma}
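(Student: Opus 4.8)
The plan is to treat the three assertions in turn: derive the commutation relation first, then the energy bound, and finally reduce the commutator bound to the energy bound.

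For $[L_0,J_n]=-nJ_n$, I would start from the grading of $\HferC$ by the rotations recorded at the end of Section~\ref{free-fermion}: since a basis vector $\psi_{-r_1}\cdots\bar\psi_{-s_\ell}\Omega_0$ has $L_0$-eigenvalue $\sum r_j+\sum s_j$, each mode shifts the energy by minus its index, i.e.\ $[L_0,\psi_k]=-k\psi_k$ and $[L_0,\bar\psi_k]=-k\bar\psi_k$ on $\D_0$. Applying the Leibniz rule to $\bar\psi_r\psi_s$ gives $[L_0,\bar\psi_r\psi_s]=-(r+s)\bar\psi_r\psi_s$, and since normal ordering only subtracts a scalar (which commutes with $L_0$) this persists for $:\!\bar\psi_r\psi_s\!:$. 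Summing over $r+s=n$ — a sum with only finitely many nonzero terms on any vector of $\D_0$ — yields $[L_0,J_n]=-nJ_n$.

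The core of the lemma is the bound on $\|J(f)\xi\|$; by linearity and the triangle inequality it suffices to bound $\|J_n\xi\|$ for a single mode on an energy eigenvector $\xi\in\H_N$. Here I would exploit the fermionic realization $J_n=\sum_{r<0}\bar\psi_r\psi_{n-r}-\sum_{r>0}\psi_{n-r}\bar\psi_r$ together with the fact (recorded in Section~\ref{free-fermion}) that every $\psi_r,\bar\psi_r$ has operator norm $1$. The key observation is that on $\xi\in\H_N$ only finitely many summands act nontrivially: a term can be nonzero only if it annihilates one of the occupied modes of $\xi$ or creates into the finitely many modes of energy below $|n|$, so the number of contributing summands is at most $C(N+|n|+1)$. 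The triangle inequality then gives $\|J_n\xi\|\le C(N+|n|+1)\|\xi\|$, and since distinct energy eigenspaces are orthogonal and $J_n$ maps $\H_N$ into $\H_{N-n}$ injectively in $N$, this upgrades to $\|J_n\xi\|\le C(|n|+1)\|(L_0+1)\xi\|$ for all $\xi\in\D_0$. Summing over the finitely many modes of the trigonometric polynomial $f$ gives the first estimate with $c_f=C\sum_n|\hat f_n|(|n|+1)$.

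Finally, the commutator bound is a corollary of the first estimate. Using $[L_0,J_n]=-nJ_n$ from the first step and $\widehat{(\partial_\theta f)}_n=\ima n\hat f_n$, I would observe that on $\D_0$ one has $[L_0,J(f)]=-\sum_n n\hat f_n J_n=\ima\,J(\partial_\theta f)$, so that $\|[L_0,J(f)]\xi\|=\|J(\partial_\theta f)\xi\|\le c_{\partial_\theta f}\|(L_0+1)\xi\|$ by the already-established bound applied to the trigonometric polynomial $\partial_\theta f$. I expect the main obstacle to be the second step, namely making the counting of nonvanishing summands of $J_n$ precise while correctly accounting for the normal ordering and for the creation terms that appear when $n<0$; the domain bookkeeping on the core $\D_0$ and the passage from energy eigenvectors to general $\xi$ are routine. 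A conceptually cleaner but technically heavier alternative for this step is the Sugawara identity $L_0=\tfrac12 J_0^2+\sum_{n\ge1}J_{-n}J_n$ on $\D_0$ (verified by checking that the difference commutes with every $J_k$ and annihilates the lowest-weight vector of each charge sector), which yields the sharper bound $\|J_n\xi\|\le\sqrt{|n|+1}\,\|(L_0+1)^{1/2}\xi\|$; either route suffices for the stated linear bound.
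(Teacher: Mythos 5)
Your proposal is correct and follows essentially the same route as the paper: the commutation relation from the energy grading, the bound $\|J_n\xi\|\le (2N+|n|+2)\|\xi\|$ on eigenvectors by counting the finitely many nonvanishing terms of norm-one fermion bilinears, the upgrade to general $\xi\in\D_0$ via orthogonality of the eigenspaces and of their images under $J_n$, and the reduction of the commutator bound to $[L_0,J(f)]=J(\ima\partial_\theta f)$. The Sugawara alternative you mention is not needed and is not what the paper does, but the main argument matches.
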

\begin{proof}
For the commutation relation,
it is enough to choose an energy eigenvector $\xi\in\D_0$, i.e. $L_0\xi =N\xi$.
It is
$J_n L_0\xi = N J_n\xi$ and
 \[
L_0 J_n \xi = L_0\left(\sum_{r< 0} \bar\psi_r\psi_{n-r}\xi - \sum_{r>0} \psi_{n-r}\bar\psi_{r}\xi\right)
= (N-n)J_n \xi,
\]
and the first statement follows.

We have seen that $\psi_r$ and $\bar\psi_r$ have norm $1$ in Section \ref{free-fermion}.
First we claim that $\|J_n\xi\| \le \|(2(L_0+1)+|n|)\xi\|$.
Let $\xi$ be again an eigenvector of $L_0$, i.e. $L_0\xi = N\xi$.
From the defining sum of $J_n$, one sees that
only $2N+|n|+2$ terms contribute to $J_n\xi$.
Hence we have $\|J_n\xi\|\leq (2N+|n|+2)\|\xi \|  = \| 2(L_0+1)+|n|\xi\|$.
If the inequality holds for eigenvectors, then
for $\{\xi_r\}$ with different eigenvalues, we have
$\xi_r\perp\xi_s$ and $J_n\xi_r\perp J_n\xi_s$, and hence
\begin{eqnarray*}
\left\|J_n\sum_r \xi_r\right\|^2 &=& \sum_r\|J_n\xi_r\|^2 \\
&\le & \sum_r\|(2(L_0+1)+|n|)\xi_r\|^2 \\
&=& \left\|(2(L_0+1)+|n|)\sum_r\xi_r\right\|^2
\end{eqnarray*}
and the general case follows.

For a smeared field, we have
\[
\|J(f)\xi\| = \left\|\sum_n \hat{f}_n J_n\xi\right\| \le
2\tilde c_f\|(L_0+1)\xi\|+\tilde c_{\partial_\theta f}\|\xi\| \le
(2\tilde c_f+\tilde c_{\partial_\theta f})\|(L_0+1)\xi\|,
\]
where
$\tilde c_f = \sum_n |\hat{f}_n|$. By defining
$c_f = 2\tilde c_f + \tilde c_{\partial_\theta f}$, we obtain
the first inequality of the statement.
The rest follows by noting that $[L_0,J(f)] = J(\ima \partial_\theta f)$.
\end{proof}

For a smooth function $f = \sum_{n\in\ZZ} \hat f_n e_n \in C^\infty(S^1)$,
its Fourier coefficients $\hat{f}_n$ are strongly decreasing and, in particular,
it is summable: $\sum_n |\hat{f}_n| = \tilde c_f < \infty$.
Hence we can naturally extend the definition of the smeared current to smooth functions
using the above estimate by
\[
J(f)  = \sum_{n\in\ZZ} f_{n} J_n = \sum_{r,s\in\frac12+\ZZ} f_{r+s} :\!\psi_r\bar \psi_s\!:,
\]
and the same inequality in Lemma \ref{lm:energy-bound} holds.
The operator is closable since we have $J(f) \subset J(\overline{f})^*$
and we still denote the closure by $J(f)$.
We note that from the above definition it follows that $J(f)$ is obtained by a limit $\sum_n:\!\psi(h_n)\bar\psi(k_n)\!:$ 
with suitable functions such that $\sum_n h_n(\theta) k_n(\vartheta) \to 2\pi f(\theta) \delta(\theta-\vartheta)$.
This implies covariance of the ``field'', i.e. $U(g)J(f)U(g)^\ast=J(f\circ g^{-1})$.

Recall that $\|\psi_r\| = 1$, hence the smeared field is still bounded:
$\|\psi(g)\| \le \tilde c_g$.
We claim that, for $f,g\in C^\infty(S^1)$ and $\xi\in\D_0$,
$\psi(g)\xi$ is in the domain of $J(f)$. Indeed, for a trigonometric polynomial $g$,
we have the estimate
\begin{eqnarray*}
\|J(f)\psi(g)\xi\| &\le& c_f\|(L_0+1)\psi(g)\xi\| \\
&\le& c_f(\tilde c_g \|\xi\| + \|[L_0,\psi(g)]\xi+\psi(g)L_0\xi\|) \\
&\le& c_f(\tilde c_g(\|\xi\|+\|L_0\xi\|)+\tilde c_{\partial_\theta g}\|\xi\|)\,.
\end{eqnarray*}
Then if we have a sequence of trigonometric polynomial $g_n$ converging to
a smooth function $g \in C^\infty(S^1)$, the sequence $\{J(f)\psi(g_n)\xi\}$ is
also converging.

\begin{lemma}\label{lm:commutation-smeared}
For $\xi,\eta \in \D_0$, it holds that
\begin{align*}
[J(f),\psi(g)] \xi &= - \psi(f\cdot g)\xi \\
[J(f),\bar\psi(g)]\xi &= \bar\psi(f\cdot g)\xi \\
\<J(\bar f)\xi,J(g)\eta\> &= \<J(\bar g)\xi,J(f)\eta\> + 2\ima\omega(f,g)\<\xi,\eta\>.
\end{align*}
\end{lemma}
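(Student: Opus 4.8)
The plan is to prove all three identities first for trigonometric polynomials $f,g$, where every sum is finite and every manipulation is justified termwise, and then to pass to smooth $f,g$ by a continuity argument based on the linear energy bounds of Lemma \ref{lm:energy-bound}. For trigonometric polynomials the first two relations reduce to the mode computation $[J(f),\psi(g)] = \sum_{n,k}\hat f_n\hat g_k[J_n,\psi_k] = -\sum_{n,k}\hat f_n\hat g_k\,\psi_{n+k}$, using $[J_n,\psi_k]=-\psi_{n+k}$ from Lemma \ref{lm:commutation-field-current}. Recognizing the convolution $\widehat{fg}_m = \sum_{n+k=m}\hat f_n\hat g_k$ as the Fourier coefficients of the product, this sum is exactly $-\psi(fg)$; the $\bar\psi$ case is identical up to the sign coming from $[J_n,\bar\psi_k]=\bar\psi_{n+k}$.

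For the third identity I would, again for trigonometric polynomials, move $J(\bar f)$ onto the other factor using $J_n^\ast = J_{-n}$ and $\widehat{\bar f}_n = \overline{\hat f_{-n}}$, which give the formal-adjoint relation $\<J(\bar f)\xi,\zeta\> = \<\xi,J(f)\zeta\>$ for $\xi,\zeta\in\D_0$ (this is the closability remark $J(f)\subset J(\bar f)^\ast$). Applied with $\zeta = J(g)\eta\in\D_0$ this yields $\<J(\bar f)\xi,J(g)\eta\> = \<\xi,J(f)J(g)\eta\>$, and symmetrically $\<J(\bar g)\xi,J(f)\eta\> = \<\xi,J(g)J(f)\eta\>$. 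Subtracting, the difference is $\<\xi,[J(f),J(g)]\eta\>$, and Lemma \ref{lm:commutation-field-current}(2) collapses the commutator to the scalar $\sum_m m\,\hat f_m\hat g_{-m}\,\1$. By the definition of $\omega$ in Section \ref{u1-current} this scalar is precisely $2\ima\,\omega(f,g)$, which gives the stated identity for trigonometric polynomials.

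It then remains to remove the polynomial restriction. Here I would approximate a smooth $f$ by its Fourier partial sums $f_m$, for which $c_{f-f_m}\to 0$ and $\tilde c_{f-f_m}\to 0$ since the Fourier coefficients of a smooth function are rapidly decreasing. For the first two (vector-valued) identities I would pass to the limit in $[J(f_m),\psi(g_m)]\xi = -\psi(f_mg_m)\xi$ term by term: $\psi(g_m)\xi\to\psi(g)\xi$ and $\psi(f_mg_m)\xi\to\psi(fg)\xi$ follow from $\|\psi(h)\|\le\tilde c_h$ together with $\tilde c_{fg}\le\tilde c_f\tilde c_g$, while $J(f_m)\psi(g_m)\xi\to J(f)\psi(g)\xi$ and $\psi(g_m)J(f_m)\xi\to\psi(g)J(f)\xi$ follow from the energy estimate $\|J(h)\zeta\|\le c_h\|(L_0+1)\zeta\|$ of Lemma \ref{lm:energy-bound} and the bound on $\|J(f)\psi(g)\xi\|$ established just before the statement, which in particular shows that $\psi(g)\xi$ lies in the domain of $J(f)$. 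For the third (scalar-valued) identity the extension is even cleaner: the sesquilinear form $(f,g)\mapsto\<J(\bar f)\xi,J(g)\eta\>$ satisfies $|\<J(\bar f)\xi,J(g)\eta\>|\le c_{\bar f}\,c_g\,\|(L_0+1)\xi\|\,\|(L_0+1)\eta\|$, so it is jointly continuous in $f,g$ for the seminorm $c_{\slot}$, and $\omega(f,g)$ is likewise continuous, so the identity survives the limit.

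The main obstacle is precisely this last, analytic, step: unlike the trigonometric-polynomial case, for smooth $f$ the vector $J(f)\psi(g)\xi$ is a genuine (infinite) limit and $J(g)\eta$ need not lie in $\D_0$, so one cannot manipulate operator identities naively. The role of Lemma \ref{lm:energy-bound} is exactly to supply the uniform domain control that makes all these limits legitimate; for the third relation the cleanest route is to avoid claiming any operator identity for smooth $f,g$ and to argue entirely at the level of the continuous sesquilinear forms, as above.
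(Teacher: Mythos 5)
Your proof is correct and follows essentially the same route as the paper's (which simply states that the trigonometric-polynomial case follows from Lemma \ref{lm:commutation-field-current} and that the general case follows by approximation using the convergence established just before the statement); you have merely written out the mode computations and the limiting estimates in full detail. The only cosmetic difference is that you approximate $f$ and $g$ simultaneously rather than one after the other, which your uniform bounds justify.
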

\begin{proof}
For trigonometric polynomials $f,g$,
the statements can be proved easily from
Lemma \ref{lm:commutation-field-current}. The general case is shown
by approximating first $f$ by polynomials, then $g$, according to the
convergence considered above (as for the third statement,
obviously the order of limits does not matter).
\end{proof}

We need the following well-known result \cite[Theorem 3.1]{DF77}:
\begin{theorem}[The commutator theorem]
Let $H$ be a positive self-adjoint operator and $A,B$ symmetric operators
defined on a core $\D_0$ for $(H+\1)^2$.
Assume that there is a constant $C$ such that
\begin{gather*}
\|A\xi\| \le C\|(H+\1)\xi\|, \,\,\, \|B\xi\| \le C\|(H+\1)\xi\|, \\
\|[H,A]\xi\| \le C\|(H+\1)\xi\|, \,\,\, \|[H,B]\xi\| \le C\|(H+\1)\xi\|, \\
\<A\xi,B\eta\> = \<B\xi,A\eta\> \mbox{ for any } \xi,\eta \in \D_0.
\end{gather*}
Then $A$ and $B$ are essentially self-adjoint on any core of $H$ and
any bounded functional calculus of $A$ and $B$ commute.
\end{theorem}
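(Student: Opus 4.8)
The statement has two essentially independent conclusions---essential self-adjointness of $A$ and $B$, and strong commutativity of their functional calculi---and I would establish them in that order, the first being a routine invocation of a standard commutator theorem and the second carrying the real content of the result.

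For essential self-adjointness I would apply the commutator theorem of Nelson/Glimm--Jaffe type (cf.\ Reed--Simon, Vol.\ II, Thm.\ X.37, and the argument of Driessler--Fr\"ohlich) with reference operator $N=H+\1\ge\1$, separately to $A$ and to $B$. The relative bound $\|A\xi\|\le C\|(H+\1)\xi\|$ extends by closure to give $D(H)\subseteq D(\overline A)$, while the operator bound $\|[H,A]\xi\|\le C\|(H+\1)\xi\|$ supplies the control on the form $\<\xi,[H,A]\xi\>$ that the theorem requires; together these force $\overline A$ to be self-adjoint, i.e.\ $A$ essentially self-adjoint on every core of $H$, in particular on $\D_0$. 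Mechanically one shows $\ker(\overline A^\ast\mp\ima)=\{0\}$ by regularizing a candidate kernel vector with $R_\e=(\1+\e H)^{-1}$, which lands in $D(H)\subseteq D(\overline A)$ and commutes with $H$, and using $[A,R_\e]=-\e R_\e[H,A]R_\e$ together with the commutator bound to show the regularization error is negligible as $\e\to0$; the identical argument applies to $B$.

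With self-adjoint closures $\overline A,\overline B$ in hand, commutation of all bounded functional calculi is equivalent to strong commutativity, i.e.\ to $[\ee^{\ima s\overline A},\ee^{\ima t\overline B}]=0$ for all $s,t\in\RR$. The hypothesis $\<A\xi,B\eta\>=\<B\xi,A\eta\>$ says exactly that $\overline A$ and $\overline B$ commute in the sesquilinear-form sense on $\D_0$. I would emphasize that this is \emph{not} by itself sufficient---form commutation on a common core need not entail strong commutation, as the classical counterexamples of Nelson show---and it is precisely here that the $H$-bounds enter. The plan is to prove that the unitary group $\ee^{\ima s\overline A}$ leaves the operator domain $D(H)$ invariant with locally bounded graph norm. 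I would obtain this from a Gronwall estimate: for $\xi\in D(H)$ and $u_s:=\ee^{\ima s\overline A}\xi$, differentiating $\|(H+\1)u_s\|^2$ produces a term in $\overline A$ that drops out because $\overline A$ is self-adjoint (so $\<w,\overline A w\>$ is real) together with a remainder governed by $[H,\overline A]$, so that $\|[H,A]\xi\|\le C\|(H+\1)\xi\|$ yields $\tfrac{\dd}{\dd s}\|(H+\1)u_s\|^2\le 2C\,\|(H+\1)u_s\|^2$ and hence $\|(H+\1)u_s\|\le\ee^{C|s|}\|(H+\1)\xi\|$; in particular $u_s\in D(H)\subseteq D(\overline A)\cap D(\overline B)$.

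Granted this invariance, I would first note that the form identity $\<\overline A u,\overline B v\>=\<\overline B u,\overline A v\>$ extends from $\D_0$ to all $u,v\in D(H)$ by continuity in the graph norm of $H$, using the relative bounds. Then, fixing $\phi,\eta\in D(H)$ and writing $u_s=\ee^{\ima s\overline A}\phi$, $v_s=\ee^{\ima s\overline A}\eta$, I would study $G(s):=\<u_s,\overline B v_s\>$; since $u_s,v_s$ remain in $D(H)$, $G$ is differentiable with $G'(s)=\ima\<u_s,(\overline B\,\overline A-\overline A\,\overline B)v_s\>=\ima(\<\overline B u_s,\overline A v_s\>-\<\overline A u_s,\overline B v_s\>)=0$ by the extended form identity. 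Hence $G$ is constant, giving $\ee^{-\ima s\overline A}\,\overline B\,\ee^{\ima s\overline A}=\overline B$, so $\ee^{\ima s\overline A}$ commutes with every bounded function of $\overline B$, in particular with $\ee^{\ima t\overline B}$, which is the desired strong commutativity. I expect the domain-invariance step to be the main obstacle: the whole force of the theorem lies in upgrading weak commutation to strong, and the technical heart is exactly the conversion of the commutator bounds into invariance of $D(H)$ under the flows, which is what makes the formal integration of the infinitesimal commutation relation rigorous (a residual domain point---that $\overline B v_s\in D(\overline A)$, needed to move $\overline A$ across---is handled by the same regularity machinery, or by phrasing the computation entirely in terms of forms).
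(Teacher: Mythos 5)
Your overall architecture --- essential self-adjointness via a Nelson/Glimm--Jaffe commutator theorem, followed by strong commutativity obtained by showing the unitary flows preserve $D(H)$ and integrating the form commutation relation --- is exactly the route of the references the paper leans on (the paper gives no proof of its own here: it cites \cite{DF77} for the commutativity of bounded functions and \cite{RSII} for essential self-adjointness, remarking only that the norm estimates assumed in the statement are what the cited proof actually uses). There are, however, two genuine gaps in your execution.

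First, your choice of reference operator $N=H+\1$ in Theorem X.37 does not work. That theorem requires the \emph{form} bound $|\<A\xi,N\xi\>-\<N\xi,A\xi\>|\le d\,\|N^{1/2}\xi\|^2=d\,\<\xi,N\xi\>$, whereas the hypothesis $\|[H,A]\xi\|\le C\|(H+\1)\xi\|$ only yields $|\<\xi,[H,A]\xi\>|\le C\|\xi\|\,\|(H+\1)\xi\|$, and the ratio $\|\xi\|\,\|(H+\1)\xi\|/\<\xi,(H+\1)\xi\>$ is unbounded (take $\xi$ to be the sum of a low-energy unit vector and a vector of norm $E^{-1/2}$ at energy $E$, and let $E\to\infty$). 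The correct application is with $N=(H+\1)^2$: then $\<A\xi,N\xi\>-\<N\xi,A\xi\>=2\ima\,\Im\<[H,A]\xi,(H+\1)\xi\>$ is bounded by $2C\|(H+\1)\xi\|^2=2C\|N^{1/2}\xi\|^2$, while $\|A\xi\|\le C\|(H+\1)\xi\|\le C\|N\xi\|$. This is precisely why the statement hypothesizes that $\D_0$ is a core for $(H+\1)^2$ rather than merely for $H$ --- a hypothesis your argument never invokes, which is the tell-tale sign that the wrong reference operator is being used. Second, the Gronwall step is circular as written: differentiating $\|(H+\1)u_s\|^2$ presupposes that $u_s\in D(H)$ and that $(H+\1)u_s\in D(\overline A)$, which is exactly what is to be proved. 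The standard repair is to run the estimate for a bounded regularization $H_\e=H(\1+\e H)^{-1}$, for which the differentiation and the cancellation of the $\overline A$-term are legitimate and the commutator $[H_\e,\overline A]$ inherits the required bound, obtain $\|(H_\e+\1)u_s\|\le \ee^{C|s|}\|(H+\1)\xi\|$ uniformly in $\e$, and let $\e\to 0$ using closedness of $H$. With these two repairs the remainder of your argument --- extension of the form identity to $D(H)$, constancy of $G(s)$, and the upgrade from the form identity to $\ee^{-\ima s\overline A}\,\overline B\,\ee^{\ima s\overline A}=\overline B$ via the fact that $D(H)$ is a core for $\overline B$ --- goes through and reproduces the content of the cited proof.
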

\begin{remark}
In the original literature \cite{DF77}, this Theorem is proved under the
assumption of certain operator inequalities. In fact, what is really
used in the proof of commutativity of bounded functions is the norm estimates
$\|A(H+\1)^{-1}\| < C,\|[H,A](H+\1)^{-1}\|<C$ etc.\! and
they follow from the assumptions here.
The essential self-adjointness of $A$ and $B$ can be proved
by \cite[Theorem X.37]{RSII}.
An analogous application of this theorem with norm estimates
can be found in \cite{BS90}.
\end{remark}
By the commutator theorem, we get 
that $J(f)$ is self-adjoint for $f\in C^\infty(S^1,\RR)$  and that all bounded functions of $J(f)$ 
commute with all bounded functions of $J(g)$
for $f,g\in C^\infty(S^1,\RR)$ with disjoint support. 

Let $I$ be a proper interval and let us define the von Neumann algebra
\[
\B(I)  = \{ \ee^{\ima J(f)} : \supp f \subset I\}''.
\]
The local net $\B(I)$ restricted to
 $\overline{\B(I)\Omega_0}$ can be identified with the $\uone$-current net $\u1net$ on 
 $\Hbose$, in particular we can identify  $\overline{\B(I)\Omega_0}\cong \Hbose$.
\begin{proposition} Let $I$ be a proper interval, then 
    $\B(I) \subset \FerC^{\uone}(I)$. 
\end{proposition}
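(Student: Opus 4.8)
The plan is to prove, for each real $f\in C^\infty(S^1,\RR)$ with $\supp f\subset I$, that the unitary $\ee^{\ima J(f)}$ is gauge invariant and lies in $\FerC(I)$; since $\B(I)$ is the von Neumann algebra generated by such unitaries, this yields $\B(I)\subset\FerC(I)^{\uone}=\FerC^{\uone}(I)$. Gauge invariance is the easy half. Each mode $J_n=\sum_{r+s=n}:\!\bar\psi_r\psi_s\!:$ is built from one $\psi$ and one $\bar\psi$ of opposite gauge charge, so $J(f)$ has charge zero and commutes with the generator $Q$ of the gauge action; hence $V(\theta)\ee^{\ima J(f)}V(\theta)^\ast=\ee^{\ima J(f)}$ for all $\theta$. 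In particular $\ee^{\ima J(f)}$ commutes with $V(\pi)=\Gamma$ and is therefore even, a fact I will reuse below.

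For the localization I would first show that $\ee^{\ima J(f)}$ lies in the commutant $\FerC(I')'$ and then transfer it to $\FerC(I)$ by twisted duality. Recall that $\FerC(I')=C(K(I'))$ is generated by the self-adjoint fields $c(g)$, $g\in K(I')$, and it suffices to treat smooth $g$, which form a dense set and hence still generate $\FerC(I')$. Each such $c(g)$ is a linear combination of smeared fields $\psi,\bar\psi$ supported in $I'$; it is bounded, and its mode coefficients decay rapidly so that $[L_0,c(g)]$ is bounded as well. Thus the two norm hypotheses of the commutator theorem for $B=c(g)$ hold trivially, while those for $A=J(f)$ are exactly the content of Lemma \ref{lm:energy-bound}. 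Because $\supp f$ and $\supp g$ are disjoint, Lemma \ref{lm:commutation-smeared} gives $[J(f),\psi(\slot)]=-\psi(f\cdot\slot)=0$ and $[J(f),\bar\psi(\slot)]=\bar\psi(f\cdot\slot)=0$ on $\D_0$, so $[J(f),c(g)]=0$ there; combined with the fact (established just before Lemma \ref{lm:commutation-smeared}) that $c(g)\xi$ lies in the domain of the self-adjoint operator $J(f)$ for $\xi\in\D_0$, this produces the weak commutation hypothesis $\<J(f)\xi,c(g)\eta\>=\<c(g)\xi,J(f)\eta\>$. The commutator theorem then forces every bounded function of $J(f)$ to commute with $c(g)$; in particular $\ee^{\ima J(f)}$ commutes with all generators of $\FerC(I')$, so $\ee^{\ima J(f)}\in\FerC(I')'$.

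Finally I would invoke twisted Haag duality for $\FerC$, which gives $\FerC(I')'=\Ad Z(\FerC(I))$, so $\ee^{\ima J(f)}\in Z\FerC(I)Z^\ast$. Since $\ee^{\ima J(f)}$ is even it commutes with $Z$ (a function of $\Gamma$), and $\Ad Z$ acts trivially on even elements, so $\ee^{\ima J(f)}\in\FerC(I)$; together with the gauge invariance established above this completes the inclusion. I expect the middle step to be the main obstacle: the formal identity $[J(f),c(g)]=0$ on the finite-energy domain $\D_0$ is elementary, but upgrading it to genuine commutation of the bounded unitary $\ee^{\ima J(f)}$ with the field $c(g)$ is precisely where the analytic input is indispensable, namely the linear energy bounds of Lemma \ref{lm:energy-bound} feeding the commutator theorem.
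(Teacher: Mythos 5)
Your proof is correct and follows essentially the same route as the paper: commutation of $J(f)$ with the generators $c(g)$ of $\FerC(I')$ on the core $\D_0$ (via Lemma \ref{lm:commutation-smeared}), upgraded to commutation of bounded functions, then evenness of $\ee^{\ima J(f)}$ plus twisted Haag duality to land in $\FerC(I)$, and finally gauge invariance to pass to the fixed point. The only (harmless) difference is that you invoke the full commutator theorem for the pair $(J(f),c(g))$, whereas the paper exploits the boundedness of $c(g)$ directly: commutation with the closed operator $J(f)$ on a core already forces the spectral projections of $c(g)$ to commute with $J(f)$ and hence with all its bounded functions, so the energy bounds for $c(g)$ are not needed there.
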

\begin{proof}
We see that $\B(I)$ commutes with $\FerC(I') = \{c(g): g\in L^2(I') \}''$
because, for $f,g$ with disjoint supports,
$c(g)$ commutes with $J(f)$ on a core
by Lemma \ref{lm:commutation-smeared}
and therefore any spectral projection of $c(g)$
commutes with $J(f)$, and hence with any bounded functions of $J(f)$.

Further because $J(f)$ commutes by construction with the gauge action $V(t)$ and is in particular 
even because $V(\pi)=\Gamma$, it follows that $\B(I)$ lies in the twisted commutant
$\FerC(I')^\perp$. By twisted Haag duality it is $\B(I)\subset \FerC(I')^\perp = \FerC(I)$ and therefore
$\B(I) = \B(I)^{\uone}  \subset \FerC^{\uone}(I)$.
\end{proof}

Since the covariance has been seen, we have the following.
\begin{corollary} $\B$ is a subnet of $\FerC^{\uone}$. 
\end{corollary}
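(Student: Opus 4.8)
The plan is to read the statement straight off the definition of a subnet in Section~\ref{subnets}: an assignment $I\mapsto\B(I)$ is a subnet of $\FerC^{\uone}$ exactly when it is isotonic, M\"obius covariant with respect to the \emph{same} representation that implements covariance of $\FerC$ (equivalently of its fixed-point subnet $\FerC^{\uone}$), and satisfies $\B(I)\subset\FerC^{\uone}(I)$ for every proper interval $I$. The last of these three conditions is precisely the content of the preceding Proposition, so no new work is needed there, and the whole task reduces to checking isotony and covariance.

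Isotony I would dispatch immediately from the defining formula $\B(I)=\{\ee^{\ima J(f)}:\supp f\subset I\}''$: if $I_1\subset I_2$ then every test function with $\supp f\subset I_1$ also has $\supp f\subset I_2$, so each generating unitary $\ee^{\ima J(f)}$ of $\B(I_1)$ already lies in $\B(I_2)$, whence $\B(I_1)\subset\B(I_2)$. For M\"obius covariance I would start from the field covariance $U(g)J(f)U(g)^\ast=J(f\circ g^{-1})$ obtained above, where $U$ is the covariance representation of $\FerC$ (the second quantization of the one-particle representation of $\su11\cong\mob2$). Since $J(f)$ is self-adjoint for real $f$, conjugation by the unitary $U(g)$ intertwines its functional calculus, so that $U(g)\ee^{\ima J(f)}U(g)^\ast=\ee^{\ima J(f\circ g^{-1})}$. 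Because $f\mapsto f\circ g^{-1}$ is a bijection of $C^\infty(S^1,\RR)$ carrying the support condition $\supp f\subset I$ to $\supp(f\circ g^{-1})\subset gI$, conjugation by $U(g)$ maps the generating set of $\B(I)$ onto that of $\B(gI)$; passing to generated von Neumann algebras yields $U(g)\B(I)U(g)^\ast=\B(gI)$, which is covariance under the very representation used for $\FerC^{\uone}$.

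I do not expect a genuine obstacle here, since the substantive inclusion is already settled; the only points requiring a little care are bookkeeping ones. First, one must note that $g\in\su11\cong\mob2$ maps proper intervals to proper intervals, so that $\B(gI)$ is well defined. Second, one must justify upgrading the operator identity for $J(f)$, which is a priori asserted only on the core $\D_0$, to the level of the bounded exponentials $\ee^{\ima J(f)}$; this is legitimate because the commutator theorem furnishes the self-adjointness of $J(f)$, allowing the unitarily implemented equality to be exponentiated. Combining isotony, this covariance, and the inclusion $\B(I)\subset\FerC^{\uone}(I)$ from the preceding Proposition, $\B$ satisfies the definition of a subnet of $\FerC^{\uone}$, which is exactly the assertion of the corollary.
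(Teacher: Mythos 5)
Your proposal is correct and follows exactly the route the paper takes: the substantive inclusion $\B(I)\subset\FerC^{\uone}(I)$ is the preceding Proposition, and the paper's one-line proof (``Since the covariance has been seen\dots'') refers to the field covariance $U(g)J(f)U(g)^\ast=J(f\circ g^{-1})$, which you correctly exponentiate to covariance of the generated algebras. You have simply written out the bookkeeping (isotony, passage from the self-adjoint generators to the unitaries) that the paper leaves implicit.
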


Now the following is straightforward.
\begin{proposition}
The $\uone$-fixed point subnet of the complex free fermion net $\FerC$ is the $\uone$-current net,
i.e. $\FerC^{\uone} = \B \cong \u1net$. 
\end{proposition}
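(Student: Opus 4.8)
The plan is to prove the reverse inclusion $\FerC^{\uone} \subset \B$ and then apply the character argument from Section~\ref{subnets} to conclude equality. We already have $\B \subset \FerC^{\uone}$ from the preceding Corollary, so the task is to show there is nothing more in $\FerC^{\uone}$ than what the current net already sees. Rather than attack the von Neumann algebras directly, the cleanest route is to compare the Hilbert subspaces $\H_\B = \overline{\B(S^1)\Omega_0}$ and $\HferC^{\uone} = (\HferC)^{\uone}$, since both subnets carry vacuum-preserving conditional expectations implemented by the corresponding projections, and two subnets with the same associated subspace coincide.

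First I would identify the space $\HferC^{\uone}$ explicitly using the grading by the gauge charge $Q$ computed at the end of Section~\ref{free-fermion}: $\HferC^{\uone}$ is exactly the $q=0$ sector, i.e. the span of basis vectors $\psi_{-r_1}\cdots\psi_{-r_k}\bar\psi_{-s_1}\cdots\bar\psi_{-s_\ell}\Omega_0$ with equal numbers of barred and unbarred modes, $k=\ell$. The conformal character of $\FerC^{\uone}$ has already been computed to be the coefficient of $z^0$ in the Jacobi-triple-product expression, which is precisely $p(t) = \prod_{n}(1-t^n)^{-1}$. On the other side, the conformal character of $\u1net$ (hence of $\B$, via the identification $\overline{\B(I)\Omega_0}\cong\Hbose$ already noted) is also $p(t)$, by the partition-function computation in Section~\ref{u1-current}.

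Since the two conformal characters coincide and we already have the inclusion $\H_\B \subset \HferC^{\uone}$, the character argument spelled out at the end of Section~\ref{subnets} applies verbatim: coefficient-by-coefficient equality of the characters forces the graded subspaces to agree in every energy eigenspace, so $\H_\B = \HferC^{\uone}$. This in turn forces the associated projections, hence the conditional expectations, hence the subnets themselves to be equal, giving $\B = \FerC^{\uone}$. Combined with the already-established identification $\B \cong \u1net$, this yields the claim.

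The main obstacle, and the only genuine subtlety, is the bookkeeping that the character of $\B$ really equals $p(t)$ \emph{as computed inside $\HferC$}, and not merely abstractly: one must be sure that the identification $\overline{\B(I)\Omega_0}\cong\Hbose$ is compatible with the rotation grading, so that the eigenspace dimensions being compared are genuinely the same $L_0$-eigenspace dimensions on both sides. This is guaranteed because $\B$ is a M\"obius-covariant subnet sharing the representation $U$ (in particular the rotation generator $L_0$) with $\FerC$, so the grading on $\H_\B$ is the restriction of the grading on $\HferC$; once this compatibility is made explicit, the equality of characters immediately gives $\H_\B = \HferC^{\uone}$, and the rest is the formal argument. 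I would therefore keep the proof short: invoke the previously computed characters, note they both equal $p(t)$, and cite the character argument of Section~\ref{subnets} to conclude.
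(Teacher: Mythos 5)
Your proposal is correct and follows essentially the same route as the paper: both rest on the inclusion $\B \subset \FerC^{\uone}$ from the preceding corollary together with the coincidence of the conformal characters $\tr_{\Hbose}(\ee^{-\beta L_0})=\tr_{\H_{\slot,0}}(\ee^{-\beta L_0})=p(t)$, and then invoke the character argument of Section~\ref{subnets} to upgrade equality of the subspaces $\H_\B=\HferC^{\uone}$ to equality of the subnets. Your extra remark about compatibility of the rotation grading is a sensible point that the paper leaves implicit (it holds because a subnet is by definition covariant under the same representation $U$).
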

\begin{proof}
Let us see $\B$ as a subnet of the fermi net $\FerC^{\uone}$ 
on $\HferC^{\uone} \equiv \H_{\,\cdot\,,0}$.
Further $\overline{\B(I)\Omega}$ does not depend on $I$
by the same proof of the Reeh-Schlieder property
and is clearly a subspace of $\HferC^{\uone} \equiv \H_{\,\cdot\,,0}$.

In fact they coincide, since we have confirmed that
$\tr_{\Hbose}(\ee^{-\beta L_0})=\tr_{\H_{\,\cdot\,,0}}(\ee^{-\beta L_0}) = p(t)$,
where $\ee^{-\beta} = t$, namely, their conformal characters
coincide (see also Section \ref{subnets}).
\end{proof}

We finish this section by giving the parametrization in $x$-picture, where
 the action of the translation is more natural. With 
\[
    f(x) 
    = \frac{1}{\sqrt{2\pi}} \sqrt{\left| \frac{\partial \theta(x)}{\partial x}\right| } \ee^{\ima \theta(x)/2} f_0(\theta(x))
\]
we identify $L^2(\RR)=L^2(\RR,\dd x)$ with $L^2(S^1)=L^2([0,2\pi],\dd\theta /(2\pi))$ and therefore 
the space $\HferC^1$ is given by $PL^2(\RR) \oplus \overline {P^\perp L^2(\RR)}$ with 
$P:\hat f(p) \mapsto \Theta(p)\hat f(p)$ and it can be identified in ``momentum space'' with 
$L^2(\RR_+,2\pi\dd p)\oplus L^2(\RR_+,2\pi \dd q)$  by
\begin{align*}
    f(x) &\longmapsto \widehat {Pf}(p) \oplus  \overline{\widehat{P^\perp f}(-q)} & p,q>0
    \,.
\end{align*}

The field operators are defined for $f\in L^2(\RR)$ by 
$\psi(f)=a_P(f)$ and $\bar\psi(f)=a_P(\overline f)^\ast$. For $\Psi \in \HferC$ 
we write its components 
\begin{equation}
    \label{eq:decomposition}
    \Psi_{m,n} \in \H_{m,n}:=L^2(\RR_+^{m+n},(2\pi)^{m+n}\dd p_1\cdots \dd p_m\dd q_1 \cdots q_n)_-,
\end{equation}
where $-$ means the antisymmetrization within $p_1,\ldots, p_m$ and $q_1,\ldots, q_n$.
By this notation $(\psi(f)\Omega_0)_{1,0}(p)= \hat f(p)$ and $(\bar\psi(f)\Omega_0)_{0,1}(q)=\hat f(q)$.
Further the bi-field
${:\!\bar\psi(f)\psi(g)\!:}=\bar\psi(f)\psi(g)-\<\Omega_0,\bar\psi(f)\psi(g)\Omega_0\> \1 $ creates from the vacuum $\Omega_0$
a fermionic 1+1 particle state $\Psi_{f,g}:={:\!\bar\psi(f)\psi(g)\!:}\Omega_0$ 
with $(\Psi_{f,g})_{1,1}(p,q) =-\hat f(q)\hat g(p)$
and it follows for $h\in C^\infty(\RR,\RR)$ that for the $\uone$-current $J$, it holds
$(J(h)\Omega_0)_{1,1}(p,q) =-\frac1{2\pi} \hat h(p+q)$ which is obtained by taking 
a limit $\sum_n\Psi_{f_n,g_n}$ with test functions $\sum_n f_n(x)g_n(y) \to h(x)\delta(x-y)$. 
We make the important observation that the $J(f)\Omega_0$ generate the one-particle 
space which we can identify with $\Hbose^1$ and this is obviously a proper subspace of the 
fermionic 1+1-particle space $\HferC^{1,1}$. 

\section{A new family of Longo-Witten endomorphisms on
\texorpdfstring{$\uone$}{U\^{}1}-current net}\label{endomorphisms}

We use the description of $\HferC^1=\overline{PL^2(S^1)} + P^\perp L^2(S^1)$ which equals 
$L^2(S^1)$ as a real Hilbert space and is described in the beginning 
of Section \ref{free-fermion}.
First we decompose $\HferC^1$ into irreducible representations of $\mathrm{SU}(1,1)$ in a compatible
way with $K(I)$. Let us define
\begin{eqnarray*}
\H_\Re &:=& \{ f \in \H^1_{\FerC}: z^{\frac{1}{2}}f(z) \mbox{ is real}\}, \\
\H_\Im &:=& \{ f \in \H^1_{\FerC}: z^{\frac{1}{2}}f(z) \mbox{ is pure imaginary}\}.
\end{eqnarray*}
By their definition, it is clear that $\H_\Re$ and $\H_\Im$ are real Hilbert subspaces of $L^2(S^1)$.
In fact, they are complex subspaces with respect to the new complex structure.
To see this, we take another description of $\H_\Re$:
in terms of Fourier components, it holds that $f \in \H_\Re$ if and only if
$f_n = \overline{f_{-n-1}}$. Recall that, on $L^2(S^1)$, the new scalar multiplication by $\ima$ is given by
$\ima\cdot f_n = -\ima f_n$, $\ima\cdot f_{-n-1} = \ima f_{-n-1}$ for $n \ge 0$. Hence this condition
is preserved under the multiplication by $\ima$ and $\H_\Re$ is a complex subspace. An analogous
argument holds for $\H_\Im$. Next we see that $\H_\Re$ and $\H_\Im$ are orthogonal.
Note that because of the change of the complex structure, for $f(z) = \sum_n f_n z^n$ and
$h(z) = \sum_n h_n z^n$ the inner product is written as follows:
\[
\<f,h\> = \sum_{n \ge 0} f_n \overline{h_n} + \sum_{n < 0} \overline{f_n} h_n.
\]
Now $f \in \H_\Re$ implies $f_n = \overline{f_{-n-1}}$ and $h \in \H_\Im$ implies
$h_n = -\overline{h_{-n-1}}$ for non-negative $n$, hence it is easy to see that
\[
\<f,h\> = \sum_{n \ge 0} f_n\overline{h_n} + \sum_{n < 0} \overline {f_n} h_n
= -\sum_{n \ge 0} \overline{f_{-n-1}}{h_{-n-1}} + \sum_{n < 0} \overline{f_n}{h_n} = 0.
\]
In other words, these two complex subspaces are mutually orthogonal.

Furthermore, $\H_\Re$ and $\H_\Im$ are invariant under the action of $\mathrm{SU}(1,1)$.
We recall that the action is given by
$(V_g f)(z) = \frac{1}{-\overline{\b}z+\a}f\left(\frac{\overline{\a}z-\b}{-\overline{\b}z+\a}\right)$.
Then if $z^\frac{1}{2}f(z)$ is real then it holds that
\begin{eqnarray*}
z^{\frac 1 2}(V_g f)(z)
&=& \frac{1}{{\bar z}^{\frac 1 2}(-\overline{\b}z+\a)}\left(\frac{\overline{\a}z-\b}{-\overline{\b}z+\a}\right)^{-\frac{1}{2}}
\cdot\left(\frac{\overline{\a}z-\b}{-\overline{\b}z+\a}\right)^\frac{1}{2}
f\left(\frac{\overline{\a}z-\b}{-\overline{\b}z+\a}\right) \\
&=& \frac{1}{(-\overline{\b}+\a \bar z)^\frac{1}{2}(\overline{\a}z-\b)^\frac{1}{2}}
\cdot\left(\frac{\overline{\a}z-\b}{-\overline{\b}z+\a}\right)^\frac{1}{2}
f\left(\frac{\overline{\a}z-\b}{-\overline{\b}z+\a}\right)
\end{eqnarray*}
and both factors are real. Similarly one shows that $\H_\Im$ is preserved under $V_g$.
It is obvious that these two representations are intertwined by the multiplication by $\ima$
in the old complex structure. This is still a unitary map, thus they are unitarily equivalent.
One can see that each representation is indeed irreducible, and when restricted to
$\mathrm{PSU}(1,1) = \psl2r$, it is the projective positive energy representation with lowest weight $\frac{1}{2}$.

It is easy to see that $e^\Re_n := \{e_n + e_{-n+1}, n \ge 0\}$ and $e^\Im_n := \{\ima (e_n + e_{-n+1}), n \ge 0\}$ form
bases of $\H_\Re$ and $\H_\Im$, respectively, where $e_n(z) = z^n$ and the multiplication by $\ima$ is
given in the old structure.
Now we describe the gauge action in terms of this basis. By the definition, for a given complex number $\a$ with modulus $1$,
the action is given by the multiplication in the old structure. Hence if $\a = \cos \theta +\ima \sin \theta$,
we have $U_\a e^\Re_n = \cos \theta e^\Re_n + \sin \theta e^\Im_n$ and
$U_\a e^\Im_n = -\sin \theta e^\Re_n + \cos \theta e^\Im_n$.
This means that $U_\a$ acts as the real rotation by $\theta$ in this basis.

\subsubsection*{Construction of endomorphisms}
We construct Longo-Witten endomorphisms on the free fermion net $\FerC$ commuting with the gauge action. 
The key is the following theorem. We remind that
a \textbf{standard pair} $(\tilde H,\tilde T)$ is a standard subspace $\tilde H\subset \tilde \H$ of a Hilbert 
space $\tilde \H$ and a positive energy representation $\tilde T$ of $\RR$ on $\tilde \H$, 
such that $\tilde{T}(a)\tilde{H} \subset \tilde{H}$ for $a \ge 0$.
If $\tilde T$ is maximally abelian, the standard pair is said to be irreducible and there is a (up to unitary equivalence) 
unique irreducible standard pair.

\begin{theorem}[{\cite[Theorem 2.6]{LW11}}]
Let $(\tilde{H},\tilde{T})$ be a standard pair with
multiplicity $n$, i.e. it decomposes into $n$-fold direct sum 
of irreducible standard pairs, each unitarily equivalent to the unique standard pair $(H,T)$ and $T(t)=\ee^{\ima t P}$.
Then a unitary $\tilde{V}$ commuting with the translation $\tilde{T}$
preserves $\tilde{H}$ if and only if $\tilde{V}$
is a $n\times n$ matrix $(V_{hk})$ (with respect to the decomposition
of $\tilde\H$ into $n$ direct sum as above) such that $V_{hk} = \f_{hk}(P)$,
where $\f_{hk}: \RR \to\CC$ are complex Borel functions such that
$(\f_{hk})$ is a unitary matrix for almost every $p > 0$, each $\f_{hk}$ is the
boundary value of a function in $\HH(\SS_\infty)$ and is symmetric, i.e. $\f_{hk}(-p)=\overline{\f_{hk}(p)}$.
\end{theorem}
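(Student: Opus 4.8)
The plan is to handle the two requirements on $\tilde V$ — commuting with $\tilde T$ and preserving $\tilde H$ — separately, and to reduce the matrix assertion to a scalar statement about the unique irreducible pair $(H,T)$. Throughout I read ``preserves'' as $\tilde V \tilde H \subseteq \tilde H$, the condition relevant for (possibly proper) endomorphisms; imposing the reverse inclusion as well would force the scalar entries to be constant and trivialize the statement. The commutation condition is essentially bookkeeping: since the pair is irreducible, $T$ is maximally abelian on $\H$, so $\{T(t)\}'' = \{T(t)\}'$ is the maximal abelian algebra $\{\f(P):\f\in L^\infty(\RR)\}$ of bounded Borel functions of the generator. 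Writing $\tilde\H = \H \otimes \CC^n$ and $\tilde T = T \otimes \1$, the commutant of $\tilde T$ is $\{\f(P):\f\in L^\infty(\RR)\} \otimes M_n(\CC)$, so every unitary $\tilde V$ commuting with $\tilde T$ is automatically an $n\times n$ matrix $(\f_{hk}(P))$, and unitarity of $\tilde V$ is exactly the statement that $(\f_{hk}(p))_{h,k}$ is a unitary matrix for almost every $p$.

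Next I would reduce the preservation condition to a condition on the individual entries. Since $(\tilde H,\tilde T)$ is the $n$-fold direct sum of $(H,T)$, one has $\tilde H = \{\sum_k \xi_k \otimes e_k : \xi_k \in H\}$. Evaluating $\tilde V$ on the vectors $\xi \otimes e_k$ with $\xi \in H$ shows that $\tilde V \tilde H \subseteq \tilde H$ holds if and only if each scalar multiplier satisfies $\f_{hk}(P) H \subseteq H$. The crux is then a scalar lemma: for $\f \in L^\infty(\RR)$ the multiplier $\f(P)$, which automatically commutes with $T$, maps $H$ into $H$ precisely when $\f$ is the boundary value of a function in $\HH(\SS_\infty)$ and is symmetric, $\f(-p) = \overline{\f(p)}$.

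I emphasize that this is genuinely the \emph{contraction} version of the statement: the entries $\f_{hk}$ of a unitary matrix are contractions rather than inner functions, so one cannot invoke a unitary ``symmetric inner function'' characterization entrywise — only the full matrix $(\f_{hk}(p))$ is unitary. To prove the scalar lemma I would use the modular data of the half-line standard subspace $H$: by the one-particle Bisognano--Wichmann structure $\Delta_H^{\ima t}$ implements dilations and $J_H$ a reflection. Realizing $\H$ on momentum space and passing to the variable $\log p$ turns the dilations into translations, so that $\f(P)H \subseteq H$ becomes a one-sided (causal support, i.e. Hardy-space) condition, namely that $\f$ admit a bounded analytic continuation to $\SS_\infty$; the reflection $J_H$ together with the reality structure of $H$ then forces $\f(-p) = \overline{\f(p)}$. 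The converse — that any symmetric $\HH(\SS_\infty)$ multiplier preserves $H$ — is a direct verification.

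Assembling these with the matrix form gives the forward implication, and the converse direction of the theorem is the straightforward check that any matrix of symmetric $\HH(\SS_\infty)$ functions that is pointwise unitary commutes with $\tilde T$ and preserves $\tilde H$. I expect the main obstacle to be precisely the scalar lemma: producing the bounded analytic continuation to $\SS_\infty$ from the modular/KMS analyticity while assuming only that $\f$ is a contraction, and correctly extracting the symmetry relation $\f(-p)=\overline{\f(p)}$ from $J_H$ (in particular making sense of the boundary values on the whole of $\RR$, since the multiplier itself only sees the positive spectrum of $P$).
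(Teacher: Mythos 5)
This statement is imported verbatim from Longo--Witten (\cite[Theorem 2.6]{LW11}); the paper gives no proof of it, so there is nothing internal to compare against. Measured against the original argument, your outline is essentially the same route and is sound: maximal abelianness of $\{T(t)\}''$ for the irreducible pair gives the commutant $\{\f(P):\f\in L^\infty\}\otimes M_n(\CC)$, hence the matrix form $(\f_{hk}(P))$ with pointwise-a.e.\ unitarity; evaluating on $\xi\otimes e_k$ correctly reduces $\tilde V\tilde H\subseteq\tilde H$ to the entrywise condition $\f_{hk}(P)H\subseteq H$; and you are right to insist on the contraction (rather than inner) version of the scalar lemma, since only the full matrix is unitary. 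The one place where your sketch misstates the mechanism is the claim that in the $\log p$ variable the preservation of $H$ ``becomes a one-sided (causal support) condition'': the analyticity does not come from a support argument but from the standard-subspace identity $S_H\f(P)S_H\supseteq\f(P)$ with $S_H=J_H\Delta_H^{1/2}$, combined with $\Delta_H^{\ima t}\f(P)\Delta_H^{-\ima t}=\f(\ee^{2\pi t}P)$ and $J_H\f(P)J_H=\overline{\f}(P)$; continuing $t\mapsto\f(\ee^{2\pi t}p)$ to the strip of width $1/2$ sweeps out the upper half-plane in the variable $\ee^{2\pi t}p$ and the boundary value on the negative axis yields $\f(-p)=\overline{\f(p)}$. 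You do invoke ``modular/KMS analyticity'' at the end, so this is a matter of making the deferred scalar lemma precise rather than a wrong idea; as written, that lemma --- the entire analytic content of the theorem --- remains a stated expectation rather than a proof.
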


Consider the one-particle space $\HferC^1$ for $\FerC$.
The pair of the standard space $K(\RR_+) = L^2(\RR_+)$ defined in Section \ref{free-fermion}
(under the identification of $S^1$ and $\RR \cup \{\infty\})$
and the natural translation has multiplicity 2.
If we take a matrix-valued function $(\f_{hk})$ as above and take the second quantization operator
$\Lambda(V)$ of the (matrix-valued) operator $(V_{kh}) = (\f_{hk}(P))$, then it implements a
Longo-Witten endomorphism of $\FerC$ (see \cite{LW11}).

As the gauge group acts by real rotation
$\left(\begin{matrix}\cos\theta & -\sin\theta \\ \sin\theta & \cos\theta\end{matrix}\right)$,
any matrix-valued function of $p$ which commute with them must have the form
$\left(\begin{matrix}a(p) & \ima b(p) \\ -\ima b(p) & a(p)\end{matrix}\right)$.
If each component is symmetric, then $a$ is symmetric and $b$ is antisymmetric.
Such a matrix-valued function can be diagonalized by the matrix
$\left(\begin{matrix} 1 & \ima \\ \ima & 1\end{matrix}\right)$ and becomes
$\left(\begin{matrix} a(p)+b(p) & 0 \\ 0 & a(p)-b(p) \end{matrix}\right)$.
We claim that such $a$ and $b$ exist. Indeed, let $\f$ be a inner function
(not necessarily symmetric), namely the boundary value with modulus $1$ of a
bounded analytic function on the upper half-plane $\HH$, and define
$a(p) = \frac{1}{2}(\f(p) + \overline{\f(-p)})$, $b(p) = \frac{1}{2}(\f(p)-\overline{\f(-p)})$.
Then it is obvious that $a$ is symmetric and $b$ is antisymmetric.
In addition, $a(p) + b(p) = \f(p)$ and $a(p) - b(p) = \overline{\f(-p)}$, hence
the diagonalized matrix is unitary for almost every $p$.
By the theorem of Longo-Witten, the operator
$\f(P_1) := \left(\begin{matrix}a(P) & \ima b(P) \\ -\ima b(P) & a(P)\end{matrix}\right)$
preserves the real Hilbert space $\widetilde{H}:=K(\RR_+)$, where
$P_1$ is the generator of the translation in $\H_{\FerC}^1$ which has
multiplicity $2$ and
$P$ is the generator of $T$ of the irreducible standard pair $(H,T)$.

It is easy to see that the above diagonalization is given exactly by the decomposition
$\HferC^1 = \overline{PL^2(S^1)} \oplus (\1-P)L^2(S^1)$.

We remind that $\HferC^1$ can be identified with $L^2(\RR)$ as a real space. 
In $L^2(\RR)$ the function $\f(P_1)f$ is the function with Fourier
transform $\f(p)\hat f(p)$ and we remark that it also follows directly 
from the Paley-Wiener theorem that $\f(P_1)$ leaves $L^2(\RR_+)\subset L^2(\RR)$ invariant for 
$\f$ inner.
Further using that the space $\HferC$ decomposes in 
$\HferC = \bigoplus_{m,n\in \NN_0} \H_{m,n}$ like in (\ref{eq:decomposition})
with the gauge action given by $V(\theta)\Psi_{m,n} = \ee^{\ima (m-n)\theta}\Psi_{m,n}$,
the action of the Longo-Witten unitary $V_\f=\Lambda(\f(P_1))$ is given by 
\begin{align*}
    &(V_\f\Psi)_{m,n}(p_{1},\cdots,p_{m},q_{1},\cdots,q_{n}) \\
    &\quad = 
    \f(p_1)\cdots\f(p_m)\overline{\f(-q_1)}
    \cdots \overline{\f(-q_n)}
    \Psi_{m,n}(p_{1},\cdots,p_{m},q_{1},\cdots,q_{n}) \,.
\end{align*}

\begin{lemma}\label{lem:LW1p} 
    Let $\iota:\Psi\in L^2(\RR_+, p\dd p)\equiv\Hbose^1 \hookrightarrow \H_{1,1} \subset \HferC$ be the embedding 
    given by $\iota(\Psi)_{1,1}(p,q) = -\frac1{2\pi}\Psi(p+q)$. A second quantization Longo-Witten unitary $V_\f$
    commuting with the gauge action $V(\slot)$ satisfies  $V_\f\iota\Hbose^1\subset \iota\Hbose^1$ if and only 
    if $V_\f = V(\theta)T(t)$ with $t\geq 0$.
\end{lemma}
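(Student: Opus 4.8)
The plan is to reduce the containment condition to a functional equation for the inner function $\f$ at the one-particle level, solve that equation, and then bring in analyticity. Since $V_\f = \Lambda(\f(P_1))$ respects the particle-number decomposition $\HferC = \bigoplus_{m,n}\H_{m,n}$, it maps $\iota\Hbose^1\subset\H_{1,1}$ into $\H_{1,1}$, where by the displayed formula for $(V_\f\Psi)_{m,n}$ it acts by multiplication by $\f(p)\overline{\f(-q)}$. Hence $V_\f\iota\Psi$ lands in $\iota\Hbose^1$ exactly when $\f(p)\overline{\f(-q)}\,\Psi(p+q)$ is again a function of $p+q$ alone. Choosing a single $\Psi\in\Hbose^1=L^2(\RR_+,p\dd p)$ that is nonzero a.e., I would conclude that $V_\f\iota\Hbose^1\subset\iota\Hbose^1$ is equivalent to the existence of a measurable $G$ with $|G|=1$ and $\f(p)\overline{\f(-q)}=G(p+q)$ for a.e.\ $p,q>0$ (here $|G|=1$ because $|\f|=1$ a.e.).

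Next I would solve this multiplicative Pexider equation. Writing $A(p)=\f(p)$ for $p>0$ and $B(q)=\overline{\f(-q)}$ for $q>0$, the two evaluations $A(p+h)B(q)=G(p+q+h)=A(p)B(q+h)$ give $A(p+h)/A(p)=B(q+h)/B(q)$, so $\f(p+h)/\f(p)$ is independent of $p$; call it $\chi(h)$. Then $\chi$ is a measurable homomorphism $\RR\to\uone$, hence $\chi(h)=\ee^{\ima\alpha h}$ for some $\alpha\in\RR$, and one obtains $\f(p)=C\ee^{\ima\alpha p}$ on $\RR_+$ together with $\f(r)=C''\ee^{\ima\alpha r}$ on $\RR_-$, with $|C|=|C''|=1$ and a common exponent $\alpha$ on the two half-lines.

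The decisive step is then to feed in that $\f$ is inner, i.e.\ the boundary value of a bounded analytic function on $\HH$, equivalently that its Fourier transform is supported in $[0,\infty)$. Writing $\f(p)=\ee^{\ima\alpha p}\bigl(\tfrac{C+C''}2+\tfrac{C-C''}2\sign p\bigr)$ and transforming, the $\sign$ term produces a principal-value kernel $\propto (C-C'')\,\mathrm{p.v.}\,(\xi-\alpha)^{-1}$, which charges $\{\xi<0\}$ unless $C=C''$; and once $C=C''$ the transform is a multiple of $\delta(\xi-\alpha)$, forcing $\alpha\geq 0$. Thus $\f(p)=\ee^{\ima\theta}\ee^{\ima tp}$ with $t=\alpha\geq 0$ and $\ee^{\ima\theta}=C$. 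Since the one-particle operator $\f(P_1)$ then factors as a product of the one-particle gauge unitary and $\ee^{\ima tP_1}$, second quantization yields $V_\f=V(\theta)T(t)$. For the converse, $\ee^{\ima\theta}\ee^{\ima tz}$ is inner precisely for $t\geq 0$, and a direct computation with the same formula gives $\f(p)\overline{\f(-q)}=\ee^{\ima t(p+q)}$, so $V(\theta)T(t)$ carries $\iota\Psi$ to $\iota(\ee^{\ima t\,\cdot}\Psi)\in\iota\Hbose^1$.

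I expect the main obstacle to be this last paragraph: the functional equation by itself permits an independent phase jump at $p=0$ (the two constants $C,C''$) and either sign of $\alpha$, and only the analyticity encoded in ``inner'' excludes these. Making the Fourier-support argument rigorous at the level of tempered distributions --- or, equivalently, arguing via Paley--Wiener that $\f(P_1)$ preserves $L^2(\RR_+)\subset L^2(\RR)$ in the required way only for a single exponential with $t\geq 0$ --- is where the real content lies; the reduction to the one-particle equation and its Cauchy-type solution are routine by comparison.
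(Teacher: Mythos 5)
Your proof is correct and follows the same overall strategy as the paper: reduce the invariance of $\iota\Hbose^1$ to the pointwise functional equation $\f(p)\overline{\f(-q)}=\widetilde\f(p+q)$ in the $(1,1)$-sector, force $\f$ to be exponential, and invoke innerness to pin down $t\ge 0$. The one step where you genuinely diverge is the solution of the functional equation. The paper evaluates at $q=0$ and $p=0$, which ties the two half-lines together through the single constant $\f(0)$ but requires its parenthetical appeal to analytic continuation to justify evaluating an a.e.\ identity at a boundary point; your Pexider-type difference argument $A(p+h)/A(p)=B(q+h)/B(q)=\chi(h)$ avoids boundary evaluation entirely, at the price of leaving two a priori independent phases $C$, $C''$ on $\RR_+$ and $\RR_-$ and an undetermined sign of $\alpha$. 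You then correctly eliminate both with the Fourier-support characterization of inner functions: the $\sign$-term contributes a principal-value kernel with full support, forcing $C=C''$, and the remaining $\delta(\xi-\alpha)$ forces $\alpha\ge 0$ --- a step the paper does not need in this form because its evaluation at the origin already identifies the constants, and it disposes of the sign by simply observing that $\ee^{\ima\k p}$ is inner only for $\k\ge 0$. Your route is measure-theoretically cleaner where the paper is slightly informal, but pays for it with the extra distributional argument; both end at the same identification $V_\f=V(\theta)T(t)$, and your converse direction matches the paper's.
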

\begin{proof}
    The translations commute with the gauge action and it follows immediately that they leave $\Hbose^1$ invariant.
    We note that $\f(p)\overline{\f(-q)}\Psi(p+q)$ belongs to $\Hbose^1$ only if it can be written
    as a function of $g(p+q)$. This means that $\f(p)\overline{\f(-q)}=\widetilde\f(p+q)$ for $p,q \ge 0$, where $\widetilde\f$ is another
    function.
Then, putting $q = 0$ and $p = 0$ respectively,
we see that $\f(p)\overline{\f(0)} = \widetilde\f(p)$ for $p \ge 0$ and 
$\f(0)\overline{\f(-q)} = \widetilde\f(q)$ for $q \ge 0$, in particular
$\widetilde\f(0) = 1$. Multiplying the each side of these equations, one sees that
$\widetilde\f(p+q) = \widetilde\f(p)\widetilde\f(q)$ because $|\f(0)| = 1$. Then it follows that
$\widetilde\f(p) = \ee^{\ima \k p}$ for some $\k \ge 0$,
and $\f(p) = \ee^{\ima (\k p+\theta)}$
for some $\theta \in \RR$ (in fact, the arguments here should be
treated with care because the relation is given only almost everywhere,
but both $\f$ and $\check \f$ analytically continue and
in the domain of analyticity it holds everywhere).

Such a $\f$ is a Longo-Witten unitary only for $\k\geq 0$.
The constant factor $\ee^{\ima \theta}$ corresponds to the factor
$V(\theta)$.
\end{proof}

\begin{theorem}\label{th:longo-witten-unitaries} Let $\f$ be an inner function as above.
The endomorphism implemented by the second quantization $V_\f$ of the operator constructed above
restricts to the $\uone$-current subnet.
The restriction cannot be implemented by any second quantization operator
if $\f(p) \neq \ee^{\ima (\k p+\theta)}$.
\end{theorem}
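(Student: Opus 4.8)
The statement splits into an easy existence half and a rigidity half, and the plan is to dispatch the first directly from Proposition~\ref{pr:restricting-endomorphisms} and to reduce the second to the one-particle criterion of Lemma~\ref{lem:LW1p}.

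For the restriction claim, recall that $V_\f=\Lambda(\f(P_1))$ was constructed so that $\f(P_1)$ has the matrix form commuting with the real rotations implementing the gauge group on $\HferC^1$; by functoriality of second quantization $V_\f$ then commutes with the gauge unitaries $V(\theta)$. Since $V_\f$ is a Longo-Witten unitary of $\FerC$, Proposition~\ref{pr:restricting-endomorphisms} applies and gives that $\Ad V_\f$ restricts to a Longo-Witten endomorphism of $\FerC^{\uone}\cong\u1net$. The same proposition shows that $V_\f$ commutes with the projection onto $\Hbose=\H_{\,\cdot\,,0}$ and fixes $\Omega_0$, so the restricted endomorphism $\rho:=\Ad V_\f|_{\u1net(\RR_+)}$ is implemented on $\Hbose$ by $V_\f|_{\Hbose}$.

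For the rigidity half I would argue by contradiction, assuming $\rho$ is implemented by a second quantization operator $\ee^U=\oplus_{n=0}^\infty U^{\otimes n}$ on $\Hbose=\ee^{\Hbose^1}$. The only structural input needed is that any such operator fixes the vacuum and maps each $n$-particle sector into itself, in particular preserving the one-particle space $\Hbose^1$. Since $V_\f|_{\Hbose}$ likewise fixes $\Omega_0$ and implements $\rho$, for every $x\in\u1net(\RR_+)$ we get $V_\f x\Omega_0=\rho(x)\Omega_0=\ee^U x\Omega_0$; by the Reeh-Schlieder cyclicity of $\Omega_0$ for $\u1net(\RR_+)$ the vectors $x\Omega_0$ are dense in $\Hbose$, so $V_\f|_{\Hbose}=\ee^U$ as operators, and consequently $V_\f$ preserves the bosonic one-particle space $\Hbose^1$.

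The endgame is to feed this into Lemma~\ref{lem:LW1p}. Under the fermion-boson identification at the end of Section~\ref{u1-current-subnet}, $\Hbose^1$ is precisely $\iota\Hbose^1\subset\H_{1,1}$, on which $V_\f$ acts by multiplication with $\f(p)\overline{\f(-q)}$; thus $V_\f(\iota\Hbose^1)\subseteq\iota\Hbose^1$, and Lemma~\ref{lem:LW1p} forces $V_\f=V(\theta)T(t)$ with $t\ge 0$, i.e. $\f(p)=\ee^{\ima(\k p+\theta)}$, contradicting the hypothesis. As a consistency check, in the excluded case the gauge factor acts trivially on the charge-zero space $\Hbose$ and $V_\f|_{\Hbose}$ collapses to the translation $T(\k)$, a genuine second quantization operator, so the dichotomy is sharp. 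I do not expect a real obstacle here: the entire difficulty was already absorbed into Lemma~\ref{lem:LW1p}, and the only point demanding care is the bookkeeping identifying the bosonic $n=1$ sector with $\iota\Hbose^1$ compatibly with the action of $V_\f$, after which the global non-implementability statement collapses to the purely one-particle criterion.
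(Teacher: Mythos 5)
Your proposal is correct and follows essentially the same route as the paper: the restriction claim via Proposition~\ref{pr:restricting-endomorphisms}, and the non-implementability claim by noting that a second quantization implementer would have to agree with $V_\f|_{\Hbose}$ (the paper invokes uniqueness of Longo-Witten implementers up to scalar, you pin the scalar directly via the vacuum and Reeh--Schlieder), forcing $V_\f$ to preserve $\iota\Hbose^1$ and thereby contradicting Lemma~\ref{lem:LW1p}. This is only a cosmetic variation on the paper's argument.
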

\begin{proof}
The operator $V_\f$ restricts to the subnet $\u1net$ by the general
argument in Proposition \ref{pr:restricting-endomorphisms}.
It cannot be implemented by a second quantization operator,
since any second quantization operator preserves the particle number,
while $V_\f$ does not for non-exponential $\f$ as we saw above,
and a Longo-Witten endomorphism is uniquely
implemented up to scalar (see Section \ref{fermi-nets}).
\end{proof}

\begin{remark} By the construction in \cite{LW11}, each unitary $V=V_\f|_{\Hbose}$ related to an
    inner function $\f$ from above gives rise to a local, 
    time-translation covariant 
    net of von Neumann algebras on the Minkowski half-space $M_+=\{(t,x)\in \RR^2:x>0\}$. This
    net is associated with the $\uone$-current net 
    $\u1net$ and defined by 
    $\u1net_V(O) = \u1net(I_1) \vee V\u1net(I_2)V^\ast$, where $O=I_1\times I_2=\{(t,x)\in\RR^2:t-x\in I_1, t+x\in I_2\}$ is a double cone with  
    $\overline O \subset M_+$ corresponding uniquely to the two intervals $I_1$ and $I_2$ with disjoint  closures.
    In the case where $\f$ is not exponential $V_\f$ does not come from second quantization---in contrast to the unitaries
    constructed by Longo and Witten in \cite{LW11}---and therefore gives new examples.
\end{remark}

\section{Interacting wedge-local net with particle production}\label{wedge-local-nets}

\subsection{Construction of scattering operators}\label{scattering-operator}
In the previous section we saw that, in the basis
$\{e_n+e_{-n}, e_n-e_{-n}\}$ the matrix operator
$\left(\begin{matrix}a(P) & \ima b(P) \\ -\ima b(P) & a(P)\end{matrix}\right)$
implements a Longo-Witten endomorphism if $a$ is symmetric and $b$ is antisymmetric,
and after the simultaneous diagonalization it becomes
$\left(\begin{matrix}\f(P) & 0 \\ 0 & \check{\f}(P)\end{matrix}\right)$ where $\f$ is an
inner function and $\check{\f}(p) = \overline{\f(-p)}$
(note that if $\f$ extends to an analytic function $\f(z)$ on $\HH$,
then $\check{\f}(z) = \overline{\f(-\overline{z})}$ also extends to $\HH$, hence
$\check{\f}$ is again an inner function). By the same argument one sees that
$\left(\begin{matrix}\check{\f}(P) & 0 \\ 0 & \f(P)\end{matrix}\right)$ implements an endomorphism
since $\check{\check{\f}} = \f$.

With respect to the basis after diagonalization, we split the Hilbert space
$\H_{\FerC}^1 =: \H_+ \oplus \H_-$ and the generator
of translation $P_1 =: P_+\oplus P_-$.
Then the tensor product space can be written as follows:
\[
\H_{\FerC}^1\otimes \H_{\FerC}^1 = \left(\H_+\otimes\H_+\right) \oplus \left(\H_+\otimes\H_-\right) \oplus
\left(\H_-\otimes\H_+\right) \oplus \left(\H_-\otimes\H_-\right) \,.
\]
According to this decomposition into a direct sum of four subspaces,
we define an operator
\[
M_\f := \f(P_+\otimes P_+) \oplus \check{\f}(P_+\otimes P_-) \oplus
\check{\f}(P_-\otimes P_+) \oplus \f(P_-\otimes P_-)\,.
\]
Then this restricts to the subspace $\H_{\FerC}^1\otimes \H_+ = \left(\H_+\oplus\H_-\right)\otimes \H_+$
and it is $\f(P_+\otimes P_+)\oplus\check{\f}(P_-\otimes P_+)$, or we can decompose it
with respect to the spectral measure of $P_+$:
\[
\int_{\RR_+} \left(\begin{matrix}\f(pP_+) & 0 \\ 0 & \check{\f}(pP_-)  \end{matrix}\right)\otimes \dE_+(p).
\]
Similarly, the restriction to $\H_{\FerC}^1\otimes \H_-$ is written as
\[
\int_{\RR_+} \left(\begin{matrix}\check{\f}(pP_+) & 0 \\ 0 & \f(pP_-)  \end{matrix}\right)\otimes \dE_-(p).
\]
Using the two-point set $\ZZ_2 = \{+,-\}$ we define
\[
\f_+(p,+) := \f(p),\,\,\,\, \f_+(p,-) = \check{\f}(p),\,\,\,\,
\f_-(p,+) = \check{\f}(p), \,\,\,\, \f_-(p,-) = \f(p).
\]
By defining the spectral measure $E_1 = E_+ \oplus E_-$ on $\H^1$, $M_\f$ can be simply written as
\[
M_\f = \int_{\RR_+ \times \ZZ_2} \left(\begin{matrix}\f_+(pP_+,\i) & 0 \\ 0 & \f_-(pP_-,\i)  \end{matrix}\right)\otimes \dE_1(p,\i),
\]
where $\i = \pm$.

As in \cite{Tanimoto11-3}, we construct the scattering matrix first on the unsymmetrized
Fock space, then restrict it to the antisymmetric space.
For an operator $A$ on $\H_{\FerC}^1\otimes\H_{\FerC}^1$,
we denote by $A^{m,n}_{i,j}$ on
$(\H_{\FerC}^1)^{\otimes m}\otimes(\H_{\FerC}^1)^{\otimes n}$
the operator which acts only on the $i$-th factor
in $(\H_{\FerC}^1)^{\otimes m}$ and $j$-th factor in $(\H_{\FerC}^1)^{\otimes n}$
as $A$. As a convention, $A^{m,n}_{i,j}$ equals to the identity
operator if $m$ or $n$ is $0$.
 Let us denote simply
$\widetilde{\f}(p,\i) := \left(\begin{matrix}\f_+(p,\i) & 0 \\ 0 & \f_-(p,\i)  \end{matrix}\right)$ and
$\widetilde{\f}(P_1,\i) := \left(\begin{matrix}\f_+(P_+,\i) & 0 \\ 0 & \f_-(P_-,\i)  \end{matrix}\right)$.
From the observation above, it is straightforward to see that
\[
(M_\f)^{m,n}_{i,j} = \int
\left(\1\otimes\cdots\otimes \underset{i\mbox{-th}}{\widetilde{\f}(p_jP_1,\i_j)} \otimes\cdots \otimes\1\right)
\otimes \dE_1(p_1,\i_1)\otimes\cdots\otimes \dE_1(p_n,\i_n)
\]
(the case where $m$ or $n$ is $0$ is treated separately).
Then we define, as in \cite{Tanimoto11-3},
\begin{eqnarray*}
S_\f^{m,n} &:=& \prod_{i,j} (M_\f)^{m,n}_{i,j} \\
S_\f &:=& \bigoplus_{m,n} S_\f^{m,n}\,.
\end{eqnarray*}
Let $\H_{\FerC}^\sc$ be the unsymmetrized
Fock space based on $\H_{\FerC}^1$.
Note that $S_\f$ is defined on $\H_{\FerC}^\sc\otimes\H_{\FerC}^\sc$,
and it naturally restricts to $\H_{\FerC}\otimes\H_{\FerC}^\sc$,
$\H_{\FerC}^\sc\otimes\H_{\FerC}$ and $\H_{\FerC}\otimes\H_{\FerC}$.
This $S_\f$ will be interpreted as the scattering matrix. In order to confirm this,
we have to take the spectral decomposition of $S_\f$ only with respect to the
right or left component. Namely,
\begin{eqnarray*}
S_\f &:=& \bigoplus_{m,n} \prod_{i,j} (M_\f)^{m,n}_{i,j} \\
&=& \bigoplus_{m,n} \prod_{i,j} \int
\left(\1\otimes\cdots\otimes \underset{i\mbox{-th}}{\widetilde{\f}(p_jP_1,\i_j)} \otimes\cdots \otimes\1\right)
\otimes \dE_1(p_1,\i_1)\otimes\cdots\otimes \dE_1(p_n,\i_n) \\
&=& \bigoplus_{m,n} \int \prod_{i,j}
\left(\1\otimes\cdots\otimes \underset{i\mbox{-th}}{\widetilde{\f}(p_jP_1,\i_j)} \otimes\cdots \otimes\1\right)
\otimes \dE_1(p_1,\i_1)\otimes\cdots\otimes \dE_1(p_n,\i_n) \\
&=& \bigoplus_n \int \bigoplus_m\prod_j
\left(\widetilde{\f}(p_jP_1,\i_j)\right)^{\otimes m}
\otimes \dE_1(p_1,\i_1)\otimes\cdots\otimes \dE_1(p_n,\i_n) \\
&=& \bigoplus_n \int \prod_j \bigoplus_m
\left(\widetilde{\f}(p_jP_1,\i_j)\right)^{\otimes m}
\otimes \dE_1(p_1,\i_1)\otimes\cdots\otimes \dE_1(p_n,\i_n) \\
&=& \bigoplus_n \int \prod_j \Lambda(\widetilde{\f}(p_jP_1,\i_j))
\otimes \dE_1(p_1,\i_1)\otimes\cdots\otimes \dE_1(p_n,\i_n)\,,
\end{eqnarray*}
where the integral and the product commute in the third equality since the spectral measure
is disjoint for different values of $p$'s and $\i$'s,
and the sum and the product commute in the fifth equality
since the operators in the integrand act on mutually disjoint spaces, namely on $(\H_{\FerC}^1)^{\otimes m}\otimes\H_{\FerC}^\sc$
for different $m$. In the final expression,
all operators appearing in the integrand are the second quantization operators,
thus this formula naturally restricts to the partially antisymmetrized space $\H_{\FerC}\otimes\H_{\FerC}^\sc$.

Now we define
\begin{itemize}
\item $\M_\f := \{x\otimes \1, \Ad S_\f(\1\otimes y): x\in\FerC(\RR_-), y\in\FerC(\RR_+)\}''$,
\item $T(t,x) := T_0(\frac{t-x}{\sqrt 2})\otimes T_0(\frac{t+x}{\sqrt 2})$,
\item $\Omega := \Omega_0\otimes \Omega_0$.
\end{itemize}
As the net $\FerC$ is fermionic by nature, the interpretation of the scattering theory
of \cite{Buchholz75} is not clear. Nevertheless, we can show the following by an almost
same proof as in \cite[Lemma 5.2, Theorem 5.3]{Tanimoto11-3}.
\begin{lemma}\label{lm:net-fermionic}
The triple $(\M_\f,T,\Omega)$ is a Borchers triple.
\end{lemma}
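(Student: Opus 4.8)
The plan is to recognize $(\M_\f,T,\Omega)$ as the output of Proposition \ref{pr:general-construction} applied with both fermi nets equal to $\FerC$ and with the unitary $S$ taken to be $S_\f$; since the proof of that proposition uses only the listed hypotheses on $S$ to verify the Borchers-triple axioms, it suffices to check those hypotheses for $S_\f$. Concretely I must verify: (i) $S_\f$ commutes with $T\otimes T$; (ii) $S_\f$ restricts to the identity on $\H_{\FerC}\otimes\Omega_0$ and on $\Omega_0\otimes\H_{\FerC}$; (iii) $x\otimes\1$ commutes with $\Ad S_\f(x'\otimes\1)$ whenever $x\in\FerC(\RR_-)$ and $x'\in\Ad Z(\FerC(\RR_+))$; and (iv) the mirror-image condition that $\Ad S_\f(\1\otimes y)$ commutes with $\1\otimes y'$ for $y\in\FerC(\RR_+)$, $y'\in\Ad Z(\FerC(\RR_-))$. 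Everything else in the Borchers-triple axioms—the spectrum condition $\sp T\subset\overline V_+$, the wedge-endomorphy $\Ad T(a)\M_\f\subset\M_\f$ for $a\in W_\R$, and the uniqueness, cyclicity and separating property of $\Omega$—is then supplied verbatim by the proof of Proposition \ref{pr:general-construction}, since $T$ and $\Omega$ are tensor products of fermionic objects and (i) holds.

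Conditions (i) and (ii) I would read off directly from the spectral form of $S_\f$ derived at the end of Section \ref{scattering-operator}, namely its expression as a direct sum over the right-factor particle number $n$ of direct integrals whose integrands are the second quantization operators $\prod_j\Lambda(\widetilde\f(p_jP_1,\i_j))$ against $\dE_1(p_1,\i_1)\otimes\cdots\otimes\dE_1(p_n,\i_n)$. Each factor $\widetilde\f(p_jP_1,\i_j)$ is a function of the one-particle momentum $P_1$ and the measures $\dE_1$ are spectral measures of $P_1$, so $S_\f$ commutes with $T_0(t)\otimes T_0(s)=\Lambda(\ee^{\ima tP_1})\otimes\Lambda(\ee^{\ima sP_1})$, which is (i). For (ii) I use the stated convention that $(M_\f)^{m,n}_{i,j}$ is the identity whenever $m$ or $n$ vanishes, so that $S_\f^{m,0}=S_\f^{0,n}=\1$; thus $S_\f$ acts trivially on every sector in which one tensor leg carries the vacuum.

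The substance lies in (iii), with (iv) following symmetrically. Fibred over the right-factor spectral data $(p_1,\i_1),\dots,(p_n,\i_n)$, the operator $\Ad S_\f$ acts on the left leg by $\Ad\Lambda(\Phi)$ with $\Phi=\prod_j\widetilde\f(p_jP_1,\i_j)$. Since each $p_j\ge0$ and $\f,\check\f$ are inner, every diagonal entry $\f(p_jP_\pm)$ or $\check\f(p_jP_\pm)$ is again an inner symbol, so $\widetilde\f(p_jP_1,\i_j)$ is a Longo--Witten unitary preserving the standard subspace $K(\RR_+)$; hence so is the product $\Phi$, and $\Lambda(\Phi)$ implements a Longo--Witten endomorphism of $\FerC$ that maps $\FerC(\RR_+)$ into itself. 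Because $\Lambda(\Phi)$ is a second quantization it commutes with $\Gamma=\Lambda(-\1)$ and therefore with $Z$, whence $\Ad\Lambda(\Phi)$ also maps $\Ad Z(\FerC(\RR_+))$ into itself. Consequently the left-leg part of $\Ad S_\f(x'\otimes\1)$ stays fibrewise inside $\Ad Z(\FerC(\RR_+))$, which commutes with $\FerC(\RR_-)$ by the graded locality of $\FerC$ applied to the disjoint half-lines $\RR_-,\RR_+$. As both $x\otimes\1$ and $\Ad S_\f(x'\otimes\1)$ are diagonal in this right-factor spectral decomposition, the fibrewise commutation integrates up to genuine commutation, giving (iii); for (iv) one fibres over the left leg instead and uses that $M_\f$ is invariant under the flip of the two tensor factors.

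I expect the only real obstacle to be the fibre-wise bookkeeping in step (iii): one must justify exchanging the direct integral with the adjoint action, confirm that the rescaled symbols $\f(p_jP_\pm)$ with $p_j\ge0$ really remain inner (so that the Paley--Wiener/Longo--Witten invariance of $K(\RR_+)$ persists), and keep track of the $Z$-twist so that \emph{graded}, rather than ordinary, locality of $\FerC$ is invoked. Once these points are settled the verification of (i)--(iv) is complete and Proposition \ref{pr:general-construction} delivers the Borchers triple.
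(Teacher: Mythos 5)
Your proposal is correct and follows essentially the same route as the paper's own proof: reduce to Proposition \ref{pr:general-construction}, read off the commutation with $T$ and the pointwise invariance of the vacuum legs from the spectral form of $S_\f$, and obtain the commutation conditions from the disintegration of $\Ad S_\f(x'\otimes\1)$ into fibrewise Longo--Witten second-quantization unitaries together with the commutation of $Z_0\otimes\1$ with $S_\f$ and graded locality. The points you flag as potential obstacles (exchanging the direct integral with $\Ad$, innerness of the rescaled symbols for $p_j\ge 0$, and the $Z$-twist bookkeeping) are exactly the ones the paper handles.
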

\begin{proof}
To apply Proposition \ref{pr:general-construction}, it is immediate
that $S_\f$ commutes with translation since it is defined through
the spectral measure as above. It preserves $\H_{\FerC}\otimes \Omega_0$ and
$\Omega_0\otimes\H_{\FerC}$ pointwise, since these subspaces correspond
to the case where $m$ or $n$ is $0$ in the above decomposition
and $S_\f$ acts as the identity operator by definition. What remains to
show is the commutation property.

As we saw above, the operator $S_\f$ can be written as
\[
S_\f = \bigoplus_n \int \prod_j \Lambda(\widetilde{\f}(p_jP_1,\i_j)) \otimes \dE_1(p_1,\i_1)\otimes\cdots\otimes \dE_1(p_n,\i_n)\,.
\]
The point is that the operators which appear in the integrand
implement Longo-Witten endomorphisms as we saw above since
$p_j \ge 0$ in the support of the integration.

Let $x'\in \FerC(\RR_+)$ and consider $x'\otimes\1$ as an operator on
$\H_{\FerC}\otimes\H_{\FerC}^\sc$. We have
\[
\Ad S_\f (x'\otimes \1) =
\bigoplus_n \int \Ad \left(\prod_j \Lambda(\widetilde{\f}(p_jP_1,\i_j))\right)(x')
\otimes \dE_1(p_1,\i_1)\otimes\cdots\otimes \dE_1(p_n,\i_n)\,.
\]
Although this formula is not closed on $\H_{\FerC}\otimes\H_{\FerC}$,
the left hand side obviously restricts there. One sees that
the integrand remains in $\FerC(\RR_+)$.

Recall the operator $Z_0$ which gives the graded locality of $\FerC$.
One has to remind that $Z_0 = \frac{\1-\ima\G_0}{1-\ima}$ where $\G_0 = \Lambda(-\1)$,
hence $Z_0$ commutes with any second quantization operator. Then by the disintegration above
(and the corresponding disintegration with respect to the left component),
it is easy to see that $Z_0\otimes\1$ commutes with $S_\f$. 

Let us check the commutation property of the assumptions in Proposition \ref{pr:general-construction}.
Note that $\Ad Z_0(x)\otimes \1$ and $\Ad Z_0(x) \in \FerC(\RR_+)'$ for $x \in \FerC(\RR_-)$.
Since $Z_0\otimes\1$ and $S_\f$ commute as we saw above, to prove the first commutation relation,
it is enough to show that $[\Ad Z_0(x)\otimes\1, \Ad S_\f (x'\otimes\1)] = 0$ for
$x\in\FerC(\RR_-)$ and $x'\in\FerC(\RR_+)$. As operators acting on
$\H_{\FerC}\otimes\H_{\FerC}^\sc$, this is done by the above
disintegration of $\Ad S_\f(x'\otimes \1)$. Then both operators
naturally restrict to $\H_{\FerC}\otimes\H_{\FerC}$, and we obtain the claim
(cf.\! \cite[Lemma 5.2, Theorem 5.3]{Tanimoto11-3}).
The second commutation relation for Proposition \ref{pr:general-construction}
can be proven analogously.
\end{proof}
Finally we arrive at a new family of interacting Borchers triples with 
asymptotic algebra $\u1net\otimes \u1net$.
\begin{theorem}\label{th:net-bosonic}
Let us define
\begin{itemize}
\item $\N_\f := \{x\otimes \1, \Ad S_\f(\1\otimes y): x\in\FerC^{\uone}(\RR_-), y\in\FerC^{\uone}(\RR_+)\}''$,
\item $T(t,x) := T_0(\frac{t-x}{\sqrt 2})\otimes T_0(\frac{t+x}{\sqrt 2})$,
\item $\Omega := \Omega_0\otimes \Omega_0$.
\end{itemize}
Then the triple $(\N_\f,T,\Omega)$, restricted to $\overline{\N_\f\Omega}$,
is an asymptotically complete, interacting Borchers triple with
the asymptotic algebra $\u1net\otimes\u1net$ and scattering operator $S_\f|_{\overline{\N_\f\Omega}}$.
It also holds that
$\overline{\N_\f\Omega} = \overline{\u1net(I_+)\otimes\u1net(I_-)\Omega}$
for arbitrary intervals $I_+,I_-$.
\end{theorem}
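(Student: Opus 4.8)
The plan is to recognise $(\N_\f,T,\Omega)$ as exactly the fixed-point triple produced by Proposition \ref{pr:asymptotic-completeness-fixed-point}, applied with $\A_0 = \FerC$, with $G = \uone$ acting by the gauge symmetry $V(\slot)$, and with $S = S_\f$. The Borchers-triple hypothesis on $(\M_\f,T,\Omega)$ is already furnished by Lemma \ref{lm:net-fermionic}, and the identification $\FerC^{\uone}\cong\u1net$ has been established in Section \ref{u1-current-subnet}. Thus the bulk of the work reduces to verifying the single remaining hypothesis of Proposition \ref{pr:asymptotic-completeness-fixed-point}, namely that $S_\f$ commutes with $V(\theta)\otimes V(\theta')$ for all $\theta,\theta'$; granting this, the proposition delivers at once that the restriction of $(\N_\f,T,\Omega)$ to $\overline{\N_\f\Omega}$ is an asymptotically complete Borchers triple with asymptotic algebra $\FerC^{\uone}\otimes\FerC^{\uone} = \u1net\otimes\u1net$ and scattering operator $S_\f|_{\overline{\N_\f\Omega}}$.

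For the gauge commutation I would use the disintegrated form
\[
S_\f = \bigoplus_n \int \prod_j \Lambda(\widetilde{\f}(p_jP_1,\i_j)) \otimes \dE_1(p_1,\i_1)\otimes\cdots\otimes \dE_1(p_n,\i_n)
\]
derived just before Lemma \ref{lm:net-fermionic}. On the left tensor factor the integrand is a product of second quantization operators $\Lambda(\widetilde{\f}(p_jP_1,\i_j))$, and by the very construction of Section \ref{endomorphisms} the one-particle operator $\f(P_1)$ was chosen in the commutant of the gauge rotation; hence each $\Lambda(\widetilde{\f}(\slot))$ commutes with $V(\theta)$. On the right tensor factor the gauge action preserves the decomposition $\HferC^1 = \H_+\oplus\H_-$ of Section \ref{scattering-operator} and, after the diagonalization of Section \ref{endomorphisms}, acts there by a scalar phase on each summand; hence $V(\theta')$ commutes with $P_1$ and with every spectral projection $\dE_1(\slot,\i)$. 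Combining the two observations shows that $S_\f$ commutes with $V(\theta)\otimes V(\theta')$, as required.

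The identification $\overline{\N_\f\Omega} = \overline{\u1net(I_+)\otimes\u1net(I_-)\Omega}$ then follows as in the proof of Proposition \ref{pr:asymptotic-completeness-fixed-point}: the fixed-point space $\H_0^G$ equals $\Hbose$ and coincides with $\overline{\FerC^{\uone}(I)\Omega_0}$ for every interval $I$, by the Reeh--Schlieder averaging argument of Section \ref{subnets}. Consequently $\overline{\N_\f\Omega} = \Hbose\otimes\Hbose = \overline{\u1net(I_+)\otimes\u1net(I_-)\Omega}$ independently of the choice of $I_+,I_-$.

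It remains to prove that the triple is interacting, i.e.\ that $S_\f$ does not act as the identity on $\overline{\N_\f\Omega} = \Hbose\otimes\Hbose$; this I expect to be the main obstacle, since everything preceding is a formal application of the machinery of Section \ref{subnets}. The plan is to test $S_\f$ on a bosonic $1{+}1$-particle vector $\iota\xi\otimes\iota\eta$ with $\xi,\eta\in\Hbose^1$, where $\iota$ is the embedding of Lemma \ref{lem:LW1p}. Since $\iota\xi,\iota\eta$ lie in $\H_{1,1}\subset\Lambda^2\HferC^1$, the relevant block is $S_\f^{2,2}$, a product of four factors $M_\f$ which can be evaluated explicitly from the formula for $M_\f$ and the definition of $\iota$. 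The crucial point is that, because $\f$ is non-exponential, the one-particle unitary $\f(P_1)$ fails to preserve $\iota\Hbose^1$ by Lemma \ref{lem:LW1p}; tracking this failure through the second-quantized action shows that $S_\f(\iota\xi\otimes\iota\eta)$ acquires components in bosonic sectors of higher particle number, so that $S_\f\neq\1$ on $\Hbose\otimes\Hbose$. (Conversely, were $S_\f$ the identity there, $\f(P_1)$ would be forced to respect the bosonic one-particle structure, contradicting Lemma \ref{lem:LW1p}.) Hence the scattering operator is nontrivial and $(\N_\f,T,\Omega)$ is interacting; carefully controlling this particle-number mixing, rather than the invocation of Proposition \ref{pr:asymptotic-completeness-fixed-point}, is the real content of the statement.
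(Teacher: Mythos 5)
Your proposal follows the paper's own route: the paper likewise reduces the theorem to Proposition \ref{pr:asymptotic-completeness-fixed-point}, invoking Lemma \ref{lm:net-fermionic}, the identification $\FerC^{\uone}\cong\u1net$ from Section \ref{u1-current-subnet}, and the commutation of $S_\f$ with the product gauge action, which you verify in exactly the way the construction suggests. Your closing discussion of interaction is essentially the content the paper defers to Section \ref{particle-production} (Proposition \ref{pr:production}); the only caveat is that nontriviality of $S_\f$ on $\Hbose\otimes\Hbose$ should not be tied exclusively to particle-number mixing, since for a non-constant exponential $\f$ the restricted S-matrix is still nontrivial (it equals $\ee^{\ima\k P\otimes P}$ on the $1{+}1$-particle space) even though no mixing occurs.
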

\begin{proof}
Substantial arguments are already done: In Lemma \ref{lm:net-fermionic}
we constructed Borchers triples with $\FerC\otimes\FerC$ as the
asymptotic algebra. We have seen in Section \ref{u1-current-subnet}
the $\uone$-current net $\u1net$ is the fixed point subnet
of $\FerC$ with respect to the action of $\uone$. From the construction
in Section \ref{scattering-operator} and Theorem \ref{th:longo-witten-unitaries},
it is easy to see that $S_\f$ commutes with the product action of
the inner symmetries. Then all the statements of the Theorem
follow from the general consideration of
Proposition \ref{pr:asymptotic-completeness-fixed-point}.
\end{proof}

\subsection{Action of the S-matrix on the 1+1 particle space}\label{particle-production}
In this Section we want to analyze the action of the S-matrix of the models
constructed in Section \ref{scattering-operator} on the 
1+1 particle space $\Hbose^1 \otimes \Hbose^1$, i.e. one left and one right moving particle,
where we use the word particle in the sense of Fock space excitations.
We note that on the $n$+0 and 0+$n$ particle spaces $\H_n\otimes \CC \Omega_0$
and $\CC\Omega_0 \otimes\H_n$, respectively, the S-matrix $S$ acts trivially. 
A typical vector in $\Hbose^1 \otimes \Hbose^1$ is of the form 
$\Psi:=J(f)\Omega_0 \otimes J(g)\Omega_0$ which we express as the function 
$\Psi(p,\bar p) =\hat f(p)\hat g(\bar p)$.
The embedding  $\iota:L^2(\RR_+, p\dd p)\otimes L^2(\RR_+,\bar p \dd \bar p )
\cong \Hbose^1\otimes \Hbose^1 \hookrightarrow \H_{1,1}\otimes \H_{1,1} \subset \HferC \otimes \HferC$ is 
given by $\iota(\Psi)_{1,1;1,1}(p,q,\bar p,\bar q) = \frac1{(2\pi)^2}\Psi(p+q,\bar p + \bar q)$. 
We have an analogue of Lemma \ref{lem:LW1p}
\begin{proposition}\label{pr:production}
    Let $\f$ be some inner function. 
    The unitary $S_\f$ satisfies  $S_\f(\Hbose^1\otimes\Hbose^1) \subset \Hbose^1\otimes \Hbose^1$ 
    if and only if $\f(p) = \ee^{\ima(kp+\theta)}$.
\end{proposition}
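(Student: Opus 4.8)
The plan is to reduce the statement to an explicit functional equation for $\f$, in the same spirit as Lemma \ref{lem:LW1p}, the new feature being that two momentum variables are entangled by the embedding $\iota$. First I would make the action of $S_\f$ on $\iota(\Hbose^1\otimes\Hbose^1)$ completely explicit. Since $S_\f$ is block-diagonal in its diagonalised form and preserves each charge sector $\H_{m,n}$ in either tensor factor, it restricts to $\H_{1,1}\otimes\H_{1,1}$. Recalling that in this sector the $\psi$-momentum lives in $\H_+$ (where $\f$ acts) and the $\bar\psi$-momentum in $\H_-$ (where $\check\f$ acts), a term-by-term evaluation of $S_\f^{2,2}=\prod_{i,j}(M_\f)^{2,2}_{i,j}$ shows that $S_\f$ acts on $\iota(\Psi)$ by multiplication with
\[
m(p,q,\bar p,\bar q) \;=\; \f(p\bar p)\,\check\f(p\bar q)\,\check\f(q\bar p)\,\f(q\bar q),
\]
where $p,q$ (resp.\ $\bar p,\bar q$) are the $\psi$- and $\bar\psi$-momenta in the left (resp.\ right) factor. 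I expect the only delicate point here to be checking that all four antisymmetrised terms produce the same scalar $m$; this follows because $M_\f$ is diagonal in the $\H_+\oplus\H_-$ splitting, so permuting the slots permutes the factors without changing their product.

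Next I would translate the inclusion $S_\f(\Hbose^1\otimes\Hbose^1)\subset\Hbose^1\otimes\Hbose^1$ into a condition on $m$. Under $\iota$ the image consists exactly of the functions depending only on the sums $p+q$ and $\bar p+\bar q$. Since $|m|=1$ almost everywhere, the vector $m\cdot\iota(\Psi)$ lies again in $\iota(\Hbose^1\otimes\Hbose^1)$ for every $\Psi$ if and only if $m$ is (almost everywhere) a function of $(p+q,\bar p+\bar q)$ alone; because $\f$ and $\check\f$ are boundary values of analytic functions, this may then be promoted to an identity holding throughout the domain of analyticity. The ``if'' direction is now a direct computation: for $\f(p)=\ee^{\ima(kp+\theta)}$ one has $\check\f(p)=\ee^{\ima(kp-\theta)}$ on $\RR_+$, the four constant phases cancel, and $m=\ee^{\ima k(p+q)(\bar p+\bar q)}$, which indeed depends only on the two sums.

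For the ``only if'' direction I would freeze $\bar p,\bar q>0$ and factor $m = H_{\bar p,\bar q}(p)\,\check H_{\bar p,\bar q}(q)$, where $H_{\bar p,\bar q}(x):=\f(x\bar p)\check\f(x\bar q)$ is again inner (a product of inner functions, rescaling by a positive factor preserving $\HH$) and where one checks $\check\f(q\bar p)\f(q\bar q)=\overline{H_{\bar p,\bar q}(-q)}=\check H_{\bar p,\bar q}(q)$. The hypothesis says that $H_{\bar p,\bar q}(p)\,\check H_{\bar p,\bar q}(q)$ depends only on $p+q$, which is precisely the situation of Lemma \ref{lem:LW1p}; running that argument (set $q=0$, then $p=0$, and use $H_{\bar p,\bar q}(0)=|\f(0)|^2=1$) forces $H_{\bar p,\bar q}(x)=\ee^{\ima\kappa(\bar p,\bar q)x}$ with $\kappa(\bar p,\bar q)\ge 0$. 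Finally I would recover $\f$ itself by letting $\bar q\to 0^+$: by continuity $\f(x\bar p)\,\overline{\f(0)}=\lim_{\bar q\to0}\ee^{\ima\kappa(\bar p,\bar q)x}$, and a pointwise limit of characters that stays continuous is itself a character, so $\f(x\bar p)=\f(0)\,\ee^{\ima\kappa_0(\bar p)x}$; taking $\bar p=1$ yields $\f(p)=\ee^{\ima(kp+\theta)}$ with $k=\kappa_0(1)\ge 0$ and $\ee^{\ima\theta}=\f(0)$. I expect the main obstacle to be the rigorous justification of this last reduction --- both the measure-theoretic ``function of the sums'' step and the passage to the pointwise multiplicative functional equation --- which is exactly where the analytic continuation of $\f$ and $\check\f$ must be invoked, just as in Lemma \ref{lem:LW1p}.
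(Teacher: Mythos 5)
Your proposal is correct and follows essentially the same route as the paper: compute the explicit multiplier $\f(p\bar p)\,\check\f(p\bar q)\,\check\f(q\bar p)\,\f(q\bar q)$ on the embedded $1{+}1$ space, demand that it depend only on $(p+q,\bar p+\bar q)$, and reduce to the functional equation of Lemma \ref{lem:LW1p}. The only difference is cosmetic: the paper specializes immediately to $\bar p=1$, $\bar q=0$ to land on $\f(p)\check\f(q)=\widetilde\f(p+q,1)$, whereas you run the Lemma \ref{lem:LW1p} argument for each frozen $(\bar p,\bar q)$ and then take $\bar q\to 0^+$, which is slightly longer but reaches the same conclusion with the same caveats about almost-everywhere identities and analytic continuation.
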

\begin{proof}
    The action of $S_\f$ on $\Psi \in \H_1\otimes \H_1$ is given by 
    \[
        S_\f\Psi(p+q,\bar p+ \bar q) = \f(p\cdot \bar p) \check\f(q \cdot \bar p)
        \check\f(p \cdot \bar q) \f(q\cdot \bar q) \Psi(p+q,\bar p+ \bar q)
    \]
    which is again in $\Hbose^1\otimes \Hbose^1$ if it can be written as a function 
    $\tilde \Psi(p+q,\bar p + \bar q)$, in particular 
    if $\f(p\cdot \bar p) \check\f(q \cdot \bar p)
        \check\f(p \cdot \bar q) \f(q\cdot \bar q) = \widetilde \f(p+q,\bar p + \bar q)$. Setting $\bar p = 1$ and $\bar q = 0$, we have
        $\f(p) \check\f(q) = \widetilde\f(p+q,1)$.
        The rest follows as Lemma \ref{lem:LW1p}.

\end{proof}
\begin{remark} In the case $\varphi(p)=\ee^{\ima\kappa p}$,
one gets the models obtained in \cite{DT11-1} using warped convolution.
\end{remark}

\begin{proposition} Let $e$ be the projection on $\Hbose^1\otimes \Hbose^1$, then 
    $eS_\f e = \tilde\f(P\otimes P)$, where $\tilde\f$ 
    is boundary value of an analytic function in $\mathbb H$ with $|\tilde\f(p)|\leq 1$ and $P$ is the generator of translation restricted to
    the one-particle space (which gives rise the irreducible standard pair).
\end{proposition}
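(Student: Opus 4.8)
The plan is to compute the compression $eS_\f e$ as a sesquilinear form on $\Hbose^1\otimes\Hbose^1$ by pulling it through the isometric embedding $\iota$ and then inserting the explicit one-particle action of Proposition \ref{pr:production}. First I would record that $\iota$ is isometric: for $\Psi\in\Hbose^1$ the substitution $P=p+q$ gives $\|\iota\Psi\|^2=\int_{\RR_+^2}|\Psi(p+q)|^2\dd p\,\dd q=\int_0^\infty|\Psi(P)|^2P\dd P=\|\Psi\|^2$, so $e=(\iota\otimes\iota)(\iota\otimes\iota)^\ast$ and $eS_\f e$ is unitarily equivalent to $(\iota\otimes\iota)^\ast S_\f(\iota\otimes\iota)$. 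This makes sense because $S_\f$ is assembled from number-preserving second quantization operators that are block-diagonal in the $\psi/\bar\psi$ splitting, hence preserve $\H_{1,1}\otimes\H_{1,1}$.

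Next, for $\Phi,\Psi\in\Hbose^1\otimes\Hbose^1$ I would write out $\<\iota\Phi,S_\f\iota\Psi\>$ as a fourfold integral in the fermionic momenta $p,q,\bar p,\bar q$; the $(2\pi)^{\pm}$ constants cancel and, by Proposition \ref{pr:production}, the integrand is $\overline{\Phi(p+q,\bar p+\bar q)}\,\f(p\bar p)\check\f(q\bar p)\check\f(p\bar q)\f(q\bar q)\,\Psi(p+q,\bar p+\bar q)$. The key step is a two-stage change of variables: set $P=p+q$, $\bar P=\bar p+\bar q$ to isolate the outer integral, then rescale the internal variables by $p=Ps$, $\bar p=\bar P\bar s$ with $s,\bar s\in[0,1]$. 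Every argument of $\f,\check\f$ then becomes $P\bar P$ times a function of $(s,\bar s)$ alone, while the Jacobian produces a factor $P\bar P$. The internal integral therefore equals $P\bar P\cdot\tilde\f(P\bar P)$, with
\[
\tilde\f(u):=\int_0^1\!\!\int_0^1 \f(us\bar s)\,\check\f(u(1-s)\bar s)\,\check\f(us(1-\bar s))\,\f(u(1-s)(1-\bar s))\dd s\,\dd\bar s
\]
depending only on the product $u=P\bar P$. The surviving $P\bar P$ is exactly the weight of the measure on $\Hbose^1\otimes\Hbose^1$, so $\<\Phi,(eS_\f e)\Psi\>=\int\overline{\Phi}\,\tilde\f(P\bar P)\,\Psi\,P\bar P\dd P\,\dd\bar P$, that is $eS_\f e=\tilde\f(P\otimes P)$.

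It then remains to establish the two properties of $\tilde\f$. For the modulus bound, either estimate the defining integral directly by $\int_0^1\int_0^1 1\,\dd s\,\dd\bar s=1$ using $|\f|,|\check\f|\le1$, or argue abstractly that $eS_\f e$ is a compression of the unitary $S_\f$, hence a contraction, so its multiplier satisfies $|\tilde\f|\le1$ almost everywhere. For analyticity I would use that $\f$ is inner and that $\check\f(z)=\overline{\f(-\bar z)}$ is again bounded analytic on $\HH$: for fixed $(s,\bar s)$ in the open square each of the maps $u\mapsto us\bar s$, etc., sends $\HH$ into $\HH$, so the integrand is analytic in $u\in\HH$ and uniformly bounded by $1$; Morera's theorem together with dominated convergence transfers analyticity to $\tilde\f$, whose boundary values on $\RR_+$ recover the multiplier computed above.

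The main obstacle is the middle step: recognizing that, after the reduction $P=p+q$, $\bar P=\bar p+\bar q$, the rescaling $p=Ps$, $\bar p=\bar P\bar s$ collapses the internal integral into a function of the single product $P\bar P$ multiplied by precisely the Jacobian needed to match the Hilbert-space weight $P\bar P\dd P\,\dd\bar P$. Everything else — the isometry of $\iota$, invariance of $\H_{1,1}\otimes\H_{1,1}$ under $S_\f$, and the routine contraction and Morera arguments — is bookkeeping arranged around this computation.
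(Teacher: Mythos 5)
Your proposal is correct and follows essentially the same route as the paper: both compute the compression of $S_\f$ to the embedded $1{+}1$ particle space as a multiplication operator and identify the resulting function of $P\bar P$; your rescaling $p=Ps$, $\bar p=\bar P\bar s$ just makes explicit the homogeneity that the paper dismisses as ``easy to check'' in passing from $\f'(p,q)$ to $\tilde\f(pq)$. Your Morera argument for analyticity of $\tilde\f$ is a welcome addition, since the paper's proof asserts the contraction bound but leaves the analyticity claim of the statement unaddressed.
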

\begin{proof}
    It can be checked that
    \[
        (e_0f)(p,q)=\frac1{p+q} \int_0^{p+q} f(p+q-x,x)\dd x
    \]
    is the projection on $\Hbose^1 \subset \H_{1,1}$. Then the action of $eS_\f$ on 
    a $f \in \Hbose^1\otimes\Hbose^1$ can be calculated to be  $\f'(P\otimes 1, 1\otimes P)$
    with 
    \[
        \f'(p,q) =\frac1 {p\cdot q} \int_0^p\int_0^q\f( (p-x)\cdot(q-y) )\f( x\cdot y)\check\f( (p-x)\cdot y )\check\f( x\cdot (q-y)) \dd y\dd x
    \]
    and it is easy to check that with $\tilde\f(p):=\f'(p,1)$ it holds $\f'(p,q)=\tilde\f(p\cdot q)$
    for all $p,q>0$. That $|\tilde \f(p)|\leq 1$ can be  checked directly or follows from the fact that
    $S_\f$ is unitary.
\end{proof}
\begin{remark}
It is a general feature of asymptotically complete Borchers triples with
asymptotic algebra $\u1net\otimes\u1net$ that the restriction of the
scattering matrix $S$ to $eSe$ is
a functional calculus of $P\otimes P$.
Indeed, both $e$ and $S$ commute with the translation $T$,
but $T$ is maximally abelian when restricted to $\Hbose^1\otimes\Hbose^1$,
hence there is a function $\f_S$
such that $eSe = \f_S(P\otimes \1, \1\otimes P)$. Furthermore,
both $e$ and $S$ commute with boosts, so does $\f_S$ and
one obtains the form $eSe = \f_S'(P\otimes P)$.
\end{remark}

We note that the proof above shows that $|\tilde\f(M^2/2)|$ is the probability 
that an improper state in $\H^1_\u1net\otimes \H^1_\u1net$ 
with mass $M^2$ is scattered elastically in 
the sense of Fock space particles, where
\[
    \tilde\f(p)=\frac1p\int_0^p\int_0^1\f( (p-x)(1-y) )\f( xy)\check\f( (p-x)y )\check\f( x(1-y)) \dd y\dd x
    \,.
\]

As we discussed in \ref{u1-current}, the Hilbert space of the
$\uone$-current net, and hence the tensor product of two copies of it,
admit the bosonic Fock space structure,
hence we can consider the particle number. Although we admit that
this concept does not have an intrinsic meaning, we claim that
it is possible to interpret this as the number of massless particles.

An evidence comes from the comparison with massive cases.
In \cite{Lechner08} Lechner has constructed a family of massive interacting
models parametrized by so-called scattering functions, and later he
reinterpreted them as deformations of the massive free field \cite{Lechner11}.
If one applies the same deformation procedure to the derivative of
the massless free field whose net is $\u1net\otimes\u1net$
(with scattering functions satisfying $S_2(0) = 1$),
he obtains the Borchers triples with $\u1net\otimes\u1net$ as the
asymptotic net constructed in \cite{Tanimoto11-3} \footnote{Private communication
with Gandalf Lechner and Jan Schlemmer. This will be presented elsewhere.}.
Hence the models in \cite{Tanimoto11-3} should be considered as the massless
versions of the models in \cite{Lechner08}. Likewise, it can be said that
the models constructed in the present paper are the deformed (in an appropriate sense)
version of the massless free field.

In massive case, there is a mass gap in the spectrum of the spacetime translation
and the one-particle space of the Fock space has an intrinsic meaning. In massless case,
such an intrinsic interpretation is lost but there is still the Fock space structure.
Thus we think that, if the two-particle space in the Fock structure is not preserved by
the S-matrix, as in the case where $\f$ is not exponential (see Proposition \ref{pr:production}),
then it represents massless particle production.

\section{Conclusion and outlook}\label{conclusion}
In this paper we have constructed a new family of Longo-Witten
endomorphisms on $\u1net$ through the inclusion
$\u1net = \FerC^\uone \subset \FerC$. We combined them to
construct interacting
wedge-local nets with $\u1net\otimes\u1net$ as the asymptotic algebra and 
showed that their S-matrices
do not preserve the $n$-particle space of the bosonic Fock space.
Particle production is a necessary feature of interacting models
in higher dimensions \cite{Aks65}, thus this result opens up
some hope for algebraic construction of higher dimensional
interacting models.

However, there are at least two shortcomings with the present
method. The first is that we proved only wedge-locality of the models.
As already shown in \cite{Tanimoto11-3}, a wedge-local net can
be dilation-covariant and at the same time interacting. On the other hand,
a {\em strictly local} dilation-covariant (asymptotically complete) net
is necessarily not interacting \cite{Tanimoto11-2}. Hence, interaction
of wedge-local nets could be just a false-positive and strict locality
is desired. The second is the fact that the concept of particle
in massless case is not intrinsically defined. Although the Fock space
structure is easily understood, its interpretations should be treated
with care.

These issues could be overcome by considering massive cases.
As for strict locality, it has been shown that the deformation of
the massive free field by a suitably regular function is
again strictly local \cite{Lechner08, Lechner11}. On the other hand,
in massless situation, even the simplest case $\f(p) = -1$ (where
$\f$ is an inner symmetric function used in \cite{Tanimoto11-3}
to deform {\em directly} $\u1net\otimes\u1net$) is already
not strictly local \cite{Tanimoto11-3}. Hence we believe that strict
locality should be addressed in massive models. Furthermore,
for a massive asymptotically complete model, the notion of
particle production is intrinsic.
Fortunately, it is known that the construction in \cite{Tanimoto11-3}
coincides with the deformation of the massive free field
as we remarked in the last section,
hence a further correspondence between massive and massless
cases are expected.
We hope to investigate this problem in a future publication.

Of course, interacting models in higher dimensions are always
one of the most important issues. Although conformal nets themselves
are not interacting \cite{BF77}, some new constructions based
on CFT could be possible and ideas from the present article
could be useful.

\subsubsection*{Acknowledgment.}
We thank our supervisor Roberto Longo for his constant support and
useful suggestions.
Y.\! T.\! thanks Gandalf Lechner and Jan Schlemmer for discussions
on the relation between the present construction and the deformation
of \cite{Lechner11}.

\def\cprime{$'$}

\end{document}